\newcommand{\ex}[1]{\mathbb{E}\left[ #1 \right] }
\newcommand{\norm}[1]{\left\lVert #1 \right\rVert}
\newcommand{\Q}{ {\mathbf Q}}
\newcommand{\Qc}{ {\mathbf Q}_{\perp}}
\newcommand{\Qp}{ {\mathbf Q}_{\parallel}}
\newcommand{\A}{ {\mathbf A}}
\newcommand{\s}{ {\mathbf S}}
\newcommand{\UU}{ {\mathbf U}}
\newcommand{\inner}[2]{\langle #1, #2 \rangle}
\newcommand{\lep}[1]{\mathop  \le \limits^{(#1)}}
\newcommand{\gep}[1]{\mathop  \ge \limits^{(#1)}}
\newcommand{\gp}[1]{\mathop  > \limits^{(#1)}}
\newcommand{\ep}[1]{\mathop  = \limits^{(#1)}}
\newtheorem{definition}{Definition}
\newtheorem{lemma}{Lemma}
\newtheorem{claim}{Claim}
\newtheorem{proposition}{Proposition}
\newtheorem{theorem}{Theorem}
\newtheorem{remark}{Remark}
\begin{document}

%\setcopyright{cagovmixed}

% \doi{10.475/123_4}

% % ISBN
% \isbn{123-4567-24-567/08/06}

% %Conference
% \conferenceinfo{PLDI '13}{June 16--19, 2013, Seattle, WA, USA}

% \acmPrice{\$15.00}

%
% --- Author Metadata here ---
%\conferenceinfo{WOODSTOCK}{'97 El Paso, Texas USA}
%\CopyrightYear{2007} % Allows default copyright year (20XX) to be over-ridden - IF NEED BE.
%\crdata{0-12345-67-8/90/01}  % Allows default copyright data (0-89791-88-6/97/05) to be over-ridden - IF NEED BE.
% --- End of Author Metadata ---

\title{Flexible Load Balancing with Multi-dimensional State-space Collapse: Throughput and Heavy-traffic Delay Optimality}
\author{Xingyu Zhou\\Department of ECE\\The Ohio State University\\zhou.2055@osu.edu \and Jian Tan\\Department of ECE\\The Ohio State University\\tan.252@osu.edu\\
\and Ness Shroff\\Department of ECE and CSE\\The Ohio State University\\
shroff.11@osu.edu } 
\date{}
\maketitle

\begin{abstract}

Heavy traffic analysis for load balancing policies has relied heavily on the condition of state-space collapse onto a single-dimensional line in previous works. 
In this paper, via Lyapunov-drift analysis, we rigorously prove that even under a multi-dimensional state-space collapse, steady-state heavy-traffic delay optimality can still be achieved for a general load balancing system.
This result directly implies that achieving steady-state heavy-traffic delay optimality simply requires that no server is idling while others are busy at heavy loads, thus complementing and extending the result obtained by diffusion approximations.
Further, we explore the greater flexibility provided by allowing a multi-dimensional state-space collapse in designing new load balancing policies that are both throughput optimal and heavy-traffic delay optimal in steady state.
This is achieved by overcoming various technical challenges, and the methods used in this paper could be of independent interest. 

\end{abstract}

% \keywords{Heavy-traffic delay optimality; Throughput optimality; Flexible load balancing; Multi-dimensional state-space collapse}

% \category{G.3}{Probability and Statistics}{Stochastic processes,
% Queueing theory}

% !TEX root = ./performance18.tex
\section{Introduction}

We consider a discrete-time load balancing system which consists of one dispatcher and $N$ servers, each associated with an infinite buffer queue. The service rate of server $n$ is $\mu_n$. At each time-slot $t$, the exogenous tasks arrive with rate $\lambda_{\Sigma}$, and upon arrival each task is immediately dispatched to one of the queues. A load balancing policy is a rule that selects the queue to which a new arrival in each time-slot should be dispatched. 
In recent years the development of efficient load balancing policies has received significant attention because of their applicability in distributed architectures such as Web service~\cite{gupta2007analysis}, large data storage systems (e.g., HBase~\cite{george2011hbase}), cloud computing systems~\cite{foster2008cloud}, etc.
% In order to improve the overall system performance, the question of how to design efficient load balancing policies has attracted strong interest in the last several years, motivated by the significant challenges of task assignments in distributed architectures such as Web service~\cite{gupta2007analysis}, large data storage systems (e.g., HBase~\cite{george2011hbase}), cloud computing systems~\cite{foster2008cloud}, etc. 
A desirable load balancing policy is often one that is able to improve the average response time while achieving high utilization of resources. 
To this end, many works in the literature have focused on minimizing the average delay in the heavy-traffic regime where the exogenous arrival rate approaches the boundary of the capacity region, i.e., the heavy-traffic parameter $\epsilon = \sum \mu_n - \lambda_{\Sigma}$ approaches zero in our system.

At the heart of most heavy-traffic analysis is the notion of \textit{state-space collapse}, which roughly means that the original multi-dimensional system space concentrates around a single dimensional (or generally a lower dimensional) subspace as the heavy-traffic parameter $\epsilon$ goes to zero. For instance, via either diffusion approximations~\cite{foschini1978basic} or the recently developed drift-based framework~\cite{eryilmaz2012asymptotically}, it has been shown that under the so-called join-shortest-queue (JSQ) policy, the load balancing system in heavy-traffic would collapse to a one-dimensional line where all the queue lengths are equal. This indicates that the system behaves as if there is only a single queue with all the servers pooled together as an aggregated server, which is often called $\textit{complete resource pooling}$. This result directly implies that JSQ is asymptotically optimal, i.e., heavy-traffic delay optimal, since the response time in the pooled single-server system is stochastically less than that of a typical load balancing system. The same one-dimensional state-space collapse is also the key in establishing heavy-traffic delay optimality for the so-called power-of-$d$ policy~\cite{chen2012asymptotic},~\cite{maguluri2014heavy}, where the dispatcher routes the new arrival to a server with the shortest queue length among $d \ge 2$ servers selected uniformly at random. Instead of requiring the information of all the queue lengths as in JSQ, the power-of-$d$ policy is able to achieve asymptotic optimality with partial queue length information, and hence provides greater flexibility and scalability for large-scale distributed systems. 

% As illustrated by the above, one could naturally think that in order to be heavy-traffic delay optimal, a load balancing policy should always guarantee that all the queue lengths are equal in the heavy-traffic limit, i.e., one-dimensional state-space collapse. 
The authors in \cite{kelly1993dynamic} argue heuristically that state-space collapse to a line where all the queue lengths are equal may not be necessary for showing heavy-traffic delay optimality in load balancing systems. In particular, they proposed a symmetric threshold policy for a load balancing system with \emph{two} homogeneous servers, and conjectured that as long as the threshold satisfies a certain property, the total work process under the threshold policy has the same diffusion limit as that under the optimal JSQ. Nevertheless, the system state in heavy-traffic limit under the threshold policy is now in the two-dimensional positive orthant rather than a single-dimensional line where all the queue lengths are equal. Hence, the authors in \cite{kelly1993dynamic} argued that the key feature of a heavy-traffic optimal policy is to keep all the servers busy when there is substantial work rather than the strong property of maintaining all the queue lengths equal. This argument is validated in a two-server system with an asymmetric threshold policy proposed in \cite{teh2002critical}, under which the total work process is proven to have the same diffusion limit as that of JSQ while the state space collapses to a \textit{line} where the lengths of two queues are not equal. 
Note that besides only considering a two-server system, a further limitation  in both \cite{kelly1993dynamic} and \cite{teh2002critical} is that the asymptotic optimality holds only for a finite time interval. This is because the interchange of limits has not been established for the diffusion approximations. 
% It is worth pointing out that besides the special two-server case under a particular policy that requires the knowledge of arrival rate, the asymptotic optimality in both \cite{kelly1993dynamic} and \cite{teh2002critical} only holds for a finite time interval as a result of not establishing the interchange of limits for the diffusion approximations. 
Therefore, an interesting open problem is whether or not \textit{steady-state} delay optimality in heavy-traffic holds under a \emph{multi-dimensional} state-space collapse for a \emph{general} load balancing system, and if so, how one can design a load balancing policy to achieve it.

In this paper, we take a systematic approach to addressing this problem. First, we extend the recently developed drift-based framework in \cite{eryilmaz2012asymptotically} to rigorously show that even under a multi-dimensional state-space collapse, a load balancing policy is still able to achieve heavy-traffic delay optimality in \textit{steady-state}. This result then allows us to explore the \textit{flexibility} in designing load balancing policies that are not only throughput optimal but also heavy-traffic delay optimal in steady-state. 
The main contributions of this paper can be summarized as follows:
\begin{itemize}
\item We rigorously establish heavy-traffic delay optimality in steady-state under a multi-dimensional state-space collapse for a general load balancing system. More precisely, we consider a symmetric finitely generated cone $\mathcal{K}_{\alpha}$ parameterized by a nonnegative $\alpha \in [0,1]$. In particular, when $\alpha = 1$ the cone reduces to the line where all the components are equal, and when $\alpha = 0$ the cone is the nonnegative orthant. Our first main result (cf. Theorem \ref{thm:theorem_1}) states that given a throughput optimal load balancing policy, if the system state collapses to a cone $\mathcal{K}_{\alpha}$ with any fixed $\alpha \in (0,1]$, this policy is heavy-traffic delay optimal in steady-state. The importance of this result is two-fold: (i) it rigorously proves a conjectured insight behind steady-state heavy-traffic delay optimality in load balancing systems. In particular, it shows that to achieve the heavy-traffic optimality in steady-state for a general system,  a load balancing policy should also just be able to keep all the servers busy when there is substantial work, rather than the strong requirement of maintaining all the queue lengths equal. This complements and extends the diffusion approximation results in \cite{kelly1993dynamic},~\cite{teh2002critical}. (ii) it enables us to establish heavy-traffic delay optimality under general state-space collapse regions (including even non-convex regions). This can be achieved by showing that the actual state-space collapse region can be covered by a cone $\mathcal{K}_{\alpha}$ with some $\alpha \in (0,1]$, which directly implies that the system state also collapses to the cone $\mathcal{K}_{\alpha}$, and hence heavy-traffic delay optimality follows from Theorem \ref{thm:theorem_1}.

\item By exploiting the key implications of the first result, we are then able to characterize the degree of flexibility (from two different dimensions) in designing new load balancing policies that are both throughput-optimal and heavy traffic delay-optimal in steady-state. (i) The first dimension of flexibility is concerned with the frequency of favoring shorter queues. We find that instead of favoring shorter queues at each time-slot for every system-state, it is sufficient to favor shorter queues only when the system-state is outside a cone $\mathcal{K}_{\alpha}$ for any fixed $\alpha \in (0,1]$. This means that whenever the system-state is within the cone $\mathcal{K}_{\alpha}$, the dispatcher is allowed to use an arbitrary Markovian dispatching distribution, and this flexibility increases as $\alpha$ approaches zero. 
% Note that the flexibility in the first dimension is extremely useful when there exists data-locality problem in a load balancing system under which it is often impossible to favor shorter queues for all system-states.
(ii) The second dimension is related to the intensity with which shorter queues are favored. We find that instead of only joining the shortest queue as in the JSQ policy or having monotone decreasing probabilities from joining the shortest queue to the longest queue in the power-of-$d$ policy, an even weaker intensity of favoring shorter queues is sufficient and this intensity can be characterized by some parameter $\delta$. 
The above flexibilities from two different dimensions are stitched together in Theorem \ref{thm:theorem_2}. We also consider the case where these two flexibilities scale with the heavy-traffic parameter $\epsilon$, i.e., both $\alpha$ and $\delta$ decrease to zero as $\epsilon$ approaches zero. We show that steady-state heavy-traffic delay optimality is preserved as long as $\alpha \delta = \Omega(\epsilon^{\beta})$ for any $\beta \in [0,1)$ (cf. Proposition \ref{prop:prop_3}). This result offers us even more flexibility in designing efficient load balancing policies.
\item {The techniques used in this paper are of independent interest. For example, in order to establish throughput optimality defined in this paper, namely positive recurrence with bounded moments in steady-state, the standard stochastic drift analysis of a suitable Lyapunov function is very difficult in our case because the drift within the cone $\mathcal{K}_{\alpha}$ can be positive. On the other hand, the standard fluid limit methods can only be used to show positive recurrence but not bounded moments of the queue lengths. To address the problem, we combine fluid approximations with stochastic Lyapunov theory. In particular, we show that the Lyapunov function used in the fluid model, i.e., the sum of the queue lengths, is also a suitable Lyapunov function for the original stochastic system. This connection allows us to carry out the drift analysis in the fluid domain. Then, we come back to the stochastic system and apply the drift-based analysis of the same Lyapunov function in the original system by leveraging the result in the fluid domain to show both positive recurrence and bounded moments. Also, for the result of state-space collapse to a cone, the standard analysis adopted in the single-dimensional state-space collapse fails as well in this case. This is because in contrast to the projection onto a line, the projection onto a convex cone is more complex. 
In fact, a closed-form formula of the projection onto a polyhedral cone is still an open problem. To circumvent this difficulty, instead of obtaining the exact projection, we find that it is sufficient to establish a monotone property of the projection to show that the system state collapses to a cone. Moreover, the bounds on the moments in the state-space collapse result hold even when the system is not in heavy-traffic, and hence can be independently used as a performance evaluation tool for the pre-limit load balancing systems.}

\end{itemize}

\subsection{Related Work}
% A very natural choice of load balancing policies especially for small-scale systems is the Join-Shortest-Queue (JSQ), under which the dispatcher continuously samples all the servers upon new arrivals and dispatch them to the least loaded server in the system. It has been shown that JSQ is delay optimal in the sense of stochastic order for different system settings[][]. Nevertheless, the performance of this policy comes at the cost of substantial message overhead as it has to constantly know the queue lengths of all the servers, which is not feasible in large-scale systems. As a result, an alternative load balancing policy with low message overhead called power-of-$d$ is proposed []. Under this policy, the dispatcher probes $d$ servers uniformly at random and dispatches new arrivals to the server with the shortest queue among the $d$ servers. 

The use of state-space collapse to study the delay performance in the heavy-traffic regime was introduced in~\cite{foschini1978basic} for two parallel servers. The authors, via diffusion approximations, showed that the two separate servers under the JSQ policy act as a pooled resource in the heavy-traffic limit. Since then, the methodology of diffusion limits combined with state-space collapse has been adopted in a number of papers on parallel servers~\cite{reiman1984some}~\cite{bell2001dynamic}~\cite{chen2012asymptotic}~\cite{bramson1998state}~\cite{harrison1998heavy}. For example, the author in~\cite{reiman1984some} generalized the results in~\cite{foschini1978basic} to the case of renewal arrivals and general service times. In~\cite{chen2012asymptotic}, power-of-$d$ was shown to have the same diffusion limit as JSQ in the heavy-traffic limit. The common step in all these works is to show that the diffusion limit in heavy-traffic converges to a one-dimensional Brownian motion, which implies sample-path optimality in finite time. However, in order to show optimality in steady-state, an interchange of limit argument needs to be proven, which is often difficult and not undertaken in the aforementioned works.
Recently, the authors in~\cite{eryilmaz2012asymptotically} proposed a drift-based framework, which is able to directly work on the stationary distribution and establish steady-state heavy-traffic optimality of load balancing policies such as JSQ, and scheduling policies such as MaxWeight. This framework has been adopted to show steady-state heavy-traffic optimality of several policies in different scenarios. For instance, based on this framework, the authors in~\cite{maguluri2014heavy} established the steady-state heavy-traffic optimality of power-of-$d$ policy. {{The authors in~\cite{zhou2017designing} identified a class of heavy-traffic delay optimal policies.}} Moreover, it has been shown in~\cite{wang2016maptask} that a joint JSQ and MaxWeight policy is heavy-traffic delay optimal for MapReduce clusters under a specific traffic scenario. For all traffic scenarios, a heavy-traffic delay optimal policy called `local-task-first' policy was proposed in \cite{xie2015priority} based on this new framework.

However, it is worth noting that the state-space collapse of all the aforementioned heavy-traffic optimal load balancing policies is only one-dimensional. A two-dimensional state-space collapse was considered in \cite{kelly1993dynamic}, in which the authors argued heuristically that heavy-traffic delay optimality is preserved in this case, and hence claimed that the key feature of a heavy-traffic optimal load balancing policy is to keep all the servers busy when there is substantial remaining work, rather than the strong property of maintaining all the queue lengths equal. The authors in \cite{teh2002critical} validated this claim for a two-server system under an asymmetric threshold policy. In particular, they showed that the diffusion limit of the work process is the same as that under JSQ, while the state-space collapses to a \textit{line} where the queue lengths of two servers are not equal. However, both the results in \cite{kelly1993dynamic} and \cite{teh2002critical} hold only for a finite time since the validity of the interchange of limits argument was not established for the diffusion approximations. 
Motivated by this, in this paper, we extend the recently developed drift-based framework \cite{eryilmaz2012asymptotically} and successfully establish \textit{steady-state} heavy-traffic optimality under a multi-dimensional state-space collapse for a general load balancing system, hence complementing and extending the diffusion approximation results in \cite{kelly1993dynamic} and \cite{teh2002critical}. 

{We would also like to remark that besides load balancing (or scheduling) in parallel servers, state-space collapse result also plays a key role in other heavy-traffic scenarios. For example, given an $n \times n$ input queued switch, it was shown that under the complete resource pooling (CRP) condition (equivalently one-dimensional state-space collapse),  the MaxWeight scheduling algorithm is heavy-traffic delay optimal in the sense of diffusion limit~\cite{stolyar2004maxweight}. When the CRP condition is not met, the state-space would then collapse to a multi-dimensional space instead of a line. In this case, a diffusion limit has been established in~\cite{kang2012diffusion}. For the steady-state behavior in this case, via the drift-based framework, it was showed that the MaxWeight scheduling policy can guarantee optimal delay scaling with respect to $n$~\cite{maguluri2016heavy}~\cite{maguluri2018optimal}.
 % has been recently investigated in scheduling problems~\cite{maguluri2016heavy},~\cite{wangIFIP}. For example, the authors in~\cite{maguluri2016heavy} showed that for an $n \times n$ switch with a multi-dimensional state-space collapse, the MaxWeight scheduling policy can guarantee optimal delay scaling with respect to $n$. 
However, in contrast to the case of the single-dimensional state-space collapse in~\cite{stolyar2004maxweight},~\cite{eryilmaz2012asymptotically}, heavy-traffic delay optimality is not established in~\cite{maguluri2016heavy}~\cite{maguluri2018optimal}. Recently, multi-dimensional state-space collapse was also used to show delay insensitivity of the proportionally fair policy in a bandwidth sharing network in heavy-traffic~\cite{wang2018heavy}.}
We finally remark that the heavy-traffic regime considered in this paper and all the aforementioned papers is the conventional heavy-traffic regime, which is different from the Halfin-Whitt heavy-traffic regime (also known as many-server heavy-traffic regime or quality-and-efficiency-driven regime). In this regime, the heavy-traffic parameter $\epsilon$ approaches zero and the number of servers $N$ goes to infinity at the same time~\cite{armony2005dynamic},~\cite{gurvich2009queue},~\cite{dai2011state}.

\subsection{Notations}
%Given a stochastic process $\{X(t), t\ge 0\}$ with a steady-state distribution, let $\overline{X}$ denote a random variable whose probability distribution is the same as the steady-state distribution of a given process $\{X(t), t\ge 0\}$. 

The dot product in $\mathbb{R}^N$ is denoted by $\inner{\mathbf{x} }{\mathbf{y} } \triangleq \sum_{n=1}^N x_ny_n$. For any $\mathbf{x} \in \mathbb{R}^N$, the $l_1$ norm is denoted by $\norm{\mathbf{x}}_1 \triangleq \sum_{n=1}^N |x_n|$ and $l_2$ norm is denoted by $\norm{\mathbf{x}} \triangleq \sqrt{\inner{\mathbf{x}}{\mathbf{x}}}$. In general, the $l_r$ norm is denoted by $\norm{\mathbf{x}}_r \triangleq (\sum_{n=1}^N |x_n|^r)^{1/r}$. Let $\mathcal{N}$ denote the set $\{1,2,\ldots,N\}$.
% Let $\mathbf{1}_N \triangleq \frac{1}{\sqrt{N}}(1,1,\ldots,1)$. 
% Then the parallel and perpendicular component of any vector $\mathbf{x}$ in $\mathbb{R}^N$ with respect to the vector $\mathbf{1}_N$ is denoted by $\mathbf{x}_\parallel \triangleq \inner{\mathbf{1}_N}{\mathbf{x}}\mathbf{1}_N$ and $\mathbf{x}_\perp \triangleq \mathbf{x} - \mathbf{x}_\parallel$, respectively.

% !TEX root = ./performance18.tex
\section{System Model and Preliminaries}
This section first precisely describes the system model, and then presents several necessary preliminaries.
\subsection{System Model}
We consider a discrete-time load balancing system as follows. There is a central dispatcher and $N$ servers indexed by $n$, each of which maintains a FIFO (first-in, first-out) infinite buffer size queue denoted by $Q_n$. In each time-slot, the central dispatcher routes the new task arrivals to one of the servers as in \cite{eryilmaz2012asymptotically,maguluri2014heavy,wang2016maptask,xie2015priority,xie2016scheduling,zhou2017designing}. Once a task joins a queue, it will remain in that queue until its service is completed.

\subsubsection{Arrival and Service} Let $A_{\Sigma}(t)$ denote the number of exogenous tasks that arrive at the beginning of time-slot $t$. We assume that $A_{\Sigma}(t)$ is an integer-valued random variable, which is \emph{i.i.d.} across time-slots. The mean and variance of $A_{\Sigma}(t)$ are denoted by $\lambda_{\Sigma}$ and $\sigma_{\Sigma}^2$, respectively. We further assume that there is a positive probability for $A_{\Sigma}(t)$ to be zero and the arrival process has a finite support, i.e., $A_{\Sigma}(t) \le A_{\max} < \infty$ for all $t$. 
Let $S_n(t)$ denote the amount of service that server $n$ offers for queue $n$ in time-slot $t$. Note that this is not necessarily equal to the number of tasks that leaves the queue because the queue may be empty. We assume that $S_n(t)$ is an integer-valued random variable, which is \emph{i.i.d.} across time-slots. We also assume that $S_n(t)$ is independent across different servers as well as the arrival process. As before, $S_n(t)$ is also assumed to have a finite support, i.e., $S_n(t) \le S_{\max} <\infty$ for all $t$ and $n$. The mean and variance of $S_n(t)$ are denoted as $\mu_n$ and $\nu_n^2$, respectively. Let $\mu_{\Sigma} \triangleq \Sigma_{n=1}^N \mu_n$ and $\nu_{\Sigma}^2 \triangleq \Sigma_{n=1}^N \nu_n^2$ denote the mean and variance of the hypothetical total service process $S_{\Sigma}(t) \triangleq \sum_{n=1}^N S_n(t)$.

\subsubsection{Queue Dynamics} Let $Q_n(t)$ be the queue length of server $n$ at the beginning of time slot $t$. 
Let $A_n(t)$ denote the number of tasks routed to queue $n$ at the beginning of time-slot $t$ according to the dispatching decision. 
Then the evolution of the length of queue $n$ is given by 
\begin{equation}
	\label{eq:Qdynamic}
	Q_n(t+1) = Q_n(t) + A_n(t) - S_n(t) + U_n(t), n = 1,2,\ldots, N,
\end{equation}
where $U_n(t) = \max\{S_n(t)-Q_n(t)-A_n(t),0\}$ is the unused service due to an empty queue.

\subsection{Preliminaries}
In this paper, we assume that the dispatching decision in each time-slot can at most depend on $\Q(t)$. Thus, with the system model above, the queue length process $\{\Q(t), t\ge 0\}$ forms a Markov chain. We consider a set of load balancing systems $\{\Q^{(\epsilon)}(t), t\ge 0\}$ parameterized by $\epsilon$ such that the mean arrival rate of the exogenous arrival process $\{A_{\Sigma}^{(\epsilon)}(t), t\ge 0\}$ is $\lambda_{\Sigma}^{(\epsilon)} = \mu_\Sigma - \epsilon$. Note that the heavy-traffic parameter $\epsilon$ characterizes the distance between the arrival rate and the capacity region boundary. 

We say that a load balancing system is stable if the Markov chain $\{\Q(t), t\ge 0\}$ is positive recurrent, and then use $\overline{\Q}$ to denote the random vector whose distribution is the same as the steady-state distribution of $\{\Q(t), t\ge 0\}$. Now, we are ready to present the definitions of throughput optimality and steady-state heavy-traffic delay optimality, respectively.

{
\begin{definition}[Throughput Optimal]
	A load balancing policy is said to be throughput optimal if for any arrival rate within the capacity region, i.e., for any $\epsilon > 0$, it can stabilize the system and all the moments of $\big\lVert{\overline{\Q}^{(\epsilon)}}\big\rVert$ are finite.
\end{definition}

Note that this is a stronger definition of throughput optimality than that in~\cite{wang2016maptask}~\cite{xie2015priority}~\cite{zhou2017designing}, because besides the positive recurrence, it also requires all the moments to be finite in steady state for any arrival rate within capacity region.}

% Note that the definition of throughput optimality in this paper is stronger than previous ones. This is because in addition to positive recurrence, it also requires that all the moments are finite in steady-state.

In the heavy-traffic analysis, one is interested in the behavior of the queue lengths as $\epsilon$ approaches zero. In order to present and understand the definition of steady-state heavy-traffic delay optimality, we will first recall the fundamental lower bound on the expected sum queue lengths under any throughput optimal policy \cite{eryilmaz2012asymptotically}.

\begin{lemma}
\label{lem:lower_bound}
    Given any throughput optimal policy and assuming that $(\sigma_{\Sigma}^{(\epsilon)})^2$ converges to a constant $\sigma_{\Sigma}^2$ as $\epsilon$ decreases to zero, then 
	% Let $\overline{\Q}^{(\epsilon)}$ be a random vector which is equal in distribution to the queue length $\Q(t)$ in the steady state under any feasible load balancing scheme. 
	\begin{equation}
	\label{eq:lower_bound}
		\liminf_{\epsilon \downarrow 0} \epsilon \ex{\sum_{n=1}^N \overline{Q}_n^{(\epsilon)} } \ge \frac{\zeta}{2},
	\end{equation}
	where $\zeta \triangleq \sigma_{\Sigma}^2 + \nu_{\Sigma}^2$.
\end{lemma}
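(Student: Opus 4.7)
The plan is to carry out a Lyapunov drift analysis on the scalar total queue length $Y(t) := \sum_{n=1}^{N} Q_n(t)$ in steady state. Summing the per-queue recursion~\eqref{eq:Qdynamic} over $n$ eliminates the dispatching decision and gives the one-dimensional update $Y(t+1) = Y(t) + A_\Sigma(t) - S_\Sigma(t) + U_\Sigma(t)$, where $U_\Sigma(t) := \sum_{n} U_n(t) \ge 0$ is the aggregate unused service in slot $t$. Throughput optimality implies that $\overline{\Q}^{(\epsilon)}$ has finite moments for every $\epsilon > 0$, which is enough to set the stationary drift of $Y^2$ to zero: $\ex{\overline{Y}^2} = \ex{(\overline{Y} + X + U_\Sigma)^2}$, where $X := A_\Sigma - S_\Sigma$.

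Expanding this equality and using that $X(t)$ is independent of $Y(t)$ (the arrival and service streams at time $t$ are i.i.d.\ and independent of the past), together with $\ex{X} = -\epsilon$, collapses the drift equation to $2\epsilon\,\ex{\overline{Y}^{(\epsilon)}} = \ex{X^2} + 2\,\ex{\overline{Y}\,U_\Sigma} + 2\,\ex{X\,U_\Sigma} + \ex{U_\Sigma^2}$, with $\ex{X^2} = \zeta^{(\epsilon)} + \epsilon^2$. To dispose of the three cross terms I would exploit the per-coordinate complementary-slackness identity $Q_n(t+1)\,U_n(t) = 0$, which holds because any strictly positive $U_n(t)$ drains queue $n$ to zero by the end of the slot. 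Writing $Y(t+1)U_\Sigma(t) = \sum_{m,n} Q_m(t+1)\,U_n(t)$ and dropping the zero diagonal yields $\ex{Y(t+1)U_\Sigma(t)} \ge 0$; on the other hand, substituting the $Y$-recursion into the same expectation gives $\ex{Y(t+1)U_\Sigma(t)} = \ex{\overline{Y}\,U_\Sigma} + \ex{X\,U_\Sigma} + \ex{U_\Sigma^2}$. Combining the two identities produces the clean form $2\epsilon\,\ex{\overline{Y}^{(\epsilon)}} = \zeta^{(\epsilon)} + \epsilon^2 + 2\,\ex{Y(t+1)U_\Sigma(t)} - \ex{U_\Sigma^2}$.

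The last step is to control the $U_\Sigma$-contribution. Flow conservation in steady state gives $\ex{U_\Sigma} = \mu_\Sigma - \lambda_\Sigma^{(\epsilon)} = \epsilon$, and the uniform bound $U_\Sigma \le S_\Sigma \le N S_{\max}$ yields $\ex{U_\Sigma^2} \le N S_{\max}\,\epsilon \to 0$, while $\ex{Y(t+1)U_\Sigma(t)} \ge 0$. Taking $\liminf$ in the identity and using the assumed convergence $\zeta^{(\epsilon)} \to \zeta$ then yields $\liminf_{\epsilon\downarrow 0} 2\epsilon\,\ex{\overline{Y}^{(\epsilon)}} \ge \zeta$, which is exactly \eqref{eq:lower_bound}. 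I expect the principal obstacle to be the cross term $\ex{X\,U_\Sigma}$, whose sign is unclear because $U_\Sigma(t)$ depends jointly on $(\Q(t),A_\Sigma(t),S_\Sigma(t))$ and so is correlated with $X(t)$; the per-coordinate boundary identity $Q_n(t+1)U_n(t) = 0$ is precisely the structural fact that lets us bypass an explicit evaluation of this correlation by folding the term into a manifestly non-negative quantity.
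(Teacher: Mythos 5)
Your proof is correct and follows essentially the same route the paper takes (the paper cites Lemma~\ref{lem:lower_bound} from \cite{eryilmaz2012asymptotically}, but its own proof of Lemma~\ref{claim_1} in Appendix~A carries out exactly this computation): zero the stationary drift of $\norm{\Q}_1^2$, use the complementary-slackness identity $Q_n(t+1)U_n(t)=0$ to rewrite the unused-service cross terms as $2\ex{\norm{\overline{\Q}(t+1)}_1\norm{\overline{\UU}(t)}_1}-\ex{\norm{\overline{\UU}}_1^2}$, note the first of these is nonnegative and the second is $O(\epsilon)$ via $\ex{\norm{\overline{\UU}}_1}=\epsilon$ and $U_n\le S_{\max}$, and take $\liminf$. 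Your intermediate substitution of the $Y$-recursion into $\ex{Y(t+1)U_\Sigma(t)}$ is a mild algebraic detour, but it produces the same identity the paper obtains directly by regrouping the square.
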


The right-hand-side of Eq. \eqref{eq:lower_bound} is the heavy-traffic limit of a hypothetic single-server system with arrival process $A_\Sigma^{(\epsilon)}(t)$ and service process $\sum_n^N S_n(t)$ for all $t\ge0$. This hypothetical single-server queueing system is often called the \textit{resource-pooled system}. Since a task cannot be moved from one queue to another in the load balancing system, it is easy to see that the expected sum queue lengths of the load balancing system is larger than the expected queue length in the resource-pooled system. However, under a certain load balancing policy, the lower bound in Eq. \eqref{eq:lower_bound} can actually be attained in the heavy-traffic limit and hence based on Little's law this policy achieves the minimum average delay of the system in steady-state. This directly motivates the following definition of steady-state heavy-traffic delay optimality as in \cite{eryilmaz2012asymptotically,maguluri2014heavy,wang2016maptask,xie2015priority,xie2016scheduling,zhou2017designing}.

\begin{definition}[Heavy-traffic Delay Optimality in Steady-state]
	A load balancing scheme is said to be heavy-traffic delay optimal in steady-state if the steady-state queue length vector $\overline{\Q}^{(\epsilon)}$ satisfies 
	\begin{equation*}
		\limsup_{\epsilon \downarrow 0} \epsilon \ex{\sum_{n=1}^N \overline{Q}_n^{(\epsilon)} } \le \frac{\zeta}{2},
	\end{equation*}
	where $\zeta$ is defined in Lemma \ref{lem:lower_bound}.
\end{definition}

Before we end this section, we will introduce an $N$-dimensional cone, which will be very useful in our upcoming analysis. In particular, the cone $\mathcal{K}_{\alpha}$ is finitely generated by a set of $N$ vectors $\{\mathbf{b}^{(n)}, n \in \mathcal{N}\}$, i.e., 
\begin{equation}
\label{eq:def_cone}
	\mathcal{K}_{\alpha} = \left\{ \mathbf{x} \in \mathbb{R}^N: \mathbf{x}  = \sum_{n \in \mathcal{N}} w_n\mathbf{b}^{(n)}, w_n \ge 0 \text{ for all } n \in \mathcal{N} \right\},
\end{equation}
where $\mathbf{b}^{(n)}$ is an $N$-dimensional vector with the $n$th component being $1$ and  $\alpha$ everywhere else for some $\alpha \in [0,1]$. It follows that, if $\alpha = 0$, the cone $\mathcal{K}_{\alpha}$ is the non-negative orthant of $\mathbb{R}^N$, and if $\alpha = 1$, the cone $\mathcal{K}_{\alpha}$ reduces to the single-dimensional line in which all the components are equal.
The polar cone $\mathcal{K}_{\alpha}^{\circ}$ of the cone $\mathcal{K}_{\alpha}$ is defined as
\begin{equation*}
	\mathcal{K}_{\alpha}^{\circ} = \left\{ \mathbf{x} \in \mathbb{R}^N: \inner{\mathbf{x}}{\mathbf{y}} \le 0 \text{ for all } \mathbf{y} \in \mathcal{K}_{\alpha} \right\},
\end{equation*}
which will also be quite important in our analysis.

\section{Main Results}
In this section, we present our main results. First, we show that a load balancing policy can be heavy-traffic delay optimal in steady-state even under a multi-dimensional state-space collapse. Then, by leveraging the key insight behind this result, we are able to explore the degree of flexibility a load balancing policy can enjoy while guaranteeing both throughput optimality and heavy-traffic delay optimality. Furthermore, a useful generalization of this result is presented at the end of this section.

\subsection{Multi-dimensional State-space Collapse}
We will first introduce the notion of state-space collapse used in this paper, which intuitively means that in steady-state the queue length process concentrates around a region of the state-space in heavy-traffic. As stated before, in most of the previous works on load balancing, the state-space collapse region is a single-dimensional line. In contrast, we are interested in the situation where the state-space collapse region is the $N$-dimensional cone $\mathcal{K}_{\alpha}$ defined in Eq.~\eqref{eq:def_cone}, which includes the single-dimensional line as a special case.  For a given cone $\mathcal{K}_{\alpha}$, we decompose $\overline{\Q}^{(\epsilon)}$ into two parts as follows
\begin{equation*}
	\overline{\Q}^{(\epsilon)} = \overline{\Q}^{(\epsilon)}_{\parallel} + \overline{\Q}^{(\epsilon)}_{\perp},
\end{equation*}
where $ \overline{\Q}^{(\epsilon)}_{\parallel}$ is the projection onto the cone $\mathcal{K}_{\alpha}$, referred to as the parallel component,  and $\overline{\Q}^{(\epsilon)}_{\perp}$ is the remainder, referred to as the perpendicular component, which is actually the projection onto the polar cone $\mathcal{K}_{\alpha}^{\circ}$. Note that this decomposition is well defined and unique since the cones $\mathcal{K}_{\alpha}$ and $\mathcal{K}_{\alpha}^{\circ}$ are both closed and convex, which follows from the fact that $\mathcal{K}_{\alpha}$ is finitely generated. The norm $\big\lVert{\overline{\Q}^{(\epsilon)}_{\perp}}\big\rVert$ is the distance between $\overline{\Q}^{(\epsilon)}$ and the cone $\mathcal{K}_{\alpha}$. We say that the queue length process concentrates around the cone $\mathcal{K}_{\alpha}$ if the moments of the distance $\big\lVert{\overline{\Q}^{(\epsilon)}_{\perp}}\big\rVert$ are upper bounded by constants. This motivates the following definition.

\begin{definition}[State-space Collapse to $\mathcal{K}_{\alpha}$]
\label{def_collapse}
Given an $\alpha \in (0,1]$, we say the state-space of a load balancing system collapses to the cone $\mathcal{K}_{\alpha}$ if 
\begin{equation}
\label{eq:def_collapse}
	\ex{\norm{{\overline{\Q}^{(\epsilon)}_{\perp}}}^r } \le M_r
\end{equation}
for all $\epsilon \in (0,\epsilon_0)$, $\epsilon_0 > 0$ and for each $r = 1,2,\cdots$, in which $M_r$ are constants that are independent of $\epsilon$.
\end{definition}

\begin{remark}
	It is worth noting that although the result of state-space collapse is often used as the key step in establishing heavy-traffic delay optimality, the upper bound itself holds even when the system is not in the heavy-traffic limit, and this can be of independent interest for analyzing the performance of the system.
\end{remark}

Now, we are ready to present our first main result.
% , which intuitively states that for a given load balancing policy, if the state-space collapses to a cone $\mathcal{K}_{\alpha}$ for some $\alpha \in (0,1]$, then the policy is heavy-traffic delay optimal.

\begin{theorem}
\label{thm:theorem_1}
	Given a throughput optimal load balancing policy, if there exists an $\alpha \in (0,1]$ such that the state-space collapses to the cone $\mathcal{K}_{\alpha}$, then this policy is heavy-traffic delay optimal in steady-state.
\end{theorem}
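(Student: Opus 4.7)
The plan is to carry out a drift analysis in the spirit of Eryilmaz-Srikant, but built on the scalar Lyapunov function $V(\Q) = \Phi^2$ with $\Phi \triangleq \sum_{n=1}^N Q_n$, rather than the more common $\|\Q\|^2$. The advantage is that the aggregate queue length evolves one-dimensionally as $\Phi(t+1) = \Phi(t) + A_\Sigma(t) - S_\Sigma(t) + U_\Sigma(t)$ with $U_\Sigma(t) \triangleq \sum_n U_n(t)$, and $A_\Sigma(t) - S_\Sigma(t)$ is independent of $\Q(t)$ with mean $-\epsilon$. Expanding $V$, taking steady-state expectation (justified because throughput optimality guarantees all moments of $\overline{\Q}^{(\epsilon)}$ to be finite), and setting the drift to zero yields the exact identity
\begin{equation*}
2\epsilon\, \ex{\overline{\Phi}^{(\epsilon)}} \;=\; 2\, \ex{\overline{\Phi}^{(\epsilon)}\, \overline{U}_\Sigma^{(\epsilon)}} \;+\; \ex{\bigl(A_\Sigma^{(\epsilon)} - S_\Sigma + \overline{U}_\Sigma^{(\epsilon)}\bigr)^2}.
\end{equation*}
Steady-state balance forces $\ex{\overline{U}_\Sigma^{(\epsilon)}} = \epsilon$, and $\overline{U}_\Sigma^{(\epsilon)} \leq N S_{\max}$ gives $\ex{(\overline{U}_\Sigma^{(\epsilon)})^2} \leq N S_{\max}\,\epsilon \to 0$, so the second expectation on the right converges to $\sigma_\Sigma^2 + \nu_\Sigma^2 = \zeta$. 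The theorem therefore reduces to showing that the cross term $\ex{\overline{\Phi}^{(\epsilon)}\, \overline{U}_\Sigma^{(\epsilon)}}$ is $o(1)$ as $\epsilon \downarrow 0$.

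This is where the state-space collapse to $\mathcal{K}_\alpha$ with $\alpha > 0$ enters decisively. The key geometric observation is that every $\mathbf{x} = \sum_k w_k \mathbf{b}^{(k)} \in \mathcal{K}_\alpha$ has coordinates $x_n = (1-\alpha) w_n + \alpha \sum_k w_k$, so $\min_n x_n \geq \alpha \sum_k w_k$ while $\max_n x_n \leq \sum_k w_k$, and hence $\sum_n x_n \leq (N/\alpha) \min_n x_n$. Since also $\mathbf{1} \in \mathcal{K}_\alpha$ one has $\inner{\mathbf{1}}{\overline{\Q}^{(\epsilon)}_{\perp}} \leq 0$, so $\overline{\Phi}^{(\epsilon)} \leq \sum_n \overline{Q}^{(\epsilon)}_{\parallel,n}$. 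Now whenever $\overline{U}_\Sigma^{(\epsilon)} > 0$, some component has $U_n > 0$ and thus $\overline{Q}^{(\epsilon)}_n < S_{\max}$; combining with $\overline{Q}^{(\epsilon)}_{\parallel,n} = \overline{Q}^{(\epsilon)}_n - \overline{Q}^{(\epsilon)}_{\perp,n}$ and $|\overline{Q}^{(\epsilon)}_{\perp,n}| \leq \|\overline{\Q}^{(\epsilon)}_{\perp}\|$ gives $\min_n \overline{Q}^{(\epsilon)}_{\parallel,n} \leq S_{\max} + \|\overline{\Q}^{(\epsilon)}_{\perp}\|$, so that $\overline{\Phi}^{(\epsilon)} \leq (N/\alpha)\bigl(S_{\max} + \|\overline{\Q}^{(\epsilon)}_{\perp}\|\bigr)$ on the event $\{\overline{U}_\Sigma^{(\epsilon)} > 0\}$. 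Cauchy-Schwarz, together with the state-space collapse bound $\ex{\|\overline{\Q}^{(\epsilon)}_{\perp}\|^2} \leq M_2$ and $\ex{(\overline{U}_\Sigma^{(\epsilon)})^2} \leq N S_{\max}\,\epsilon$, then gives $\ex{\overline{\Phi}^{(\epsilon)}\, \overline{U}_\Sigma^{(\epsilon)}} = O(\sqrt{\epsilon})$. Consequently $\limsup_{\epsilon \downarrow 0} \epsilon\, \ex{\overline{\Phi}^{(\epsilon)}} \leq \zeta/2$, and the matching lower bound of Lemma \ref{lem:lower_bound} completes the proof.

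The main obstacle I anticipate is exactly this geometric propagation step: turning pointwise information that one coordinate $\overline{Q}^{(\epsilon)}_n$ is $O(1)$ into a bound on the entire sum $\overline{\Phi}^{(\epsilon)}$. This propagation depends crucially on $\alpha > 0$; at $\alpha = 0$ the cone degenerates to the full nonnegative orthant, and nothing prevents other coordinates $\overline{Q}^{(\epsilon)}_m$ with $m \neq n$ from being arbitrarily large while $\overline{Q}^{(\epsilon)}_n$ is small, so no global control on $\overline{\Phi}^{(\epsilon)}$ is available. This is exactly why the hypothesis $\alpha \in (0,1]$ in the theorem seems sharp. A secondary and more routine point is the justification of the zero-drift identity for the unbounded Lyapunov function $V = \Phi^2$, which is handled by the moment bounds on $\overline{\Q}^{(\epsilon)}$ provided by the throughput-optimality assumption.
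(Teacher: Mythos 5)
Your proof is correct and follows essentially the same route as the paper: the zero-drift identity for $V=\norm{\Q}_1^2$ is exactly what the paper establishes in Lemma~\ref{claim_1}, reducing the theorem to the vanishing of the cross term, and the cone-geometry bound you derive ($\sum_n x_n \le (N/\alpha)\min_n x_n$ on $\mathcal{K}_\alpha$) is the same observation the paper uses in the form $\overline{Q}_{\parallel j}^+ \le \alpha^{-1}\overline{Q}_{\parallel i}^+$ after decomposing $\overline{\Q}(t+1)$ into its projections onto $\mathcal{K}_\alpha$ and the polar cone. The only differences are bookkeeping --- you work at time $t$ via $U_n(t)>0 \Rightarrow Q_n(t)<S_{\max}$, the paper at time $t+1$ via the exact identity $U_n(t)Q_n(t+1)=0$ --- and both arguments close with the same Cauchy--Schwarz estimate of order $\sqrt{\epsilon}$.
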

\begin{proof}
	See Section \ref{sec:proof_theorem_1}.
\end{proof}

\begin{figure}[t]
	\graphicspath{{./Figures/}}
	\centering
	\includegraphics[width=4in]{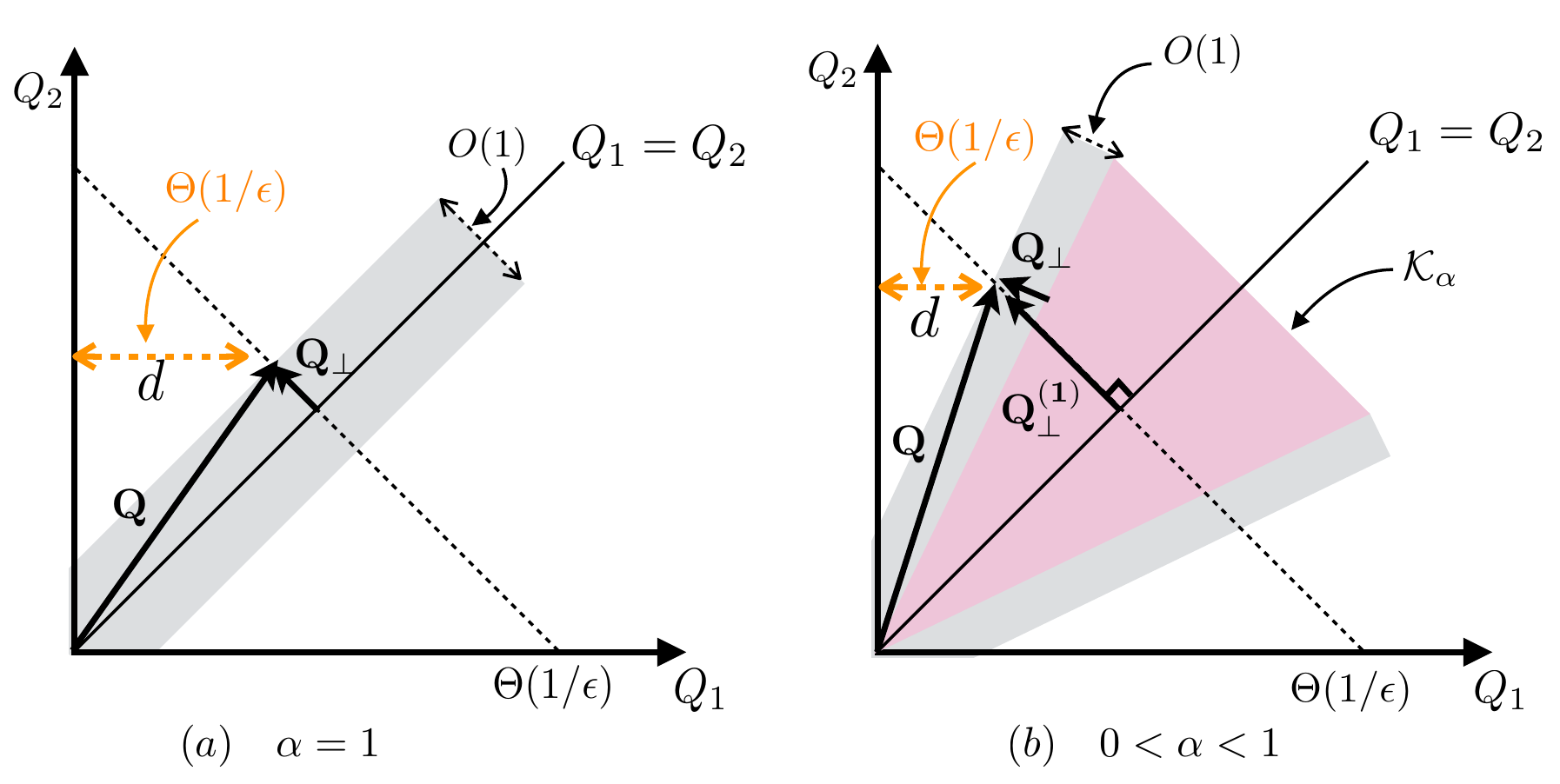}
	\caption{A geometric illustration of the key insight of steady-state heavy-traffic optimality in load balancing systems.\normalfont{ In the figures above, we use the gray area to represent the bounded distance between $\mathbf{Q}$ and the cone $\mathcal{K}_{\alpha}$. The dashed line represents the total tasks in the system and hence has order $1/\epsilon$. In (a), $\alpha = 1$ and hence the system state collapses to the one-dimensional line. As a result, the queue lengths of two servers are nearly equal. However, Theorem \ref{thm:theorem_1} tells us that this is not the key feature of heavy-traffic optimality since in (b) the queue lengths difference between two servers, $\lVert{\mathbf{Q}_{\perp}^{(\mathbf{1})}}\rVert$, is of the same order of $\lVert\mathbf{Q}\rVert$. Rather, the key feature behind heavy-traffic optimality is that it keeps all the servers busy when there is substantial work. This is achieved by keeping the system state far away from the boundary via state-space collapse, as $\epsilon$ approaches zero (see the distance $d$ in both (a) and (b)).}}\label{fig:insight}
	\vspace{-4mm}
\end{figure}

From this theorem, we can make the following important observations regarding heavy-traffic delay optimality in load balancing systems. A geometric illustration is presented in Fig. \ref{fig:insight} to facilitate the understanding.
\begin{enumerate}[(i)]
	\item If $\alpha = 1$, then this theorem reduces to previous results on heavy-traffic delay optimality under a single-dimensional state-space. In this case, the state-space can be regarded as if it evolves in a one-dimensional subspace where all queues are equal. This is because the queue-length difference between servers is bounded by a constant and hence is substantially smaller than the queue lengths themselves, which are on the order of $1/\epsilon$. See Fig.~\ref{fig:insight}(a). 
	% \item Unlike the multi-dimensional state-space collapse in scheduling scenario [][] where the cone lies in a lower subspace, the cone $\mathcal{K}_{\alpha}$ lies in the $N$-dimensional subspace. This highlights the difference between load balancing (or routing) and scheduling.
	\item This theorem tells us that in order to be heavy-traffic delay optimal, a policy does not necessarily have to keep all the queues equal as in the case of $\alpha = 1$. This is because Theorem \ref{thm:theorem_1} implies that heavy-traffic delay optimality is preserved even when the difference between the various queues is of the same order as the queue lengths themselves, as shown in Fig.~\ref{fig:insight}(b). Therefore, this theorem indicates that the key feature of a heavy-traffic delay optimal policy is that \textit{it keeps all the servers busy when there are substantial tasks in the system.} This is achieved by keeping system states far away from the boundary via state-space collapse as $\epsilon$ approaches zero, see the distance $d$ in Fig. \ref{fig:insight}. 
	\item It should also be pointed out that  the cone $\mathcal{K}_{\alpha}$ in Theorem \ref{thm:theorem_1} is not necessarily the actual region that state-space collapses to. In fact, this theorem tells us that for heavy-traffic delay optimality, the actual region of state-space collapse, say $\mathcal{R}$, does not matter as long as it lies within a cone $\mathcal{K}_{\alpha}$ for some $\alpha \in (0,1]$. This is because in this case the distance to the cone $\mathcal{K}_{\alpha}$ is not larger than that to the region $\mathcal{R}$. Thus, once it collapses to the region $\mathcal{R}$, it also collapses to the cone $\mathcal{K}_{\alpha}$ according to Definition~\ref{def_collapse}, and hence achieves heavy-traffic delay optimality. This nice property may be of independent interest since it enables us to establish heavy-traffic delay optimality even when the multi-dimensional state-space collapse region is non-regular and non-convex.

\end{enumerate}

Now, we turn to  provide the high-level intuition on why Theorem~\ref{thm:theorem_1} holds.
To start with, note that the following equation
\begin{equation}
\label{eq:UQ_cross}
	U_n(t)Q_n(t+1) = 0
\end{equation}
holds for all $n$ and $t$. This follows directly from the queue dynamics in Eq. \eqref{eq:Qdynamic}. Thus, for the single-server resource-pooled system, when there is positive unused service at time-slot $t$, the queue must be empty at time-slot $t+1$. In contrast, 
for a load balancing system, due to the fact that a task cannot be moved from one queue to another queue, there exist situations when one queue, say $i$, has positive unused service, i.e., $U_i(t) > 0$ and hence $Q_i(t+1) = 0$, while there are remaining tasks in other queues, i.e., $Q_j(t+1) > 0$ for some $j$. As a result, the average queue length of the resource-pooled system is the lower bound for the load balancing system. Therefore, in order to achieve this lower bound (and hence heavy-traffic delay optimality by definition), a load balancing policy should guarantee that when one queue has positive unused service, all the other queues should be empty in steady-state. In fact, this is actually the insight behind the sufficient and necessary condition for heavy-traffic delay optimality in Lemma \ref{claim_1}, i.e., 
\begin{equation}
	\label{eq:necessary_sketch}
		\lim_{\epsilon \downarrow 0}\ex{\big\lVert\overline{\Q}^{(\epsilon)}(t+1) \big\rVert_1 \big\lVert\overline{\UU}^{(\epsilon)}(t) \big\rVert_1} = 0.
\end{equation}

Next, let us take a closer look at the condition above.  As a result of Eq. \eqref{eq:UQ_cross}, the left-hand side of Eq. \eqref{eq:necessary_sketch} is always zero when all the queue lengths are positive. This explains why it is not necessary to keep all the queue lengths equal as in the case of single-dimensional state-space collapse. Instead, what really matters in the condition is the situation when $U_i(t) > 0$ (and hence $Q_i(t+1) = 0$) for some $i$. In this case, the condition requires all the other queue lengths must be zero as well, i.e., $Q_n(t+1) = 0$ for all $n$. In other words, it requires all the servers be busy when there is substantial work, which is intuitively satisfied when the queue length process collapses to the cone $\mathcal{K}_{\alpha}$ for any $\alpha \in (0,1]$. This is because the cone is far away from the states where some queue is empty while another queue is non-empty in heavy traffic. The proof is presented in Section~\ref{sec:proof_theorem_1}.
\begin{remark}
	We would remark on the choice of the particular form of cone $\mathcal{K}_{\alpha}$, which provides further intuitions on Theorem~\ref{thm:theorem_1}. One reason for the choice is that $\mathcal{K}_{\alpha}$ is finitely generated, and hence it is closed and convex. This guarantees that the projection onto this cone is well-defined and unique. Another more important reason is that $\mathcal{K}_{\alpha}$ can approach the non-negative orthant while guaranteeing that within the cone there are no \textit{`bad points'} where some queue is empty and another queue is non-empty. This directly implies that once the state-space collapses to the cone $\mathcal{K}_{\alpha}$, the condition in Eq. \eqref{eq:necessary_sketch} is satisfied. One might think of using an `ice-cream' cone defined below as a substitute of cone $\mathcal{K}_{\alpha}$,
	\begin{align*}
	\mathcal{K}_{\theta} = \left\{ \mathbf{x} \in \mathbb{R}^N: \frac{ \lVert{\mathbf{x}_{\parallel }^{(\mathbf{1})}}\rVert }{\norm{\mathbf{x}}} \ge \cos(\theta) \right\},
	\end{align*}
	where $\mathbf{x}_{\parallel}^{(\mathbf{1})}$ is the projection of $\mathbf{x}$ onto the line $\mathbf{1} = (1,1,\ldots,1)$. The key problem with such a choice is that in order to exclude all the \textit{`bad points'} from $\mathcal{K}_{\theta}$ for a large system size $N$, the cone $\mathcal{K}_{\theta}$ has to basically reduce to the line $\mathbf{1}$, and hence does not provide us with any further flexibility.
	% The problem with this choice is that \textit{in general} even though the queue length process collapses to a useful cone $\mathcal{K}_{\theta}$, it cannot guarantee heavy-traffic delay optimality. This is because there are possible infinite points within the cone $\mathcal{K}_{\theta}$ such that when one queue is empty and some other queues are non-empty (we call these points \textit{bad points}), and hence violating the necessary condition of heavy-traffic optimality in Eq. \eqref{eq:necessary_sketch}. 
	To see this, let us start with $N=2$. In this case, the choice of $\mathcal{K}_{\theta}$ is fine since it is able to approach the non-negative orthant as $\theta$ approaches $\pi/4$, while guaranteeing that there are no \textit{bad points} within the cone. In fact, it is easy to see that in this case $\theta$ plays the same role as $\alpha$ in $\mathcal{K}_{\alpha}$. Now, consider the case $N = 3$. One might choose $\theta < \arccos(1/\sqrt{3})$ in order to exclude the \textit{`bad points'} on the axes from the cone $\mathcal{K}_{\theta}$. However, all the points on the line $\mathbf{x} = (1,1,0)$ are also \textit{`bad points'}. Thus, in order to exclude all of these points, the choice of $\theta$ should be $[0,\arccos( \sqrt{2} /\sqrt{3}) )$. In general, for the $N$-dimension case, the choice of $\theta$ should be less than $\arccos( \sqrt{N-1} /\sqrt{N})$, which approaches zero for large $N$. Hence, for a large system size $N$, the cone $\mathcal{K}_{\theta}$ has to approach the single-dimensional line $\mathbf{1}$ in order to guarantee heavy-traffic delay optimality, which is not interesting because it does not provide any significant flexibility compared to the single-dimensional line. 
	The insight of keeping all the servers busy by excluding the \textit{`bad points'} when there is substantial work is also useful for scheduling problems. For example,  the cone considered in \cite{maguluri2016heavy} for the scheduling problem in a switch system contains infinitely many \textit{`bad points'}. It is actually due to this problem that the MaxWeight policy in this case can only guarantee optimal scaling rather than heavy-traffic delay optimality under the single-dimensional state-space collapse in \cite{stolyar2004maxweight},~\cite{eryilmaz2012asymptotically}.
	
	In contrast, the $\mathcal{K}_{\alpha}$ considered in our paper is able to exclude all the \textit{`bad points'} for any $\alpha > 0$ and hence guarantees heavy-traffic delay optimality. This fact not only captures the essence of heavy-traffic delay optimality in load balancing systems via multi-dimensional state-space collapse, but also provides flexibility in analyzing and designing new load balancing policies, which will be explored in the next section.
\end{remark}

\subsection{Flexible Load Balancing}
In this section, instead of focusing on yet another policy, we step back and explore the possibility provided by Theorem \ref{thm:theorem_1} in analyzing and more importantly designing flexible load balancing policies that are both throughput optimal and heavy-traffic delay optimal. This is motivated by the fact that existing policies are often too restrictive and might not be easily adopted to guarantee system performance in scenarios where \emph{data locality} or \emph{inaccurate} information of queue lengths exist, which are common in load balancing systems \cite{zaharia2010delay},~\cite{paschos2016routing}.

Before we present our main result, let us first introduce some necessary concepts. Let $P_n(t)$ be the probability that the new arrivals are dispatched to the $n$th shortest queue at time-slot $t$. By the Markovian assumption, the dispatching distribution $\mathbf{P}(t)$ can at most depend on $\Q(t)$.
Let 
\begin{equation}
\label{eq:delta_definition}
	\Delta(t) = \mathbf{P}(t) - \mathbf{P}_{\text{rand}}(t),
\end{equation}
where $\mathbf{P}_{\text{rand}}(t)$ is the dispatching distribution under uniform random routing (homogeneous case) or proportional random routing (heterogeneous case), i.e., for homogeneous servers, each component of $\mathbf{P}_{\text{rand}}(t)$ is $1/N$, and for heterogeneous servers the $n$th component of $\mathbf{P}_{\text{rand}}(t)$ is $\mu_{\sigma_t(n)}/\mu_{\Sigma}$ where $\sigma_t(n)$ is the index of the $n$th shortest queue at time-slot $t$. 

To facilitate the understanding of the concepts above, let us look at some examples. Consider a load balancing system with four homogeneous servers. Under uniform random routing, we have $\Delta(t) = (0,0,0,0)$ for each time-slot $t$. Under the JSQ policy,  the dispatcher always assigns the new arrival to the shortest queue, and thus we have $\Delta(t) = (3/4,-1/4,-1/4,-1/4)$ for each time-slot $t$. Under the power-of-$2$ policy, the dispatcher randomly picks two servers and dispatches the new arrivals to the server with the shorter queue length. It easily follows that $\Delta(t) = (1/4,1/12,-1/12,-1/4)$ for each time-slot $t$. Note that, from these examples, we can see that a positive value of $\Delta_n(t)$ means that the dispatcher favors the $n$th shortest queue, while a non-positive value means the dispatcher disfavors the corresponding queue. This is because $\Delta(t)$ equals $(0,0,0,0)$ under uniform random routing, which has no preference over any queues.

Now, we are prepared to present our second main result, which characterizes the degree of flexibility a load balancing policy enjoys while guaranteeing  throughput and heavy-traffic delay optimality.

\begin{theorem}
\label{thm:theorem_2}
	Given a load balancing policy, if there exists a cone $\mathcal{K}_{\alpha}$ with $\alpha \in (0,1]$ such that for all $\Q(t) \notin \mathcal{K}_{\alpha}$, there is some $k \in \{2,\ldots,N\}$ such that
	\begin{equation}
	\label{eq:condition_1}
		\Delta_n(t) \ge 0, n \le k \text{ and } \Delta_n(t) \le 0, n \ge k
	\end{equation}
	and 
	\begin{equation}
	\label{eq:condition_2}
		\min\left(|\Delta_1(t)|,|\Delta_N(t)|\right) \ge \delta
	\end{equation}
	for some positive constant $\delta$ that is independent of $\epsilon$, then this policy is both throughput and heavy-traffic delay optimal in steady-state.
\end{theorem}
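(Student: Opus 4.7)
The strategy is to reduce Theorem \ref{thm:theorem_2} to Theorem \ref{thm:theorem_1} by establishing separately that (a) the policy is throughput optimal in the sense of positive recurrence with all moments of $\overline{\Q}^{(\epsilon)}$ finite, and (b) the steady-state vector satisfies the collapse bound of Definition \ref{def_collapse} for the given cone $\mathcal{K}_{\alpha}$. Once both (a) and (b) are in place, Theorem \ref{thm:theorem_1} applies verbatim and yields steady-state heavy-traffic delay optimality.

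For (a), I would work with $V(\Q)=\norm{\Q}_1=\sum_n Q_n$. Its one-step conditional drift equals $-\epsilon + \expect[\sum_n U_n(t) \mid \Q(t)]$, so the real difficulty is controlling the unused-service term, which can be nonzero whenever some queue is small. A pure per-slot argument is out of reach, since conditions \eqref{eq:condition_1}--\eqref{eq:condition_2} impose no structure at all on the dispatcher inside $\mathcal{K}_{\alpha}$. Following the two-step recipe outlined in the introduction, I would first pass to the fluid model under the scaling $\Q^x(t)=\Q(\lfloor xt\rfloor)/x$: using \eqref{eq:condition_1}--\eqref{eq:condition_2}, every fluid trajectory starting outside $\mathcal{K}_{\alpha}$ is driven into $\mathcal{K}_{\alpha}$ in finite time, and the geometric definition of $\mathcal{K}_{\alpha}$ guarantees that every nonzero fluid point inside has all coordinates strictly positive, so the fluid unused-service term vanishes and $\sum_n q_n(t)$ drains at rate $\epsilon$, reaching the origin in time $O(1/\epsilon)$. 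Transferring this fluid decay back to the stochastic model yields a $T$-step negative drift of $\norm{\Q}_1$ for a sufficiently large window $T$; combined with the bounded one-step increments provided by $A_{\max}$ and $S_{\max}$, Hajek's moment lemma as used in \cite{eryilmaz2012asymptotically} then delivers both positive recurrence and finiteness of every moment of $\overline{\Q}^{(\epsilon)}$.

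For (b), I would use $W(\Q)=\norm{\Q_{\perp}}^2$, where $\Q_{\perp}$ is the projection onto the polar cone $\mathcal{K}_{\alpha}^{\circ}$. Expanding the one-step drift as
\begin{equation*}
W(\Q(t+1)) - W(\Q(t)) \le \norm{\A(t)-\s(t)+\UU(t)}^2 + 2\inner{\Q_{\perp}(t)}{\A(t)-\s(t)+\UU(t)},
\end{equation*}
and using the polar-cone inequality $\inner{\Q_{\perp}(t)}{\UU(t)}\le 0$ together with bounded arrivals and services, reduces the task to lower-bounding the negative part of $\expect[\inner{\Q_{\perp}(t)}{\A(t)-\s(t)} \mid \Q(t)]$ by $c\,\delta\,\norm{\Q_{\perp}(t)}$ whenever $\Q(t)\notin \mathcal{K}_{\alpha}$. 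This is the step I expect to be the main obstacle: because no closed-form projection onto the polyhedral cone $\mathcal{K}_{\alpha}$ is known, the inner product cannot be evaluated explicitly. To sidestep this, I would exploit the specific generators $\mathbf{b}^{(n)}$ of $\mathcal{K}_{\alpha}$ to establish the monotonicity property that, when the queues are indexed in increasing order of their lengths, the corresponding components of $\Q_{\perp}(t)$ are nondecreasing in the same ordering, with the shortest-queue coordinate the most negative and the longest-queue coordinate the most positive. Combined with the monotone ordering in \eqref{eq:condition_1} and the uniform lower bound on $|\Delta_1|$ and $|\Delta_N|$ in \eqref{eq:condition_2}, this yields the required strictly negative correlation, and the standard Hajek-type moment bound then delivers $\expect[\norm{\overline{\Q}_{\perp}^{(\epsilon)}}^r]\le M_r$ uniformly in $\epsilon\in(0,\epsilon_0)$ for every $r\ge 1$, which is precisely Definition \ref{def_collapse}. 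Invoking Theorem \ref{thm:theorem_1} then closes the proof.
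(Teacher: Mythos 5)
Your overall strategy is the same as the paper's: establish throughput optimality by passing to the fluid model and transferring the fluid drain rate $\Omega(\epsilon)$ of $\sum_n q_n$ back to a $T$-step negative stochastic drift of $\norm{\Q}_1$ (Proposition~\ref{prop:throughput}); establish state-space collapse by exploiting the monotone ordering of the components of $\Q_\perp$ under the sorted-queue permutation together with the structure in \eqref{eq:condition_1}--\eqref{eq:condition_2} (Proposition~\ref{prop:collapse}); and then invoke Theorem~\ref{thm:theorem_1}. You have correctly singled out the two technical pivots --- the ``no unused service inside the cone'' observation and the monotone property $\hat{Q}_{\perp 1}(t)\le\cdots\le\hat{Q}_{\perp N}(t)$ with $\hat{Q}_{\perp 1}\le 0\le\hat{Q}_{\perp N}$.

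However, two steps in part (b) do not go through as written. First, the claimed ``polar-cone inequality'' $\inner{\Q_{\perp}(t)}{\UU(t)}\le 0$ is false in general: the polar cone gives $\inner{\Q_{\perp}}{\mathbf{y}}\le 0$ only for $\mathbf{y}\in\mathcal{K}_{\alpha}$, and $\UU(t)$ is merely a nonnegative vector, not an element of $\mathcal{K}_{\alpha}$ (which for $\alpha>0$ is a proper subset of $\mathbb{R}_{+}^{N}$); moreover, by your own monotone property $\Q_\perp$ has positive components, so nonnegativity of $\UU$ does not determine the sign. The paper avoids this by bounding $\Delta V_\perp$ through $\Delta W - \Delta W_\parallel$ with $W_\parallel=\norm{\Qp}^2$, and using the unambiguous inequality $\inner{\Qp(t)}{\UU(t)}\ge 0$ (both nonnegative) to remove $\UU$ (Claim~\ref{claim_3}). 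If you prefer your direct expansion, note instead that when $U_n(t)>0$ we have $Q_n(t)+A_n(t)-S_n(t)=-U_n(t)$, so $\inner{\Q(t)+\A(t)-\s(t)}{\UU(t)}=-\norm{\UU(t)}^2\le 0$, and combined with $\inner{\Qp(t)}{\UU(t)}\ge 0$ this gives $\inner{\Q_\perp(t)+\A-\s}{\UU}\le 0$, which suffices. Second, $W(\Q)=\norm{\Q_\perp}^2$ has increments that grow linearly in $\norm{\Q_\perp}$, so condition (C2) of Lemma~\ref{lem:basis} fails and you cannot apply the Hajek-type drift lemma to $W$ directly. You must work with $V_\perp(\Q)=\norm{\Q_\perp}$, whose increments are uniformly bounded by $\sqrt{N}\max(A_{\max},S_{\max})$ via non-expansiveness of the projection, and pass from the drift of $W$ to that of $V_\perp$ by the concavity of the square root, exactly as in the paper's Claim~\ref{claim_3}. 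Once these two repairs are in place, your argument coincides with the paper's.
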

\begin{proof}
	See Section \ref{sec:proof_theorem_2}
\end{proof}
\begin{remark}
	It can be easily seen that previous steady-state heavy-traffic delay optimal policies, namely JSQ and power-of-$d$, satisfy the conditions in Theorem \ref{thm:theorem_2} with $\alpha = 1$.
\end{remark}

Before we turn to the technical aspects, let us first elaborate on the key messages behind this theorem. In sum, this theorem characterizes the flexibility in achieving throughput and heavy-traffic delay optimality from the following two dimensions.
\begin{enumerate}[(i)]
	\item The first dimension relates to the frequency of favoring shorter queues. This can be seen from the fact that there are no requirements on $\Delta(t)$ whenever $\Q(t)$ falls in the cone $\mathcal{K}_{\alpha}$, and $\alpha$ can be arbitrarily close to zero. This is significantly different from previous heavy-traffic delay optimal policies, e.g., JSQ and power-of-$d$. These policies have to favor shorter queues for \textit{every} time-slot and \textit{every} system state. This is  often too restrictive and may not be achievable, especially when considering the data locality problem, since in this case the dispatcher has to place tasks to servers that store the corresponding input data chunks. In contrast, the above theorem tells us that a load balancing policy has the flexibility to adopt any dispatching distribution whenever the queue-length state falls in a region that can be covered by a cone $\mathcal{K}_{\alpha}$ for some $\alpha \in (0,1]$. For example, for a load balancing system with heterogeneous servers, the dispatcher can just use uniform random routing when the system state lies within a cone $\mathcal{K}_{\alpha}$, which provides us a lot of flexibility (e.g., easy implementation and lower message overhead), compared to JSQ policy, and the flexibility increases as $\alpha$ decreases. It is also worth noting that although the delay optimality is preserved in heavy-traffic for any $\alpha \in (0,1]$, the actual delay performance under medium or low loads might get worse as $\alpha$ decreases in some cases. Thus, the parameter $\alpha$ also captures an important trade-off between flexibility and delay performance under medium loads, which may be an interesting open problem to explore in the future.

	\item The second dimension is related to the intensity with which shorter queues are favored. This can be seen from the conditions on $\Delta(t)$ in Eqs. \eqref{eq:condition_1} and \eqref{eq:condition_2}. Specifically, instead of joining only the shortest queue as in the JSQ policy or having monotone decreasing probabilities from joining the shortest queue to the longest queue in the power-of-d policy, an even weaker intensity of favoring shorter queues is sufficient, and this intensity can be characterized by the parameter $\delta$. This kind of flexibility is very useful when the queue length information available at the dispatcher may be inaccurate due to communication delay or sampling error.
	 % Another thing to note is that  even when $\Q(t)$ lies outside $\mathcal{K}_{\alpha}$, the above theorem tells us that the dispatcher still has additional flexibility compared to previous policies in order to achieve heavy-traffic delay optimality. Specifically, it says that $\Delta_n(t)$ does not have to be decreasing with respect to $n$, which is the case in all previous policies. That is, in terms of heavy-traffic optimality, there is no need for the dispatcher to have  the strongest preference on the shortest queue than any other queue.
\end{enumerate}

We now highlight the technical contributions behind this theorem. Since this theorem is proved based on Theorem \ref{thm:theorem_1}, all we need to show are throughput optimality and state-space collapse to the cone.

For throughput optimality, i.e., positive recurrence and bounded moments in steady-state, the standard drift analysis of a suitable Lyapunov function is very difficult in our case. This is because the drift within the cone $\mathcal{K}_{\alpha}$ can be positive; hence, it is challenging from renewal theory to find a sufficiently large $T$ such that the drift within $T$ time slots is negative outside a finite set. On the other hand, the standard fluid approximation method results only in positive recurrence. Our approach is to combine fluid approximations with stochastic Lyapunov theory. In particular, we show that the Lyapunov function used in the fluid model, i.e., the sum of the queue lengths, is also a suitable Lyapunov function for the original stochastic system. This connection allows us to carry out drift analysis in the fluid domain, which makes it easier to find the $T$. Then, we come back to the stochastic system and apply the drift analysis of the same Lyapunov function in the stochastic system to show both positive recurrence and bounded moments in steady-state. This approach also provides us with a good intuition on throughput optimality in load balancing systems. Informally speaking, if  $\Q(t)$ is in the cone $\mathcal{K}_{\alpha}$, the drift of the sum queue lengths is of order $\epsilon$ towards the origin for any dispatching distribution since there is no unused service. If $\Q(t)$ is outside the cone $\mathcal{K}_{\alpha}$, it is easy to see that the dispatching distribution in Theorem \ref{thm:theorem_2} is strictly better than random routing, and hence enjoys a drift towards the origin. Therefore, the sum queue length will not go to infinity in steady-state.

For the state-space collapse to the cone $\mathcal{K}_{\alpha}$, the standard analysis adopted in the single-dimensional collapse also fails in this case. This is because in contrast to the projection onto a line, the projection onto a cone is very difficult. In fact, a closed-form formula of the projection onto a polyhedral cone is still an open problem. To circumvent this difficulty, instead of obtaining the exact projection, we are able to find an important monotone property on the projection, which is sufficient to establish a negative drift independent of $\epsilon$ along the direction of $\Q_{\perp}$ when the queue length state is outside the cone $\mathcal{K}_{\alpha}$. This in turn indicates that the distance between the system state and the cone $\mathcal{K}_{\alpha}$ cannot go to infinity as $\epsilon$ approaches zero. Therefore, by definition, it establishes the result of state-space collapse to the cone. Combining this with throughput optimality yields heavy-traffic delay optimality according to Theorem \ref{thm:theorem_1}. 
\begin{remark}
	We would like to remark that the techniques used to prove Theorem \ref{thm:theorem_2} are of independent interest and may have broader applicability. For example, we are currently investigating whether this technique can be used to design a broader class of heavy-traffic delay optimal \textit{scheduling} policies.
\end{remark}

\subsection{Generalization}

In the last section, we have shown that when it comes to designing a heavy-traffic delay optimal load balancing policy, one has the flexibility of choosing the frequency and intensity of favoring shorter queues, which are parameterized by some fixed positive constants $\alpha$ and $\delta$, respectively. In particular, smaller values of these two constants means favoring the shorter queues less frequently and with less intensity. In this section, we will show that these two constants can actually approach zero at a certain rate with respect to the heavy-traffic parameter $\epsilon$ so that the given policy can still guarantee heavy-traffic delay optimality. As a result, we can exploit this fact to achieve even significant flexibility in designing new policies.
\begin{proposition}
\label{prop:prop_3}
	Given a throughput optimal load balancing policy, if there exists a cone $\mathcal{K}_{\alpha^{(\epsilon)}}$ such that for all $\Q(t) \notin \mathcal{K}_{\alpha^{(\epsilon)}}$, there is some $k \in \{2,\ldots,N\}$ such that
	\begin{equation}
	\label{eq:condition_p1}
		\Delta_n(t) \ge 0, n \le k \text{ and } \Delta_n(t) \le 0, n \ge k
	\end{equation}
	and 
	\begin{equation}
	\label{eq:condition_p2}
		\min\left(|\Delta_1(t)|,|\Delta_N(t)|\right) \ge \delta^{(\epsilon)}
	\end{equation}
	for some $\delta^{(\epsilon)}$. Suppose that $\alpha^{(\epsilon)}$ and $\delta^{(\epsilon)}$ satisfy  
	\begin{align*}
		\alpha^{(\epsilon)}\delta^{(\epsilon)} = \Omega(\epsilon^{\beta})
	\end{align*}
	for some $\beta \in [0,1)$, then this policy is heavy-traffic delay optimal.
\end{proposition}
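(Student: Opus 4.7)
The plan is to mirror the proof of Theorem~\ref{thm:theorem_2}, but now carry the dependence of the state-space-collapse bounds on $\alpha^{(\epsilon)}$ and $\delta^{(\epsilon)}$ all the way through the argument of Theorem~\ref{thm:theorem_1}. Since throughput optimality is assumed by hypothesis, the only remaining task is to verify the sufficient condition extracted from the proof of Theorem~\ref{thm:theorem_1}, namely
\[
\lim_{\epsilon \downarrow 0} \ex{ \bigl\lVert \overline{\Q}^{(\epsilon)}(t+1) \bigr\rVert_1 \bigl\lVert \overline{\UU}^{(\epsilon)}(t) \bigr\rVert_1 } = 0.
\]

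First, I would revisit the state-space-collapse argument of Theorem~\ref{thm:theorem_2}. The monotone-projection property of $\mathcal{K}_{\alpha^{(\epsilon)}}$ together with conditions~\eqref{eq:condition_p1}--\eqref{eq:condition_p2} still yields a negative conditional drift on $\lVert \Qc \rVert$ of order $\alpha^{(\epsilon)}\delta^{(\epsilon)}$ whenever $\Q(t) \notin \mathcal{K}_{\alpha^{(\epsilon)}}$, together with uniformly bounded jumps. Invoking the geometric-tail version of Hajek's lemma then produces, for every integer $r \ge 1$,
\[
\ex{ \bigl\lVert \overline{\Q}_{\perp}^{(\epsilon)} \bigr\rVert^r } \;\le\; \frac{C_r}{\bigl(\alpha^{(\epsilon)}\delta^{(\epsilon)}\bigr)^r},
\]
with $C_r$ independent of $\epsilon$; the only change from the proof of Theorem~\ref{thm:theorem_2} is that the right-hand side now carries $\epsilon$-dependence through $\alpha^{(\epsilon)}\delta^{(\epsilon)}$.

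Next, I would recycle the heavy-traffic step from the proof of Theorem~\ref{thm:theorem_1}. Its key geometric input is that whenever $U_i(t) > 0$ (hence $Q_i(t+1) = 0$), an explicit projection calculation of a point with a vanishing coordinate onto $\mathcal{K}_{\alpha^{(\epsilon)}}$ shows the distance is at least $c\alpha^{(\epsilon)} \lVert \Q(t+1)\rVert_1$ for a constant $c > 0$, giving
\[
\bigl\lVert \overline{\Q}(t+1)\bigr\rVert_1 \;\le\; \frac{1}{c\,\alpha^{(\epsilon)}} \bigl\lVert \overline{\Q}_{\perp}(t+1)\bigr\rVert \qquad \text{on } \{\overline{\UU}(t) \neq \mathbf{0}\}.
\]
Combining this with the steady-state identity $\ex{\lVert \overline{\UU}\rVert_1} = \epsilon$, the uniform bound $\lVert \UU\rVert_1 \le NS_{\max}$ (which implies $\ex{\lVert \overline{\UU}\rVert_1^q} = O(\epsilon)$ for every fixed $q \ge 1$), and H\"older's inequality with conjugate exponents $(p,q)$, produces an upper bound of the form
\[
\ex{ \bigl\lVert \overline{\Q}(t+1) \bigr\rVert_1 \bigl\lVert \overline{\UU}(t) \bigr\rVert_1 } \;=\; O\!\left( \frac{p\, \epsilon^{1-1/p}}{\alpha^{(\epsilon)} \cdot \alpha^{(\epsilon)}\delta^{(\epsilon)}} \right).
\]

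The main obstacle is the delicate interplay of three scales: the $1/(\alpha^{(\epsilon)}\delta^{(\epsilon)})$-size of $\lVert \Qc \rVert$ from the collapse, the additional $1/\alpha^{(\epsilon)}$ loss from the projection geometry, and the $\epsilon$-size of $\lVert \UU\rVert_1$ in heavy traffic. Substituting $\alpha^{(\epsilon)}\delta^{(\epsilon)} = \Omega(\epsilon^{\beta})$ in the worst configuration ($\alpha^{(\epsilon)} = \Theta(\epsilon^{\beta})$, $\delta^{(\epsilon)} = \Theta(1)$) closes the argument only for $\beta < 1/2$ at this level of detail. To cover the full range $\beta \in [0,1)$, I expect one must sharpen the geometric step: the refined projection identity $\lVert \Qc \rVert \gtrsim \alpha^{(\epsilon)} \lVert \Q\rVert_1 \sqrt{k}$, where $k$ is the number of vanishing coordinates of $\Q(t+1)$ (so that $k \ge \lVert \UU(t)\rVert_1 / S_{\max}$), combined with conditioning on the magnitude of $\lVert \overline{\UU}(t)\rVert_1$ to exploit that a larger amount of unused service forces more coordinates to vanish simultaneously. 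Coupling this refinement with the higher-moment collapse bound and tuning the H\"older exponent $p = p(\beta)$ appropriately should deliver a decay rate $O(\epsilon^{\gamma(\beta)})$ with $\gamma(\beta) > 0$ for every $\beta \in [0,1)$, and via Theorem~\ref{thm:theorem_1} this establishes heavy-traffic delay optimality.
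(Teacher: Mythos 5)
Your plan is sound in outline (reduce to the cross-term $\mathcal{T}^{(\epsilon)}$ via Lemma~\ref{claim_1}, use the state-space-collapse moment bound with explicit $\epsilon$-dependence, close with H\"older), but you mis-track the $\alpha^{(\epsilon)}$-dependence in the collapse bound, which sends you down an unnecessarily complicated path.

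The drift inequality used in the state-space-collapse step, Eq.~\eqref{eq:drift_in_delta}, is
\[
\ex{\inner{\Qc(t)}{\A(t)-\s(t)}\mid \Q(t)=\Q}\;\le\;-\frac{\mu_\Sigma \delta}{4N}\,\norm{\Qc(t)},
\]
valid once $\epsilon \le \mu_\Sigma\delta/(4N+2\delta)$, and the drift constant $\eta$ depends on $\delta$ only, \emph{not} on $\alpha$. The role of $\alpha$ in that argument is purely geometric (it controls which ``bad points'' lie outside $\mathcal{K}_\alpha$ and determines the threshold $\epsilon_0$), but the monotone property $\widehat{Q}_{\perp 1}(t)\le \cdots \le \widehat{Q}_{\perp N}(t)$ holds for every $\alpha\in(0,1]$, and the two key bounds Eqs.~\eqref{eq:cross_upper_delta_qc} and \eqref{eq:cross_upper_delta_qc_2} carry no $\alpha$-factor. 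Consequently the Foster--Lyapunov/Hajek moment bound reads $\ex{\norm{\overline{\Q}_\perp^{(\epsilon)}}^r}\le K_r^r/(\delta^{(\epsilon)})^r$ --- \emph{not} $C_r/(\alpha^{(\epsilon)}\delta^{(\epsilon)})^r$ as you wrote. (Note also that $\alpha\le 1$ forces $\delta^{(\epsilon)}=\Omega(\epsilon^\beta)\gg\epsilon$, so the threshold condition $\epsilon\le\epsilon_0(\delta^{(\epsilon)})$ is automatically met for small $\epsilon$.)

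With the correct collapse bound, the cross-term estimate via H\"older (with $1/r+1/r'=1$) becomes
\[
\mathcal{T}^{(\epsilon)} \;\le\; \frac{N}{\alpha^{(\epsilon)}}\,\bigl(c_{r'}\epsilon\bigr)^{1/r'}\Bigl(\ex{\norm{\overline{\Q}_\perp}_2^r}\Bigr)^{1/r}
\;\le\; F_r\,\frac{\epsilon^{\,1-1/r}}{\alpha^{(\epsilon)}\delta^{(\epsilon)}},
\]
and with $\alpha^{(\epsilon)}\delta^{(\epsilon)}=\Omega(\epsilon^\beta)$ this is $O(\epsilon^{1-1/r-\beta})$, which vanishes once you pick $r > 1/(1-\beta)$. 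That already covers every $\beta\in[0,1)$. Your extra $1/\alpha^{(\epsilon)}$ (giving $1/\bigl(\alpha^2\delta\bigr)$ overall, hence the spurious $\beta<1/2$ barrier) is the error; the subsequent ``refinement'' involving the number of vanishing coordinates $k$ and conditioning on $\norm{\UU}_1$ is therefore unnecessary, and the paper does not use it. The paper's argument needs no sharpening of the projection geometry beyond the $N/\alpha$ factor already present in Theorem~\ref{thm:theorem_1}; only the swap of Cauchy--Schwarz for H\"older and the explicit $\delta$-dependence of the collapse constants are required.
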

\begin{proof}
	See Appendix \ref{sec:proof_prop_3}.
\end{proof}
As before, a geometric view of the result of Proposition \ref{prop:prop_3} is presented in Fig. \ref{fig:general}.
\begin{remark}
 It is worth pointing out that under the conditions of Proposition \ref{prop:prop_3}, state-space collapse in this case is different from the one defined in Eq. \eqref{eq:def_collapse}. This is because the moments of the distance between steady state and the cone $\mathcal{K}_{\alpha}$ will go to infinity as $\epsilon$ approaches zero rather than a constant bound in Eq. \eqref{eq:def_collapse}. This type of state-space collapse is sometimes called \textit{multiplicative state-space collapse} as in \cite{kang2009state},~\cite{bramson1998state},~\cite{wangIFIP}. As before, a geometric view of the result of Proposition \ref{prop:prop_3} is presented in Fig. \ref{fig:general}.
\end{remark}

\begin{figure}[t]
	\graphicspath{{./Figures/}}
	\centering
	\includegraphics[width=3.5in]{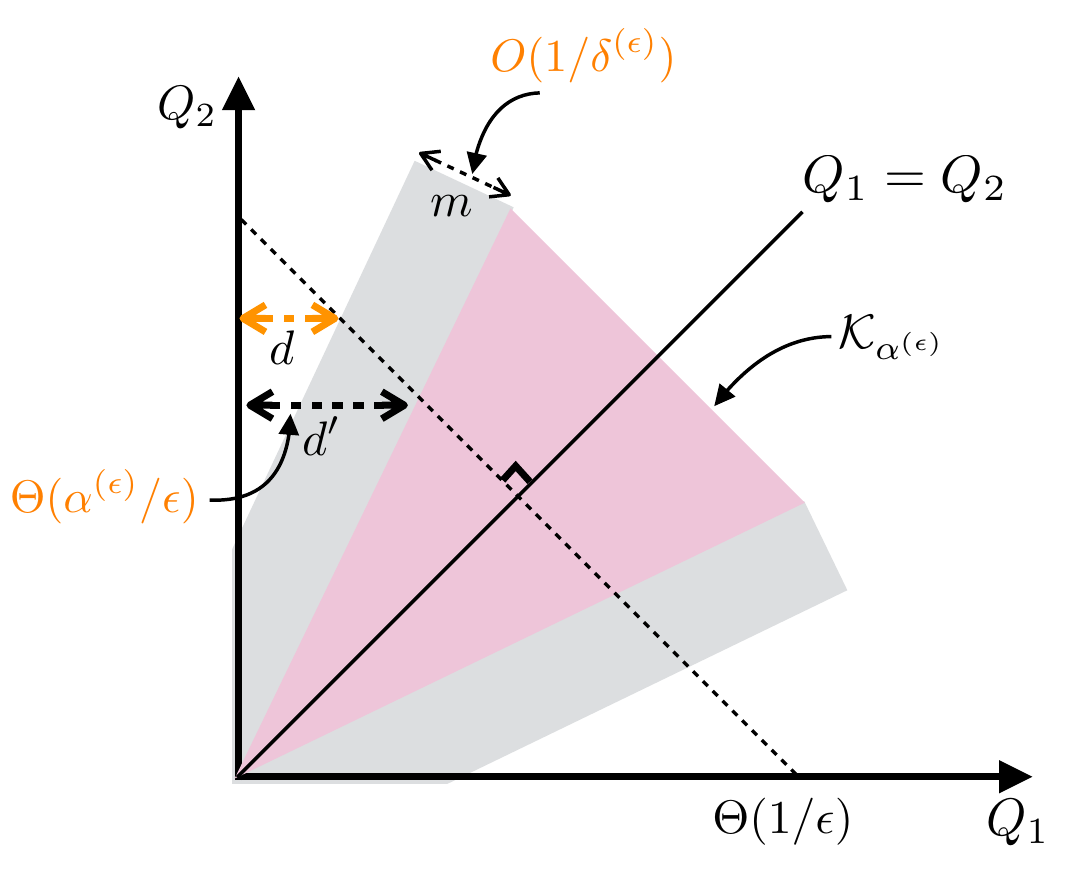}
	\caption{A geometric illustration of the result in Proposition \ref{prop:prop_3}.\normalfont{ As before, we use the gray area to represent the  distance between $\mathbf{Q}$ and the cone $\mathcal{K}_{\alpha}$. The dashed line represents the total tasks in the system and hence has order $1/\epsilon$. As in Fig. \ref{fig:insight}, in order to guarantee that all the servers are busy when there is substantial work in the system, the distance $d$ should be large enough when $\epsilon$ goes to zero. To this end, it is sufficient to require that the order of distance $m$ is smaller than that of the distance $d'$. It is easy to see $\alpha^{(\epsilon)}\delta^{(\epsilon)} = \Omega(\epsilon^\beta)$ for any $\beta \in [0,1)$ satisfies this requirement. }}\label{fig:general}
	\vspace{-4mm}
\end{figure}

% the reason why the queue length of the resource-pooled system serves as the lower bound for the sum queue lengths in the corresponding load balancing system is that 

% Before we end this section, we will discuss why we choose the particular form of the cone as defined in Eq. (). More specifically, we state the reason why a `circular cone' defined as follows does not serve the purpose. 

% \input{math_upper}
% \input{math_lower}
% \input{math_application}
% !TEX root = ./sig2018_winter.tex
\section{Proofs}
In this section, we present the proofs of Theorems \ref{thm:theorem_1} and \ref{thm:theorem_2}, respectively. 

\subsection{Proof of Theorem \ref{thm:theorem_1}}
\label{sec:proof_theorem_1}
Before we present the proof, we first introduce the following lemma.
\begin{lemma}
    \label{lem:unused_service}
    For any $\epsilon >0$ and $t\ge 0$, we have 
    \begin{equation*}
    \label{eq:QtimesU}
    Q_n^{(\epsilon)}(t+1)U_n^{(\epsilon)}(t) = 0.
    \end{equation*}
    Moreover, if the system has a finite first moment, then we have for some constants $c_1$ and $c_r$
 	\begin{equation*}
 	\label{eq:ule}
 		\ex{\big\lVert\overline{\UU}^{(\epsilon)}\big\rVert_1^2} \le c_1 \epsilon \text{ and } \ex{\big\lVert\overline{\UU}^{(\epsilon)}\big\rVert^r_r} \le c_r \epsilon,
 	\end{equation*}
 	where $r \in (1,\infty)$.
\end{lemma}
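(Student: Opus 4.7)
The plan is to prove the two assertions separately, both by direct bookkeeping on the queue dynamics~\eqref{eq:Qdynamic} and the definition of $U_n$.

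For the identity $Q_n^{(\epsilon)}(t+1) U_n^{(\epsilon)}(t) = 0$, I would just split on whether $U_n(t) > 0$. If $U_n(t) = 0$ the product vanishes trivially; otherwise the definition $U_n(t) = \max\{S_n(t) - Q_n(t) - A_n(t), 0\}$ forces $U_n(t) = S_n(t) - Q_n(t) - A_n(t)$, which substituted into~\eqref{eq:Qdynamic} yields $Q_n(t+1) = 0$. This is a one-line definitional check and requires nothing beyond~\eqref{eq:Qdynamic}.

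For the moment bounds, my first step is to pin down $\ex{\|\overline{\UU}^{(\epsilon)}\|_1}$ exactly. Finiteness of the first moment of $\overline{\Q}^{(\epsilon)}$, combined with the uniform bound $|A_n(t) - S_n(t) + U_n(t)| \le A_{\max} + 2 S_{\max}$, justifies taking expectations on both sides of~\eqref{eq:Qdynamic} in steady state. Stationarity gives $\ex{\overline{A}_n^{(\epsilon)}} - \mu_n + \ex{\overline{U}_n^{(\epsilon)}} = 0$, so summing over $n$ and using $\sum_n \ex{\overline{A}_n^{(\epsilon)}} = \lambda_\Sigma^{(\epsilon)} = \mu_\Sigma - \epsilon$ (every exogenous task is dispatched to some queue) produces $\ex{\|\overline{\UU}^{(\epsilon)}\|_1} = \epsilon$.

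The higher-order bounds then drop out from the a.s.\ inequality $U_n \le S_n \le S_{\max}$, which holds because $U_n = \max\{S_n - Q_n - A_n, 0\} \le S_n$. This gives $U_n^r \le S_{\max}^{r-1} U_n$ for $r > 1$, hence $\|\overline{\UU}^{(\epsilon)}\|_r^r \le S_{\max}^{r-1} \|\overline{\UU}^{(\epsilon)}\|_1$; and the crude estimate $\|\overline{\UU}^{(\epsilon)}\|_1 \le N S_{\max}$ yields $\|\overline{\UU}^{(\epsilon)}\|_1^2 \le N S_{\max} \|\overline{\UU}^{(\epsilon)}\|_1$. Taking expectations and plugging in the identity from the previous step supplies the constants $c_1 = N S_{\max}$ and $c_r = S_{\max}^{r-1}$. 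There is no genuinely hard step; the only subtle point is justifying the steady-state expectation identity, but the uniform boundedness of the one-slot increment together with the finite-first-moment assumption makes this routine.
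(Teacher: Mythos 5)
Your proof is correct and mirrors the paper's own argument: the identity follows by splitting on whether $U_n(t)>0$, the exact relation $\ex{\|\overline{\UU}^{(\epsilon)}\|_1}=\epsilon$ follows from stationarity of the $\ell_1$ norm of the queue vector (the paper phrases this as zero mean drift of $W_1(\Q)=\|\Q\|_1$, which is the same computation), and the higher moments drop out of $U_n\le S_{\max}$ exactly as in the paper, yielding the same constants $c_1=NS_{\max}$ and $c_r=S_{\max}^{r-1}$. The only cosmetic difference is in deriving $c_1$ — you use $\|\overline{\UU}\|_1\le NS_{\max}$ directly rather than the paper's route via $\|\overline{\UU}\|_1^2\le N\|\overline{\UU}\|_2^2$ — but both give the identical bound.
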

\begin{proof}
	According to the queues dynamic in Eq. \eqref{eq:Qdynamic}, we can see that when $U_n(t)$ is positive, $Q_n(t+1)$ must be zero, which directly implies the result $ Q_n^{(\epsilon)}(t+1)U_n^{(\epsilon)}(t) = 0$ for any $\epsilon >0$, $1\le n \le N$ and $t\ge 0$. To show the second result, let us consider the Lyapunov function $W_1(\Q(t)) \triangleq \norm{\Q(t)}_1$. Since the system has a finite first moment, the mean drift of $W_1(\Q)$ is zero in steady state, which gives 
	\begin{equation*}
		\ex{\big\lVert\overline{\UU}^{(\epsilon)}\big\rVert_1} = \epsilon.
	\end{equation*}
	Then, due to the fact that $U_n(t) \le S_{\max}$ for all $ 1 \le n \le N$ and $t \ge 0$, we have $\big\lVert \overline{\UU}^{(\epsilon)} \big\rVert^r_r \le (S_{\max})^{r-1}\big\lVert\overline{\UU}^{(\epsilon)}\big\rVert_1$, which implies that $c_r = (S_{\max})^{r-1}$. Note that $\big\lVert\overline{\UU}^{(\epsilon)}\big\rVert_1^2 \le N \big\lVert\overline{\UU}^{(\epsilon)}\big\rVert_2^2 $, which gives $c_1 = NS_{\max}$.
\end{proof}

Now we are ready to prove Theorem \ref{thm:theorem_1}.

\begin{proof}[Proof of Theorem \ref{thm:theorem_1}]
	We prove this theorem by combining the following lemma with the condition of state-space collapse to the cone $\mathcal{K}_{\alpha}$. The proof of the lemma is relegated to Appendix \ref{sec:proof_of_claim_1}.
	
\begin{lemma}
\label{claim_1}
	For a throughput optimal policy, it is heavy-traffic delay optimal if and only if 
	\begin{equation}
	\label{eq:necessary}
		\lim_{\epsilon \downarrow 0}\ex{\big\lVert\overline{\Q}^{(\epsilon)}(t+1) \big\rVert_1 \big\lVert\overline{\UU}^{(\epsilon)}(t) \big\rVert_1} = 0.
	\end{equation}
\end{lemma}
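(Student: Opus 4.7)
The plan is to derive a single steady-state identity that links $\epsilon\,\ex{\norm{\overline{\Q}^{(\epsilon)}}_1}$ to the cross moment appearing in \eqref{eq:necessary}, and then combine that identity with the universal lower bound of Lemma~\ref{lem:lower_bound} to obtain both directions of the ``if and only if.''

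To produce the identity, I would use the Lyapunov function $V(\Q)=\norm{\Q}_1^2$ together with the scalar form of the queue dynamics obtained by summing~\eqref{eq:Qdynamic}, namely $\norm{\Q(t+1)}_1 = \norm{\Q(t)}_1 + Y(t)$ where $Y(t)\triangleq A_\Sigma(t)-S_\Sigma(t)+\norm{\UU(t)}_1$. Throughput optimality guarantees that all moments of $\norm{\overline{\Q}^{(\epsilon)}}_1$ are finite, so the steady-state drift of $V$ vanishes. Squaring, taking expectation, and using independence of $(A_\Sigma(t),S_\Sigma(t))$ from $\Q(t)$ to evaluate $\ex{\norm{\overline{\Q}^{(\epsilon)}}_1(A_\Sigma-S_\Sigma)}=-\epsilon\,\ex{\norm{\overline{\Q}^{(\epsilon)}}_1}$ yields
\[
2\epsilon\,\ex{\norm{\overline{\Q}^{(\epsilon)}}_1} \;=\; 2\,\ex{\norm{\overline{\Q}^{(\epsilon)}(t)}_1\,\norm{\overline{\UU}^{(\epsilon)}(t)}_1}\;+\;\ex{Y(t)^2}.
\]

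The delicate algebraic step is to swap $\Q(t)$ for $\Q(t+1)$ inside the cross moment, so that its form matches \eqref{eq:necessary}. I would substitute $\norm{\Q(t)}_1=\norm{\Q(t+1)}_1-Y(t)$ inside the product and then invoke the identity $Y^2-2Y\norm{\UU}_1=(Y-\norm{\UU}_1)^2-\norm{\UU}_1^2=(A_\Sigma-S_\Sigma)^2-\norm{\UU}_1^2$ to make the awkward term $\ex{(A_\Sigma-S_\Sigma)\norm{\UU}_1}$ cancel; note that this term cannot be handled by independence, since $\UU(t)$ depends on $A(t),S(t)$ through the queue update. Collecting $\ex{(A_\Sigma-S_\Sigma)^2}=(\sigma_\Sigma^{(\epsilon)})^2+\nu_\Sigma^2+\epsilon^2$, the resulting identity is
\[
\epsilon\,\ex{\norm{\overline{\Q}^{(\epsilon)}}_1} \;=\; \ex{\norm{\overline{\Q}^{(\epsilon)}(t+1)}_1\,\norm{\overline{\UU}^{(\epsilon)}(t)}_1}\;+\;\tfrac{1}{2}\Bigl((\sigma_\Sigma^{(\epsilon)})^2+\nu_\Sigma^2+\epsilon^2-\ex{\norm{\overline{\UU}^{(\epsilon)}}_1^2}\Bigr).
\]

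To finish, I would invoke Lemma~\ref{lem:unused_service} to get $\ex{\norm{\overline{\UU}^{(\epsilon)}}_1^2}\le c_1\epsilon\to 0$, and recall $(\sigma_\Sigma^{(\epsilon)})^2\to\sigma_\Sigma^2$, so the parenthesized quantity tends to $\zeta$. Passing $\epsilon\downarrow 0$ in the identity then shows $\limsup_{\epsilon\downarrow 0}\epsilon\,\ex{\norm{\overline{\Q}^{(\epsilon)}}_1}=\limsup_{\epsilon\downarrow 0}\ex{\norm{\overline{\Q}^{(\epsilon)}(t+1)}_1\norm{\overline{\UU}^{(\epsilon)}(t)}_1}+\zeta/2$, and the same holds for $\liminf$. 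Since the cross moment is nonnegative, combining with $\liminf\epsilon\,\ex{\norm{\overline{\Q}^{(\epsilon)}}_1}\ge\zeta/2$ from Lemma~\ref{lem:lower_bound} promotes the one-sided optimality inequality $\limsup\le\zeta/2$ to the true equality and establishes both directions. I expect the principal obstacle to be the bookkeeping in the third paragraph: the cross term $\ex{(A_\Sigma-S_\Sigma)\norm{\UU}_1}$ is genuinely dependent and must be eliminated via the $\Q(t)\mapsto\Q(t+1)$ substitution rather than by independence, so getting every constant to line up so that exactly $\zeta/2$ survives in the limit is the place where an error would most easily creep in.
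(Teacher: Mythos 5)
Your proof is correct and follows essentially the same route as the paper: the same Lyapunov function $V(\Q)=\norm{\Q}_1^2$, the same zero-drift identity in steady state, the same bound on $\ex{\norm{\overline{\UU}^{(\epsilon)}}_1^2}$ from Lemma~\ref{lem:unused_service}, and the same passage to the limit. The only cosmetic differences are that you organize the algebraic cancellation via the scalar substitution $\norm{\Q(t)}_1=\norm{\Q(t+1)}_1-Y(t)$ rather than expanding the square directly, and you invoke Lemma~\ref{lem:lower_bound} in the ``only if'' direction, which is not actually needed since the nonnegativity of the cross moment already closes that direction from the identity alone.
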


	Next, we will show that under the condition that the state space collapses to a cone $\mathcal{K}_{\alpha}$ with $\alpha \in (0,1]$, the condition in Eq. \eqref{eq:necessary} holds. For brevity, we will omit the references $t$ and $\epsilon$, and use $\overline{\Q}^+$ to denote $\overline{\Q}(t+1)$ in the following. First, we have 
	\begin{align}
	\label{eq:cross_term}
		\mathcal{T}^{(\epsilon)} &\triangleq \ex{\big\lVert\overline{\Q}^{(\epsilon)}(t+1) \big\rVert_1 \big\lVert\overline{\UU}^{(\epsilon)}(t) \big\rVert_1}\nonumber \\
		& = \ex{\sum_{i=1}^N \overline{U}_i \left(\sum_{j=1}^N \overline{Q}_j^+ \right) }\nonumber\\
		& = \ex{\sum_{i=1}^N \overline{U}_i \left(\sum_{j=1}^N \left(\overline{Q}_{\parallel j}^+  +  \overline{Q}_{\perp j}^+ \right) \right) },
	\end{align}
	where $\overline{Q}_{\parallel j}^+$ is the $j$-th component of $(\overline{\Q}^+)_\parallel$ and similarly $\overline{Q}_{\perp j}^+$ is the $j$-th component of $(\overline{\Q}^+)_\perp$. For simplicity, we use $\overline{\Q}_\parallel^+$ to denote $(\overline{\Q}^+)_\parallel$ and $\overline{\Q}_\perp^+$ to denote $(\overline{\Q}^+)_\perp$, respectively. Since the vector $\overline{\Q}_\parallel^+$ is in cone $\mathcal{K}_{\alpha}$ by definition, there exist non-negative weights $w_1,\ldots,w_N$ such that $\overline{\Q}_\parallel^+ = \sum w_n \mathbf{b}^{(n)}$. Recall that when $U_n(t) > 0$, $Q_n(t+1) = 0$ by Lemma \ref{lem:unused_service}. Thus, when $\overline{U}_i(t) > 0$, we have 
	\begin{align}
		\overline{Q}_i^+ &= 0 \nonumber\\
		\overline{Q}_{\parallel i}^+  &= - \overline{Q}_{\perp i}^+\nonumber\\
		\sum w_n b^{(n)}_i & = \overline{Q}_{\parallel i}^+ \nonumber\\
		\sum w_n b^{(n)}_j &  = \overline{Q}_{\parallel j}^+  \le \frac{1}{\alpha}\overline{Q}_{\parallel i}^+ \text{ for all } j \neq i \nonumber
	\end{align}
	The last inequality follows from the definition of vector $\mathbf{b}^{(n)}$. Therefore, the term $\mathcal{T}$ in Eq. \eqref{eq:cross_term} can be upper bounded as follows.
	\begin{align}
	\label{eq:upper_T}
		\mathcal{T}^{(\epsilon)} &\lep{a} \ex{\sum_i \overline{U}_i \left( -N_1 \overline{Q}_{\perp i}^+  + \sum_j \overline{Q}_{\perp j}^+ \right)}\nonumber\\
		& = \ex{\inner{\overline{\UU}}{-N_1\overline{\Q}^+_{\perp}}} + \ex{\inner{\overline{\UU}} {   \inner{\mathbf{1}}{\overline{\Q}^+_{\perp}} \mathbf{1}}}\nonumber\\
		& \lep{b} \ex{\inner{\overline{\UU}}{-N_1\overline{\Q}^+_{\perp}}}\nonumber\\
		& \lep{c} N_1 \sqrt{\ex{\norm{\overline{\UU}}^2 }  \ex{\norm{\overline{\Q}^+_{\perp}}^2}}.\nonumber\\
		& \lep{d} N_1 \sqrt{c_2\epsilon M_2}
	\end{align}
	where in (a) $N_1 = N/\alpha$; (b) comes from the non-negativity of $\overline{\UU}$ and  the fact that $\inner{\mathbf{1}}{\overline{\Q}^+_{\perp}} \le 0$ since $\mathbf{1} \in \mathcal{K}_{\alpha}$ and ${\overline{\Q}^+_{\perp}} \in \mathcal{K}_{\alpha}^{\circ}$; (c) is the result of Cauchy-Schwartz inequality for random vectors; (d) holds because of Lemma \ref{lem:unused_service} and the definition of state-space collapse in Eq. \eqref{eq:def_collapse} combined with the fact that $\Q(t+1)$ and $\Q(t)$ have the same distribution in steady-state. Since $c_2$, $M_2$ and $N_1$ are all constants that are independent of $\epsilon$, we have $\lim_{\epsilon \to 0} \mathcal{T}^{(\epsilon)} = 0$, which establishes the result in Eq. \eqref{eq:necessary}, and hence heavy-traffic delay optimality.
	\end{proof}

\subsection{Proof of Theorem \ref{thm:theorem_2}}
\label{sec:proof_theorem_2}
As already pointed out, the proof of Theorem \ref{thm:theorem_2} naturally falls into two parts: throughput optimality and state-space collapse. 
Then, it follows directly from Theorem \ref{thm:theorem_1} that the result in Theorem \ref{thm:theorem_2} is true. In both proofs, we will use the Lyapunov drift-based approach developed in \cite{eryilmaz2012asymptotically} to derive bounded moments in steady state. The following lemma is a $T$-step version of Lemmas 2 and 3 in \cite{maguluri2016heavy}. This lemma could be proved by simply replacing the one-step transition probability to $T$-step transition probability, and hence we omit the proof here.
\begin{lemma}
	      \label{lem:basis}
	        For an irreducible aperiodic and positive recurrent Markov chain $\{X(t), t \ge 0\}$ over a countable state space $\mathcal{X}$, which converges in distribution to $\overline{X}$,  and suppose $V: \mathcal{X} \rightarrow \mathbb{R}_{+}$ is a Lyapunov function. We define the $T$ time-slot drift of $V$ at $X$ as 
	        \[\Delta V(X)\triangleq [V(X(t_0+T)) - V(X(t_0))] \mathcal{I}(X(t_0) = X),\]
	        where $\mathcal{I}(.)$ is the indicator function. Suppose for some positive finite integer $T$, the $T$ time-slot drift of $V$ satisfies the following conditions:

	        \begin{itemize}
	          \item (C1) There exists an $\eta> 0$ and a $\kappa <  \infty$ such that for any $t_0 = 1,2,\ldots$ and for all $X \in \mathcal{X}$ with $V(X)\ge \kappa$, 
	          \[\mathbb{E}\left[\Delta V(X) \mid X(t_0) = X\right]\le -\eta.\]
	          \item (C2) There exists a constant $D < \infty$ such that for all $X\in \mathcal{X}$,
	          \[\mathbb{P}(|\Delta V(X)| \le D) = 1.\]
	        \end{itemize}

	        Then $\{V(X(t)), t\ge0\}$ converges in distribution to a random variable $\overline{V}$, and all moments of $\overline{V}$ exist and are finite. More specifically, we have for any $r = 1,2,\ldots$
	        \begin{equation}
	        \label{eq:upper_siva}
	        	\ex{V(\overline{X})^r} \le (2\kappa)^r + (4D)^r\left(\frac{D+\eta}{\eta} \right)^r r!.
	        \end{equation}
\end{lemma}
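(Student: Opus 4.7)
The approach is to reduce the $T$-step drift hypothesis to a one-step drift hypothesis for a sampled version of the chain, and then invoke a standard one-step Foster-Lyapunov moment bound. Specifically, I would introduce the sampled Markov chain $Y(k) \triangleq X(kT)$, $k = 0,1,2,\ldots$, whose transition kernel is $P^T$, the $T$-fold composition of the transition kernel $P$ of $X$. By construction the one-step drift of $V$ along $Y$ equals the $T$-step drift of $V$ along $X$, so conditions (C1) and (C2) translate exactly into the hypotheses of the classical (one-step) Hajek-type moment bound applied to $Y$: uniform negative drift of size at least $\eta$ outside the sub-level set $\{V \le \kappa\}$, together with almost-surely bounded increments of size $D$.

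Before invoking that bound I would check that $Y$ is itself irreducible, aperiodic, and positive recurrent with the same stationary distribution as $X$. Since $X$ is irreducible and aperiodic, $P^n(x,x') > 0$ for every pair $(x,x')$ and every sufficiently large $n$, which makes $P^T$ irreducible and aperiodic as well. The invariant distribution $\pi$ of $X$ is automatically invariant for $P^T$, so by uniqueness $Y(k)$ converges in distribution to the same limit $\overline{X}$, and in particular $V(Y(k)) \Rightarrow V(\overline{X})$.

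With this reduction in place, the standard one-step argument (as in Lemmas 2 and 3 of the Maguluri et al. paper cited in the excerpt, which themselves derive from Hajek's lemma) applied to $Y$ yields an exponential tail of the form $\mathbb{P}(V(\overline{Y}) > 2\kappa + mD) \le \rho^m$ for an explicit $\rho \in (0,1)$ arising from a moment-generating-function bound that balances the drift $\eta$ against the jump size $D$, giving a rate of roughly $\rho \sim \eta/(D+\eta)$. Converting this geometric tail into moments via $\mathbb{E}[Z^r] = r\int_0^{\infty} z^{r-1}\mathbb{P}(Z > z)\,dz$ produces the two additive contributions in Eq.~\eqref{eq:upper_siva}: the term $(2\kappa)^r$ bounds the contribution from the sub-level set $\{V \le 2\kappa\}$, while the sum $\sum_m (2\kappa + mD)^r \rho^m$ is dominated, after standard manipulation of geometric series, by $(4D)^r \big((D+\eta)/\eta\big)^r r!$.

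The main obstacle in this plan is the reduction step: one needs aperiodicity of $X$ to guarantee that $P^T$ is aperiodic and hence still has $\pi$ as its \emph{unique} invariant distribution; without aperiodicity, the sampled chain could decompose into periodic classes and the moment bound for $\overline{Y}$ would not transfer cleanly to $\overline{X}$. Beyond this structural point, the remaining work is careful bookkeeping of constants, tracking how the tail rate $\rho$ depends on $D$ and $\eta$ in order to recover exactly the prefactors $(4D)^r$ and $\bigl((D+\eta)/\eta\bigr)^r r!$ appearing in the stated bound; that calculation is routine but must be carried out explicitly to justify the precise numerical form of Eq.~\eqref{eq:upper_siva}.
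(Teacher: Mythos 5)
Your proposal is correct and is essentially the same route the paper intends: the paper omits the proof entirely, remarking only that the result follows by replacing the one-step transition kernel with the $T$-step kernel in Lemmas 2--3 of Maguluri et al., and your sampled-chain construction $Y(k)=X(kT)$ is precisely the standard formalization of that remark. The checks you flag (irreducibility and aperiodicity of $P^T$, invariance of $\pi$ under $P^T$, and the transfer of moments from $\overline{Y}$ to $\overline{X}$) are exactly the bookkeeping needed to make the paper's one-line justification rigorous.
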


% !TEX root = ./sig2018_winter.tex

\subsubsection{Throughput Optimality} We would  prove the following result in this subsection. 
\begin{proposition}
\label{prop:throughput}
	Under the condition of Theorem \ref{thm:theorem_2}, the given policy is throughput optimal. 
\end{proposition}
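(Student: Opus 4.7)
The plan is to apply the $T$-step Foster--Lyapunov criterion of Lemma \ref{lem:basis} with the linear Lyapunov function $V(\mathbf{Q}) = \|\mathbf{Q}\|_1$, identifying the required $T$ through an analysis of the corresponding fluid model. The one-step stochastic drift satisfies
\[\mathbb{E}\!\left[V(\mathbf{Q}(t+1)) - V(\mathbf{Q}(t)) \mid \mathbf{Q}(t)\right] = -\epsilon + \mathbb{E}\!\left[\textstyle\sum_{n} U_n(t) \mid \mathbf{Q}(t)\right],\]
which can be strictly positive whenever several queues are empty (e.g.\ near the coordinate axes), so a one-step argument will not work.

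In the fluid model with the same Lyapunov function $V(\mathbf{q}) = \|\mathbf{q}\|_1$ one has $\dot V = -\epsilon + \sum_n u_n$. The key observation is that conditions (\ref{eq:condition_1})--(\ref{eq:condition_2}), together with $\sum_{n=1}^N \Delta_n = 0$, imply $\sum_{n=1}^k \Delta_n \ge \delta$ for every $k \in \{1,\ldots,N-1\}$: for $k < k^*$ this follows from $\Delta_1 \ge \delta$ and $\Delta_n \ge 0$, while for $k \ge k^*$ it follows from $-\Delta_N \ge \delta$ and $\Delta_n \le 0$ on the tail. If $k$ queues are empty, they occupy the shortest $k$ ranks and collectively receive arrival rate $\sum_{n=1}^k P_n \lambda_\Sigma \ge (k/N + \delta)\lambda_\Sigma$, strictly above their collective service rate $k\mu_\Sigma/N$ by at least $\delta\lambda_\Sigma - k\epsilon/N > 0$ for $\delta$ fixed and $\epsilon$ small. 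Consequently no queue idles in the fluid limit, $\dot V \equiv -\epsilon$, and $V(\mathbf{q}(T)) = V(\mathbf{q}(0)) - \epsilon T$ as long as $V > 0$.

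Next I will lift this fluid inequality to a $T$-step stochastic bound. Using the boundedness of per-slot increments ($A_{\max}$ and $N S_{\max}$) together with a Dai--Meyn-style deviation-from-fluid argument, one obtains, for any fixed $T$ and a sufficiently large threshold $\kappa$,
\[\mathbb{E}\!\left[V(\mathbf{Q}(t_0+T)) - V(\mathbf{Q}(t_0)) \mid \mathbf{Q}(t_0)\right] \le -\tfrac{T\epsilon}{2}\]
whenever $V(\mathbf{Q}(t_0)) \ge \kappa$, verifying condition (C1) of Lemma \ref{lem:basis} with $\eta = T\epsilon/2$. Condition (C2) is immediate from $|V(\mathbf{Q}(t_0+T)) - V(\mathbf{Q}(t_0))| \le T(A_{\max} + N S_{\max})$. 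Lemma \ref{lem:basis} then yields positive recurrence and $\mathbb{E}[V(\overline{\mathbf{Q}})^r] < \infty$ for every $r$, and since $\|\overline{\mathbf{Q}}\| \le V(\overline{\mathbf{Q}})$, all moments of $\|\overline{\mathbf{Q}}^{(\epsilon)}\|$ are finite, which is throughput optimality in the sense of Definition 1.

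The hardest part is justifying the stochastic upgrade of the fluid equality $\dot V \equiv -\epsilon$: the per-slot expectation $\mathbb{E}[U_n(t) \mid Q_n(t) = 0]$ typically strictly exceeds the fluid rate $(\mu_n - \lambda_n)^+ = 0$ because of integer jumps and unfavorable tie-breaking in the stochastic system, so the collective-rate argument above only holds on average. Showing that the excess unused service averages out over $T$ slots, uniformly over all states with $V(\mathbf{Q}(t_0)) \ge \kappa$, is the most delicate step; I expect a pathwise comparison with the fluid solution, leveraging the bounded per-slot increments to absorb the $O(\sqrt{T})$ fluctuations, to be the cleanest route.
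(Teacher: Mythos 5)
Your overall architecture matches the paper's: take $V(\mathbf{Q}) = \|\mathbf{Q}\|_1$, establish a strictly negative fluid drift, transfer this to a $T$-step drift bound in the stochastic system, and close with Lemma~\ref{lem:basis}. The cumulative-sum observation $\sum_{n=1}^{m}\Delta_n\ge\delta$ for all $m\in\{1,\dots,N-1\}$ is correct and a nice strengthening of what the paper actually uses (the paper only needs $\sum_{n=m+1}^N\Delta_n\le 0$).

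However, the claim that ``no queue idles in the fluid limit, $\dot V\equiv-\epsilon$'' is too strong and, as you yourself flag, only valid for $\epsilon$ small (you need $\lambda_\Sigma\delta>\epsilon\sum_{n\le m}\mu_{\sigma_t(n)}/\mu_\Sigma$). Throughput optimality (Definition~1) must hold for \emph{every} $\epsilon>0$ in the capacity region, so this restriction is a genuine gap, not just a technicality. The irony is that your own inequality $\sum_{n=m+1}^N\Delta_n\le-\delta$ already yields, via Eq.~\eqref{eq:derivative} and Claim~\ref{claim_2}, $\dot V \le \lambda_\Sigma\sum_{n=m+1}^N\Delta_n - \epsilon\sum_{n=m+1}^N\mu_{\sigma_t(n)}/\mu_\Sigma \le -\lambda_\Sigma\delta - \epsilon\mu_{\min}/\mu_\Sigma<0$ whenever $1\le m<N$ queues are empty, for \emph{all} $\epsilon\in(0,\mu_\Sigma)$ — no contradiction argument needed, and it is even sharper than the paper's $-\epsilon\mu_{\min}/\mu_\Sigma$. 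You should use this directly and set $\eta$ accordingly (it will not be $T\epsilon/2$ in general), rather than trying to prove the fluid state never touches the boundary.

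Two further remarks on the ``stochastic upgrade'' you identify as the hardest step. First, you silently invoke a nontrivial bridge between the discrete routing fractions $P_n(t)$ and the fluid derivatives $g_i'(t)$; this is exactly Claim~\ref{claim_2} in the paper, which itself rests on Lemmas~\ref{lemma:fluid model}--\ref{lemma:fluid limit} and the geometric fact that any state with one queue at zero and another positive lies outside $\mathcal{K}_\alpha$ for $\alpha>0$ — this needs to be stated and proved, not assumed. Second, the paper does not take the pathwise/strong-approximation route you sketch: it uses the Dai-style argument, i.e., a.s.\ u.o.c.\ convergence of the scaled process to the fluid limit, followed by dominated convergence to pass from $\limsup_x \frac{1}{x}\|\mathcal{X}^{(x)}(xT)\|_1 \le 1-\xi$ (a.s.) to $\limsup_x \mathbb{E}[\frac{1}{x}\|\mathcal{X}^{(x)}(xT)\|_1]\le 1-\xi$, and then reads off the multi-step drift bound \eqref{eq:drift_xT}. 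This distributional route avoids the delicate pathwise control of unused service that you worry about. Either route can work, but they are different arguments, and you have written down neither in detail.
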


We would prove this result by combining fluid approximations with stochastic Lyapunov theory. Thus, let us first introduce some necessary notations and useful lemmas. In order to distinguish from stochastic analysis, we define $\mathcal{X} \triangleq (\mathcal{X}(t),t=0,1,2,\ldots)$, in which $\mathcal{X}(t) \triangleq (Q_1(t), Q_2(t), \ldots, Q_n(t) )$ in fluid domain. Then under our assumption and the queue-length based policy, $\mathcal{X}  = (\mathcal{X}(t),t=0,1,2,\ldots)$ is a discrete-time countable Markov chain. That is, the system state is denoted by $\mathcal{X}$ in the fluid approximation analysis.
To establish the fluid model of $\mathcal{X}$, we need several notations. Let us define the norm of $\mathcal{X}(t)$ as $\norm{\mathcal{X}(t)}_1 \triangleq \sum_{n=1}^N Q_n(t)$. Let $\mathcal{X}^{(x)}$ denote a process $\mathcal{X}$ with an initial state satisfying 
\begin{equation}
\label{eq:norm}
	\norm{\mathcal{X}^{(x)}(0)}_1 = x.
\end{equation}
Let $\mathcal{A}_i(t)$ and $\mathcal{D}_i(t)$ denote the \textit{accumulated} arrival and actual departure tasks at queue $i$ up to time-slot $t$, respectively. $\mathcal{A}_{\Sigma}(\tau)$ denotes the \textit{accumulated} exogenous arrivals for a given $\tau$ units of time-slots. $\mathcal{S}_i(\tau)$ denotes the \textit{accumulated} offered service for queue $i$ during a given $\tau$ units of time-slots. Moreover, let $\mathcal{G}_i(t)$ denote the accumulated number of time-slots up to time-slot $t$ in which the new arrivals are routed to queue $i$, and let $\mathcal{B}_i(t) \triangleq \sum_0^t 1\{Q_i(s)>0\}$ denote the accumulated number of time-slots up to time-slot $t$ in which queue $i$ is busy. We also adopt the convention that $\mathcal{A}_i(0) = 0$, $\mathcal{D}_i(0) = 0$, $\mathcal{G}_i(0) = 0$  and $\mathcal{B}_i(0) = 0$.  Therefore, we have $\mathcal{A}_i(t) = \mathcal{A}_{\Sigma}(G_i(t)) \le \mathcal{A}_{\Sigma}(t)$ and $\mathcal{D}_i(t) = \mathcal{S}_i(\mathcal{B}_i(t)) \le \mathcal{S}_i(t)$. Then the queue length $Q_i$ can be described in an alternative form as follows
\begin{equation}
\label{eq:q_dynamic}
	Q_i(t) = Q_i(0) + \mathcal{A}_i(t) - \mathcal{D}_i(t).
\end{equation}

Let us define another process $\mathcal{Y} \triangleq (Q,\mathcal{A},\mathcal{D},\mathcal{A}_{\Sigma},\mathcal{S},\mathcal{G},\mathcal{B})$, i.e., a tuple that denotes a list of processes, and clearly, a sample path of $\mathcal{Y}^{(x)}$ uniquely determines the sample path of $\mathcal{X}^{(x)}$. Then, we extend the definition of $\mathcal{Y}$ to each continuous time $t \ge 0$ as $\mathcal{Y}^{(x)}(t) \triangleq \mathcal{Y}^{(x)}(\lfloor t \rfloor)$.
Recall that a sequence of functions $f_n(\cdot)$ is said to converge to a function $f(\cdot)$ uniformly over compact (u.o.c) interval if for all $t \ge 0$, $\lim_{n \to \infty}\sup_{0 \le t^{\prime} \le t} \left|f_n(t^{\prime}) - f(t^{\prime})\right| = 0$. We now consider a sequence of process $\left\{ \frac{1}{x_n}\mathcal{Y}^{(x_n)}(x_n\cdot) \right\}$, which is scaled both in time and space, and show the convergence properties of the sequence in the following lemma.
\begin{lemma}
\label{lemma:fluid model}
	With probability one, for any sequence of the process $\{\frac{1}{x_n}\mathcal{Y}^{(x_n)}(x_n \cdot)\}$, where $x_n$ is a sequence of positive integers with $x_n \to \infty$, there exists a subsequence $x_{n_k}$ with $x_{n_k} \to \infty$ as $k \to \infty$ such that the following u.o.c convergences hold:
	\begin{equation}
	\label{eq:qq}
		 \frac{1}{x_{n_k}}{Q}_i^{(x_{n_k})}(x_{n_k} t) \to q_i(t)
	\end{equation}
	\begin{equation}
	\label{eq:ai}
		 \frac{1}{x_{n_k}}{\mathcal{A} }_i^{(x_{n_k})}(x_{n_k} t) \to a_i(t)
	\end{equation}
	\begin{equation}
	\label{eq:di}
		 \frac{1}{x_{n_k}}{  \mathcal{D}  }_i^{(x_{n_k})}(x_{n_k} t) \to d_i(t)
	\end{equation}
	\begin{equation}
	\label{eq:sigma}
		 \frac{1}{x_{n_k}}{ \mathcal{A}  }_{\Sigma}^{(x_{n_k})}(x_{n_k} t) \to a_{\Sigma}(t)
	\end{equation}
	\begin{equation}
	\label{eq:s}
		 \frac{1}{x_{n_k}}{  \mathcal{S}   }_{i}^{(x_{n_k})}(x_{n_k} t) \to s_{i}(t)
	\end{equation}
	\begin{equation}
	\label{eq:g}
		 \frac{1}{x_{n_k}}{  \mathcal{G}  }_{i}^{(x_{n_k})}(x_{n_k} t) \to g_{i}(t)
	\end{equation}
	\begin{equation}
	\label{eq:b}
		 \frac{1}{x_{n_k}}{ \mathcal{B}  }_{i}^{(x_{n_k})}(x_{n_k} t) \to b_{i}(t)
	\end{equation}
	where $q_i$, $a_i$, $d_i$, $a_{\Sigma}$, $s_i$, $g_i$ and $b_i$ are some Lipschitz continuous functions in $[0,\infty)$. Hence all the functions are differentiable at almost every time $ t \in [0,\infty)$, which is called \textit{regular time}.
\end{lemma}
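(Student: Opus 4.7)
The plan is to apply the standard fluid-limit machinery: extract a convergent subsequence via Arzela--Ascoli after establishing uniform Lipschitz-type bounds on each scaled component, with the primitive (arrival and service) processes controlled by the functional strong law of large numbers (FSLLN).

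The first step will be to treat the exogenous primitives $\mathcal{A}_\Sigma$ and $\mathcal{S}_i$. Since $A_\Sigma(t)$ and $S_i(t)$ are i.i.d.\ across time slots with finite support (bounded by $A_{\max}$ and $S_{\max}$), the FSLLN gives, almost surely and u.o.c.,
\begin{equation*}
\tfrac{1}{x_n}\mathcal{A}_\Sigma^{(x_n)}(x_n t)\to \lambda_\Sigma t,\qquad \tfrac{1}{x_n}\mathcal{S}_i^{(x_n)}(x_n t)\to \mu_i t,
\end{equation*}
so identify $a_\Sigma(t)=\lambda_\Sigma t$ and $s_i(t)=\mu_i t$, both Lipschitz. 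Next, I will handle the policy-dependent counters $\mathcal{G}_i$ and $\mathcal{B}_i$. Because exactly one dispatching decision is made per slot and $\mathcal{B}_i$ is an indicator sum, both are non-decreasing with per-slot increments in $\{0,1\}$, so the scaled processes $\tfrac{1}{x_n}\mathcal{G}_i^{(x_n)}(x_n\cdot)$ and $\tfrac{1}{x_n}\mathcal{B}_i^{(x_n)}(x_n\cdot)$ are $1$-Lipschitz and uniformly bounded on compacts. Arzela--Ascoli (applied on $[0,T]$ for each integer $T$ and combined by a diagonal argument) then yields a subsequence $x_{n_k}$ along which both converge u.o.c.\ to Lipschitz continuous limits $g_i(t)$ and $b_i(t)$, for every $i\in\mathcal{N}$ simultaneously.

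Given these limits, I will derive the remaining convergences by composition and from the queue equation. Using the random time change $\mathcal{A}_i(t)=\mathcal{A}_\Sigma(\mathcal{G}_i(t))$ and $\mathcal{D}_i(t)=\mathcal{S}_i(\mathcal{B}_i(t))$, combined with the u.o.c.\ convergence of the primitives and the continuity of $g_i,b_i$, a standard random-time-change lemma gives
\begin{equation*}
\tfrac{1}{x_{n_k}}\mathcal{A}_i^{(x_{n_k})}(x_{n_k}t)\to \lambda_\Sigma\, g_i(t)=:a_i(t),\qquad \tfrac{1}{x_{n_k}}\mathcal{D}_i^{(x_{n_k})}(x_{n_k}t)\to \mu_i\, b_i(t)=:d_i(t),
\end{equation*}
so $a_i,d_i$ are Lipschitz. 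The initial conditions satisfy $\tfrac{1}{x_n}\|\mathcal{X}^{(x_n)}(0)\|_1=1$, so $\tfrac{1}{x_n}Q_i^{(x_n)}(0)\in[0,1]$ and passing to a further subsequence (absorbed into $x_{n_k}$) produces a limit $q_i(0)\in[0,1]$. Then \eqref{eq:q_dynamic} yields $q_i(t)=q_i(0)+a_i(t)-d_i(t)$, which inherits Lipschitz continuity. Since all functions involved are Lipschitz on $[0,\infty)$, they are differentiable a.e., which is precisely the notion of regular time.

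The main obstacle will be justifying the composition step rigorously and coordinating the seven convergences along a single subsequence. For the composition, I will use the elementary fact that if $f_n\to f$ u.o.c.\ with $f$ continuous and $h_n\to h$ u.o.c.\ with $h$ continuous and non-decreasing taking values in the domain of the $f_n$, then $f_n\circ h_n\to f\circ h$ u.o.c.; this applies since $\mathcal{G}_i,\mathcal{B}_i$ are non-decreasing and bounded by $t$. For the joint convergence I will use the standard diagonal argument: first extract a subsequence along which the FSLLN convergences hold on a probability-one event, then successively thin to ensure each of the $2N$ policy-dependent processes converges, obtaining a single subsequence $x_{n_k}$ valid for all seven assertions.
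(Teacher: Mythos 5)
Your proof is correct and follows essentially the same structure as the paper in most places (FSLLN for the primitives $\mathcal{A}_\Sigma,\mathcal{S}_i$; Arzela--Ascoli on the 1-Lipschitz counters $\mathcal{G}_i,\mathcal{B}_i$; initial condition plus queue dynamics for $Q_i$), but you treat $\mathcal{A}_i$ and $\mathcal{D}_i$ by a genuinely different route. The paper establishes Eqs.~\eqref{eq:ai} and~\eqref{eq:di} exactly as it does Eqs.~\eqref{eq:g} and~\eqref{eq:b}: it observes that the per-slot increments of $\mathcal{A}_i$ and $\mathcal{D}_i$ are bounded by $A_{\max}$ and $S_{\max}$ respectively, so the scaled processes are uniformly Lipschitz, and then invokes Arzela--Ascoli once more to extract a further subsequence and a Lipschitz limit. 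You instead write $\mathcal{A}_i(t)=\mathcal{A}_\Sigma(\mathcal{G}_i(t))$ and $\mathcal{D}_i(t)=\mathcal{S}_i(\mathcal{B}_i(t))$ and invoke the random time-change theorem to obtain the limits $a_i=\lambda_\Sigma g_i$ and $d_i=\mu_i b_i$ directly by composition. This is exactly the device the paper reserves for the \emph{next} lemma (Lemma~\ref{lemma:fluid limit}, Eqs.~\eqref{eq:fluid_ai} and~\eqref{eq:fluid_di}), where the fluid-model equations are identified. The paper's decomposition (first compactness, then identification) is a cleaner modular split of the standard fluid-limit argument, while your composition argument proves Lemma~\ref{lemma:fluid model} together with part of Lemma~\ref{lemma:fluid limit} in one pass and immediately yields explicit Lipschitz constants $\lambda_\Sigma$ and $\mu_i\|b_i\|_{\mathrm{Lip}}$ rather than merely asserting Lipschitz continuity. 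Both routes are valid; yours avoids one application of Arzela--Ascoli at the cost of pulling forward the random time-change lemma.
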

\begin{proof}
	See Appendix \ref{appendex:proof_fluid_model}.
\end{proof}

The fluid model of our considered load balancing system is given by the following lemma. Note that the fluid model holds for any work-conserving FIFO and queue-length based policy.

\begin{lemma}
\label{lemma:fluid limit}
	Any fluid limit $(q_i,a_i,d_i,a_{\Sigma},s_i,g_i,b_i)$ satisfies the following equations
	\begin{equation}
	\label{eq:fluid_q}
		q_i(t) = q_i(0) + a_i(t) - d_i(t)
	\end{equation}
	\begin{equation}
	\label{eq:fluid_ai}
		a_i(t) = \lambda g_i(t)
	\end{equation}
	\begin{equation}
	\label{eq:fluid_di}
		d_i(t) = \mu_i b_i(t)
	\end{equation}
	\begin{equation}
	\label{eq:fluid_a_all}
		a_{\Sigma}(t) = \lambda t
	\end{equation}
	\begin{equation}
	\label{eq:fluid_s}
		s_{i}(t) =  \mu_i t
	\end{equation}
	\begin{equation}
	\label{eq:sum_g}
		\sum_i^{n}g_i(t) = t
	\end{equation}
	and for any regular time $t$, we have 
	\begin{equation}
	\label{eq:derivative}
		q_i^{\prime}(t) =
		\begin{cases} 
      	\lambda g_i^{\prime}(t) - \mu_i, & q_i(t) > 0. \\
      	0, & q_i(t) = 0. \\
   		\end{cases}
	\end{equation}	
\end{lemma}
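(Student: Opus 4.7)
My plan is to establish the fluid-limit equations by passing to the limit in the corresponding pre-limit identities, using the uniform convergence over compact intervals established in Lemma \ref{lemma:fluid model} together with a functional strong law of large numbers (FSLLN) for the i.i.d.\ arrival and service sequences.

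First, equations \eqref{eq:fluid_a_all} and \eqref{eq:fluid_s} follow from the FSLLN applied to the i.i.d.\ bounded sequences $\{A_\Sigma(t)\}$ and $\{S_i(t)\}$: almost surely $(1/x_{n_k})\mathcal{A}_\Sigma^{(x_{n_k})}(x_{n_k}t)\to \lambda t$ and $(1/x_{n_k})\mathcal{S}_i^{(x_{n_k})}(x_{n_k}t)\to \mu_i t$ u.o.c. Equation \eqref{eq:sum_g} comes from the pre-limit identity $\sum_i \mathcal{G}_i(t)=t$ (every slot contributes to exactly one queue under the Markovian dispatching convention), rescaling by $x_{n_k}$ and taking limits. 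Equation \eqref{eq:fluid_q} is obtained by rescaling \eqref{eq:q_dynamic}, $Q_i(t) = Q_i(0)+\mathcal{A}_i(t)-\mathcal{D}_i(t)$, and invoking the u.o.c.\ convergences \eqref{eq:qq}, \eqref{eq:ai}, \eqref{eq:di}.

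For \eqref{eq:fluid_ai} and \eqref{eq:fluid_di}, I use the compositional identities $\mathcal{A}_i(t)=\mathcal{A}_\Sigma(\mathcal{G}_i(t))$ and $\mathcal{D}_i(t)=\mathcal{S}_i(\mathcal{B}_i(t))$. Writing $(1/x_{n_k})\mathcal{A}_i(x_{n_k}t) = (1/x_{n_k})\mathcal{A}_\Sigma\bigl(x_{n_k}\cdot \mathcal{G}_i(x_{n_k}t)/x_{n_k}\bigr)$, the continuity of the limit $a_\Sigma(\tau)=\lambda\tau$ together with u.o.c.\ convergence of $(1/x_{n_k})\mathcal{A}_\Sigma^{(x_{n_k})}(x_{n_k}\cdot)$ and of $(1/x_{n_k})\mathcal{G}_i^{(x_{n_k})}(x_{n_k}\cdot)$ yields $a_\Sigma(g_i(t))=\lambda g_i(t)$ via the standard random-time-change lemma; the argument for $d_i=\mu_i b_i$ is identical.

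Finally, the derivative statement \eqref{eq:derivative} splits by cases at a regular time $t$. If $q_i(t)=0$, then since $q_i$ is nonnegative and differentiable at $t$, the point $t$ is a local minimum so $q_i'(t)=0$. If $q_i(t)>0$, continuity of $q_i$ provides a neighborhood $[t-\delta,t+\delta]$ on which $q_i\ge q_i(t)/2>0$; by u.o.c.\ convergence, $Q_i^{(x_{n_k})}(x_{n_k}s)\ge (q_i(t)/4)x_{n_k}$ for every $s$ in this neighborhood and all sufficiently large $k$, so queue $i$ is busy at every slot inside $[x_{n_k}(t-\delta),x_{n_k}(t+\delta)]$. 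Hence $\mathcal{B}_i^{(x_{n_k})}$ increments by one per slot on this window, giving $b_i'(t)=1$ and, by \eqref{eq:fluid_di}, $d_i'(t)=\mu_i$. Differentiating \eqref{eq:fluid_q} and \eqref{eq:fluid_ai} then yields $q_i'(t)=\lambda g_i'(t)-\mu_i$. The main obstacle I anticipate is the compositional step, i.e.\ justifying $a_i=a_\Sigma\circ g_i$ and $d_i=s_i\circ b_i$ rigorously through the random-time-change lemma, since this requires combining u.o.c.\ convergence of the outer process with u.o.c.\ convergence of the inner process to a continuous limit; also delicate is transferring the strict positivity of the fluid queue back to a uniform lower bound on the pre-limit queue in order to conclude $b_i'(t)=1$.
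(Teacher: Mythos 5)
Your proposal is correct and follows essentially the same route as the paper: FSLLN for $a_\Sigma$ and $s_i$, rescaling the pre-limit identities for $q_i$ and $\sum_i g_i$, the random time-change theorem for the compositions $a_i=a_\Sigma\circ g_i$ and $d_i=s_i\circ b_i$, and a case split at regular times for the derivative. The only cosmetic difference is that in the $q_i(t)>0$ case you derive $b_i'(t)=1$ and then invoke $d_i=\mu_i b_i$, whereas the paper computes $d_i'(t)$ directly from the equality of departures and offered service on the pre-limit window; the two are interchangeable.
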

\begin{proof}
	See Appendix \ref{appendex:proof_fluid_limit}.
\end{proof}

Now we are well prepared to present the proof of Proposition~\ref{prop:throughput} on throughput optimality.

\begin{proof}[Proof of Proposition \ref{prop:throughput}]
First, recall the permutation $\sigma_t(\cdot)$ of $(1,2,\ldots,N)$ which satisfies $Q_{\sigma_t(1)}(t) \le Q_{\sigma_t(2)}(t) \le \ldots Q_{\sigma_t(N)}(t)$ and ties are broken randomly. Now we can establish the following claim, the proof of which is relegated to Appendix \ref{sec:proof_of_Claim_2}.

\begin{claim}
\label{claim_2}
If $ q_{\sigma_t(1)}(t) = q_{\sigma_t(2)}(t) = q_{\sigma_t(m)}(t) = 0 <  q_{\sigma_t(m+1)}(t) \le \ldots \le q_{\sigma_t(N)}(t)$ for some $1\le m <N$, then 
\begin{equation*}
	\sum_{n=m+1}^N g^\prime_{\sigma_t(n)} = \sum_{n=m+1}^N \left(\Delta_n(t) + \frac{\mu_{\sigma_t(n)}}{\mu_{\Sigma}}\right),
\end{equation*}
and $\Delta(t)$ satisfies the conditions in Eq. \eqref{eq:condition_1} and Eq. \eqref{eq:condition_2} of Theorem~\ref{thm:theorem_2}.
\end{claim}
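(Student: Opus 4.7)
The plan is to translate the fluid-level hypothesis on $q(t)$ into a statement about the pre-limit states along the converging subsequence, then invoke the structural guarantee of Theorem~\ref{thm:theorem_2}, and finally pass to the fluid limit of the cumulative-routing process. The starting observation is purely geometric: every $\mathbf{x} \in \mathcal{K}_\alpha$ with $\alpha > 0$, written as $\mathbf{x} = \sum_n w_n \mathbf{b}^{(n)}$ with $w_n \ge 0$, satisfies $x_j \ge \alpha \sum_n w_n$ for all $j$, so any state possessing both a zero coordinate and a strictly positive coordinate must lie outside $\mathcal{K}_\alpha$. Under the hypothesis of the claim, $Q_{\sigma_t(N)}^{(x_{n_k})}(x_{n_k} t)/x_{n_k} \to q_{\sigma_t(N)}(t) > 0$ while $Q_{\sigma_t(1)}^{(x_{n_k})}(x_{n_k} t)/x_{n_k} \to 0$, so the pre-limit state at time $x_{n_k} t$ does not belong to $\mathcal{K}_\alpha$ for all sufficiently large $n_k$. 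By Lipschitz continuity of the fluid limit, the same persists throughout an entire pre-limit window $[x_{n_k}(t-\tau), x_{n_k}(t+\tau)]$ for some small $\tau > 0$.

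Within that window the hypothesis of Theorem~\ref{thm:theorem_2} applies, so at every pre-limit slot $s$ the dispatching vector has the form $P_n(s) = \Delta_n(s) + \mu_{\sigma_s(n)}/\mu_\Sigma$ with $\Delta(s)$ satisfying (\ref{eq:condition_1})--(\ref{eq:condition_2}). For $n \ge m+1$, the queue ranked $n$-th smallest in the pre-limit coincides with $\sigma_t(n)$, because its length is of order $x_{n_k}$ while the lengths of $\sigma_t(1), \ldots, \sigma_t(m)$ are of lower order. Expressing $\mathcal{G}_{\sigma_t(n)}^{(x_{n_k})}$ as a sum of Bernoulli indicators with state-dependent success probabilities $P_n(\cdot)$ and applying a standard functional law of large numbers (martingale decomposition together with the finite support of the arrivals), the scaled process $x_{n_k}^{-1}\mathcal{G}_{\sigma_t(n)}^{(x_{n_k})}(x_{n_k} \cdot)$ converges to a Lipschitz function whose derivative at a regular time $t$ equals the short-time average of $\Delta_n(s) + \mu_{\sigma_s(n)}/\mu_\Sigma$. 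Defining $\Delta(t)$ at the fluid level to be precisely this locally averaged vector yields $g'_{\sigma_t(n)}(t) = \Delta_n(t) + \mu_{\sigma_t(n)}/\mu_\Sigma$ for $n \ge m+1$, and summing over $n$ gives the first conclusion.

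The second half of the claim then follows because the constraints on $\Delta(s)$ are preserved under this averaging. Condition (\ref{eq:condition_2}) is immediate: every pre-limit $\Delta(s)$ in the window has $\Delta_1(s) \ge \delta$ and $\Delta_N(s) \le -\delta$ with $\delta$ independent of $s$ and $\epsilon$, so the averaged vector inherits the same uniform bound. Condition (\ref{eq:condition_1}) is more subtle since the pivot index may vary across pre-limit slots; it holds because the pre-limit states in the window are asymptotically concentrated in a single regime determined by $q(t)$, and within that regime the admissible vectors share a compatible monotone sign pattern. The main obstacle I anticipate is exactly this third step: carefully justifying that the uncontrolled policy behavior inside $\mathcal{K}_\alpha$ (which can cause brief re-entries into the cone within the window) and the non-uniqueness of $\sigma_s(\cdot)$ on the empty-fluid coordinates have vanishing contribution at scale $x_{n_k}$ for the non-empty indices $n \ge m+1$. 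A uniform Lipschitz estimate on the fluid limit, combined with the vanishing-martingale bound, should control both effects and close the argument.
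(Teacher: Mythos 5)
Your proof follows the same outline as the paper's: use continuity of the fluid limit and u.o.c.\ convergence to show the pre-limit state stays outside $\mathcal{K}_\alpha$ throughout a window $[x_{n_k}(t-\tau),x_{n_k}(t+\tau)]$, apply the policy structure granted by Theorem~\ref{thm:theorem_2} there, and pass to the fluid limit on $\mathcal{G}$. The paper certifies membership outside the cone by choosing $\tau$ so that the pre-limit ratio $Q_{\sigma_t(m+1)}/Q_{\sigma_t(m)}$ exceeds $1/\alpha$; your comparison of $\sigma_t(N)$ with $\sigma_t(1)$ works equally well. You are also right to flag the varying-pivot issue as the delicate point that must be resolved.

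However, your proposed resolution of that issue does not work. When $\Q(s)\notin\mathcal{K}_\alpha$, the policy may use any $\Delta(s)$ satisfying Eqs.~\eqref{eq:condition_1}--\eqref{eq:condition_2} with an arbitrary, slot-dependent pivot $k(s)$; the pre-limit states in the window are \emph{not} "concentrated in a single regime" that forces a common sign pattern. In fact the set of vectors admissible under \eqref{eq:condition_1} for some $k$ is a non-convex union of cones, and a time-average of admissible vectors need not lie in it: for $N=4$, averaging $(0.5,-0.3,-0.1,-0.1)$ (pivot $k=2$) with $(0.1,0.2,0.2,-0.5)$ (pivot $k=4$) gives $(0.3,-0.05,0.05,-0.3)$, which has no admissible pivot. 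So the assertion that the fluid-level $\Delta(t)$ satisfies \eqref{eq:condition_1} cannot be obtained from the averaging you sketch. What actually closes the argument --- and what the downstream throughput proof really uses --- is the weaker tail-sum inequality $\sum_{n=m+1}^{N}\Delta_n(t)\le 0$, which \emph{is} stable under averaging by an elementary case split: for any slot $s$ with pivot $k(s)$, either $m<k(s)$, so $\sum_{n\le m}\Delta_n(s)\ge0$ and hence $\sum_{n>m}\Delta_n(s)\le0$ since the components of $\Delta(s)$ sum to zero; or $m\ge k(s)$, so every $n\ge m+1>k(s)$ has $\Delta_n(s)\le0$ directly. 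You should replace the "single-regime" reasoning with this case split; it is essentially forced on you by the non-convexity of the constraint set.
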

Now, let us consider a Lyapunov function $V(\mathbf{z}) \triangleq \norm{z}_1$. We would like to show that $\dot{V}(\mathbf{q}(t)) \le -l$ for some constant $l > 0$ whenever $V(\mathbf{q}(t)) > 0$. There are two cases to consider. 

\textbf{Case 1:} $q_i(t) > 0$ for all $i \in \{0,1,\ldots,N\}$. 

	In this case, by Eqs. \eqref{eq:derivative} and \eqref{eq:sum_g}, we have 
	\begin{equation}
		\dot{V}(\mathbf{q}(t)) = \sum_{n=1}^{N}q_n^{\prime}(t) = \lambda(\sum_{n=1}^{N}g_n^{\prime}(t)) - \sum_{n=1}^{N}\mu_i = \lambda - \sum_{n=1}^{N}\mu_n = -\epsilon.
	\end{equation}

\textbf{Case 2:} For some $1\le m <N$, $ q_{\sigma_t(1)}(t) = q_{\sigma_t(2)}(t) = q_{\sigma_t(m)}(t) = 0 <  q_{\sigma_t(m+1)}(t) \le \ldots \le q_{\sigma_t(N)}(t)$.

	In this case, we have 
	\begin{align}
		\dot{V}(\mathbf{q}(t)) & \ep{a} \sum_{n=m+1}^{N}q_n^{\prime}(t)\nonumber  \\
		&\ep{b} \sum_{n=m+1}^N \lambda\left(\Delta_n(t) + \frac{\mu_{\sigma_t(n)}}{\mu_{\Sigma}}\right) - \sum_{n=m+1}^N \mu_{\sigma_t(n)}\nonumber\\
		&\lep{c} \sum_{n=m+1}^N \lambda \frac{\mu_{\sigma_t(n)}}{\mu_{\Sigma}}- \sum_{n=m+1}^N \mu_{\sigma_t(n)}\nonumber\\
		&\lep{d} -\epsilon \frac{\mu_{\min}}{\mu_{\Sigma}}\nonumber
	\end{align}
	where (a) comes from Eq. \eqref{eq:derivative}; (b) follows from Claim \ref{claim_2}; (c) holds due to the fact that $\sum_{n=m+1}^N \Delta_n(t) \le 0$ when $\Delta(t)$ satisfies Eqs. \eqref{eq:condition_1} and \eqref{eq:condition_2}; (d) is true since $\lambda = \mu_{\Sigma} - \epsilon$ and $\mu_{\min} = \min_{ 1 \le n \le N }(\mu_n)$.

	Therefore, combining the above two cases, yields
	\begin{equation*}
		\dot{V}(\mathbf{q}(t)) \le -l \text{ whenever } V(\mathbf{q}(t)) > 0
	\end{equation*}
	where $l \triangleq -\epsilon \mu_{\min}/\mu_{\Sigma} > 0$. This result implies that for any $\gamma \in (0,1)$, there exists a finite $T$ such that $V(\mathbf{q}(T)) \le \gamma$. Now, consider any fixed sequence of processes $\{\mathcal{X}^{(x)}, x = 1,2,\ldots\}$ (for simplicity also denoted as $\{x\}$). Then, from the convergence in Lemma \ref{lemma:fluid model}, we have that for any subsequence $\{x_n\}$ of $\{x\}$, there exists a further (sub)subsequence $\{x_{n_k}\}$ with probability one such that 
	\begin{equation*}
	\lim_{k \to \infty} \frac{1}{x_{n_k}}\norm{\mathcal{X}^{(x_{n_k})}(x_{n_k}T)}_1  = \sum_i^{n}|q_i(T)| \le \gamma \triangleq 1-\xi.
	\end{equation*}
	This further implies that with probability one, 
	\begin{equation*}
		\limsup_{x \to \infty}\left[\frac{1}{x}\norm{\mathcal{X}^{(x)}(xT)}_1\right] \le 1-\xi
	\end{equation*}
	holds, because there is always a subsequence of $\{x\}$ that converges to the same limit as $\limsup_{x \to \infty}\left[\frac{1}{x}\norm{\mathcal{X}^{(x)}(xT)}_1\right].$

	According to Eq. \eqref{eq:q_dynamic}, we have $\norm{\mathcal{X}^{(x)}(xT)}_1 \le x + \sum_{i=1}^{N}\mathcal{A}_i(xT) $. Hence,
	\begin{equation*}
		\mathbb{E}\left[\frac{1}{x}\norm{\mathcal{X}^{(x)}(xT)}_1 \right] \le 1 + \lambda T \le \infty.
	\end{equation*}
	Therefore, from the dominated convergence theorem, we have 
	\begin{equation*}
		\limsup_{x \to \infty}\mathbb{E}\left[\frac{1}{x}\norm{\mathcal{X}^{(x)}(xT)}_1\right] = \mathbb{E}\left[\limsup_{x \to \infty}\frac{1}{x}\norm{\mathcal{X}^{(x)}(xT)}_1\right] \le 1-\xi.
	\end{equation*} 
	This result in turn implies that there exists an $x_0$ such that for all $x = \norm{\mathcal{X}^{(x)}(0)}_1 \ge x_0$ 
	\begin{equation}
	\label{eq:drift_xT}
		\ex{\norm{\mathcal{X}^{(x)}(xT)}_1 - \norm{\mathcal{X}^{(x)}(0)}_1} \le -\frac{\xi x_0}{2}.
	\end{equation}

	Now, let us turn to the stochastic analysis of the Lyapunov drift. In particular, we consider the mean drift of Lyapunov function $V(\Q(t)) = \norm{\Q(t)}_1$. We need to show that the Lyapunov function $V(.)$ satisfies the conditions (C1) and (C2) in Lemma \ref{lem:basis}, respectively.

	For Condition (C2), we have 
    \begin{equation*}
        \begin{split}
          |\Delta V(\Q)| &= \big| \norm{\Q(t_0+T)}_1 - \norm{\Q(t_0)}_1 \big| \mathcal{I}(\Q(t_0) = \Q)\\
          & \lep{a} \norm{\Q(t_0+T) - \Q(t_0)}_1\mathcal{I}(\Q(t_0) = \Q)\\
          & \lep{b} T N\max(A_{\max},S_{\max})
        \end{split} 
    \end{equation*}
    where (a) follows from the fact that  $|\norm{{\bf x}}_1 - \norm{{\bf y}}_1| \le \norm{{\bf x} - {\bf y}}_1$ holds for any ${\bf x}$, ${\bf y} \in \mathbb{R}^N$;  (b) holds due to the assumptions that the $A_\Sigma(t) \le A_{\max}$ and $S_n(t) \le S_{\max}$ for all $t \ge 0$ and all $1\le n\le N$, and are independent of the queue length. This establishes the condition (C2) in Lemma \ref{lem:basis}.

    For Condition (C1), we have
    \begin{align}
    	&\ex{\Delta V({\Q})\mid \Q(t_0)=\Q}\nonumber\\
          = &\ex{\norm{\Q(t_0+T_1)}_1 - \norm{\Q(t_0)}_1 \mid \Q(t_0) = \Q}\nonumber\\
          \ep{a} &\ex{\norm{\Q(T_1)}_1 - \norm{\Q(0)}_1 \mid \Q(0) = \Q}\nonumber\\
          \lep{b}& -\frac{\xi x_0}{2}\nonumber
    \end{align}
    where (a) follows from the i.i.d assumption of exogenous arrival and service, and the system is Markovian with respect to the vector of queue lengths; (b) holds for $T_1 = x_0 T$ and $V(\Q(0)) \ge x_0$. This directly comes from Eq. \eqref{eq:drift_xT} and the fact $\norm{\mathcal{X}(t)}_1 = \sum_{n=1}^N Q_n(t)$. Hence, it establishes the condition (C1) in Lemma \ref{lem:basis}, and thus throughput optimality.
\end{proof}

% !TEX root = ./performance18.tex
\subsubsection{State-space Collapse to Cone} We would prove the following result in this subsection, which combined with the throughput optimality in the last subsection directly implies heavy-traffic delay optimality according to Theorem \ref{thm:theorem_1}.
\begin{proposition}
\label{prop:collapse}
	Under the condition of Theorem \ref{thm:theorem_2}, the state-space in steady-state collapses to the cone $\mathcal{K}_\alpha$, {i.e., there exists $\epsilon_0 ={\mu_{\Sigma}\delta}/{(4N+2\delta)}$ }such that for all $\epsilon \in (0,\epsilon_0)$
	\begin{equation}
	\ex{\norm{{\overline{\Q}^{(\epsilon)}_{\perp}}}^r } \le M_r
\end{equation}
holds for each $r = 1,2,\cdots$, in which $M_r$ are constants that are independent of $\epsilon$.
\end{proposition}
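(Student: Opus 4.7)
My plan is to apply the moment bound of Lemma~\ref{lem:basis} (with $T=1$) to the Lyapunov function $V(\Q) = \norm{\Qc}$, where $\Qc$ is the projection of $\Q$ onto the polar cone $\mathcal{K}_\alpha^\circ$. The boundedness condition (C2) is immediate from the nonexpansiveness of projection together with the uniform bounds on arrivals and service. The bulk of the proof is establishing the negative drift required by (C1); concretely, once we prove $\ex{V^2(\Q(t+1)) - V^2(\Q(t)) \mid \Q} \le -\eta\, V(\Q) + K$ uniformly in $\epsilon \in (0,\epsilon_0)$ when $\Q \notin \mathcal{K}_\alpha$, for some $\eta, K > 0$ independent of $\epsilon$, Lemma~\ref{lem:basis} applied to $V^2$ delivers the required uniform moment bounds on $\norm{\overline{\Q}^{(\epsilon)}_\perp}$.

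To obtain this drift bound, I would exploit that $\Qp(t) \in \mathcal{K}_\alpha$ to write $\norm{\Qc(t+1)}^2 \le \norm{\Q(t+1) - \Qp(t)}^2 = \norm{\Qc(t) + A(t) - S(t) + U(t)}^2$. Expanding and taking conditional expectation, the dominant contribution is
\[
2\,\ex{\inner{\Qc}{A(t) - S(t)} \mid \Q} \;=\; 2\lambda \sum_{m=1}^N \Delta_m(t)(\Qc)_{\sigma_t(m)} \;-\; \tfrac{2\epsilon}{\mu_\Sigma}\inner{\Qc}{\boldsymbol{\mu}},
\]
where $\sigma_t$ sorts $\Q$ in ascending order. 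The $\boldsymbol{\mu}$-piece is $O(\epsilon\norm{\Qc})$ by Cauchy--Schwarz, so it will be absorbable for small $\epsilon$. The residual term $\ex{\inner{\Qc}{U(t)}\mid \Q}$ is handled by the observation that $U_n(t) > 0$ forces $Q_n(t) < S_{\max}$, hence $(\Qc)_n \le Q_n(t) < S_{\max}$ at that coordinate, yielding the deterministic bound $\inner{\Qc}{U(t)} \le S_{\max}\norm{U(t)}_1$ whose conditional expectation is bounded by a constant; the quadratic noise $\norm{A - S + U}^2$ contributes another constant.

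The main obstacle is bounding the dispatching sum $\sum_m \Delta_m(t)(\Qc)_{\sigma_t(m)}$ negatively and linearly in $\norm{\Qc}$, which requires structural information about the projection onto $\mathcal{K}_\alpha$. The key technical lemma I would prove is a \emph{monotone property of the projection}: if $Q_i \le Q_j$, then $(\Qc)_i \le (\Qc)_j$. This follows from an exchange argument exploiting the permutation symmetry of $\mathcal{K}_\alpha$: swapping the $i$ and $j$ components of $\Qp$ yields a point still in $\mathcal{K}_\alpha$ whose squared distance to $\Q$ changes by $2(Q_i - Q_j)((\Qp)_i - (\Qp)_j)$, which is strictly negative whenever $Q_i < Q_j$ but $(\Qp)_i > (\Qp)_j$, contradicting the optimality of $\Qp$. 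A KKT analysis then shows that the components of $\Qc$ corresponding to active generators $\mathbf{b}^{(n)}$ share a common nonnegative value $c$ while the inactive ones lie at or below $c$; combined with the monotone property this yields the spread lower bound $(\Qc)_{\sigma_t(N)} - (\Qc)_{\sigma_t(1)} = \max_n(\Qc)_n - \min_n(\Qc)_n \ge \norm{\Qc}/\sqrt{N}$. Using the sign pattern of $\Delta(t)$ from~\eqref{eq:condition_1}, the identity $\sum_m \Delta_m(t) = 0$, and $\min(|\Delta_1|,|\Delta_N|) \ge \delta$, a rearrangement grouping about the sign-change index $k$ then delivers
\[
\sum_{m=1}^N \Delta_m(t)(\Qc)_{\sigma_t(m)} \;\le\; -\delta\bigl((\Qc)_{\sigma_t(N)} - (\Qc)_{\sigma_t(1)}\bigr) \;\le\; -\tfrac{\delta}{\sqrt{N}}\norm{\Qc}.
\]
Assembling everything produces a drift of $V^2$ of at most $-\frac{2\lambda\delta}{\sqrt{N}}\norm{\Qc} + O(\epsilon\norm{\Qc}) + K_2$; the stated threshold $\epsilon_0 = \mu_\Sigma\delta/(4N+2\delta)$ is exactly the one that makes the negative linear term strictly dominate on $(0,\epsilon_0)$, completing the verification of (C1) and thus the proof.
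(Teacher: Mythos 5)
Your overall route matches the paper's: the Lyapunov function is $V_\perp(\Q) = \norm{\Qc}$, (C2) is nonexpansiveness plus bounded jumps, and (C1) reduces to showing $\ex{\inner{\Qc}{\A-\s}\mid\Q} \le -c\norm{\Qc}$, which in turn hinges on a monotonicity property of the sorted components of $\Qc$. The paper proves that monotonicity by analyzing the active generator index set $\mathcal{I}$ (their Claim~\ref{claim_4}: $\mathcal{I}$ is a suffix), then using $\inner{\widehat{\Q}_\perp}{\mathbf{b}^{(i)}} = 0$ on $\mathcal{I}$ and $\le 0$ off it; it then bounds $\norm{\widehat{\Q}_\perp}_1 \le 2N|\widehat{Q}_{\perp 1}|$ and $\sum_n\Delta_n\widehat{Q}_{\perp n}\le -\delta|\widehat{Q}_{\perp 1}|$. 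Your exchange argument is a genuinely different and, I think, cleaner way to get monotonicity. However, be careful: as written you swap components of $\Qp$ and conclude that $\Qp$ is monotone, but what you need is that $\Qc$ is monotone, and monotonicity of $\Qp$ together with $\Q$ sorted does \emph{not} by itself order $\Qc = \Q - \Qp$. The fix is simply to run the same exchange on $\Qc$ directly: $\mathcal{K}_\alpha^\circ$ is also permutation-invariant (since $\mathcal{K}_\alpha$ is), and $\Qc$ is the metric projection of $\Q$ onto $\mathcal{K}_\alpha^\circ$, so swapping entries $i,j$ of $\Qc$ changes the squared distance by $2(Q_i-Q_j)((\Qc)_i-(\Qc)_j)$, giving $(\Qc)_i\le(\Qc)_j$ whenever $Q_i\le Q_j$ in one stroke, with no need to route through the KKT/active-set discussion for the monotonicity claim itself.

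You still need two sign facts for the spread bound $\widehat{Q}_{\perp N}-\widehat{Q}_{\perp 1}\ge\norm{\Qc}_\infty\ge\norm{\Qc}/\sqrt{N}$: namely $\widehat{Q}_{\perp N}\ge 0$ and $\widehat{Q}_{\perp 1}\le 0$. The latter is immediate from $\inner{\mathbf{1}}{\Qc}\le 0$; the former is where you would invoke the active-set/KKT structure (or the orthogonality argument $\inner{\Qp}{\Qc}=0$ with $\Qp\ge 0$ and $\Q\ge 0$). The remaining pieces -- the Abel-type rearrangement giving $\sum_m\Delta_m(\Qc)_{\sigma_t(m)}\le-\delta(\widehat{Q}_{\perp N}-\widehat{Q}_{\perp 1})$ via $\sum_m\Delta_m=0$, the $\epsilon\norm{\Qc}$ absorption, and the handling of the $\UU$ and quadratic noise terms -- are all correct and parallel the paper's Claim~\ref{claim_3} and Eqs.~\eqref{eq:middle}--\eqref{eq:drift_in_delta}, just with a slightly stronger spread constant ($\delta/\sqrt{N}$ in $\norm{\cdot}$ versus the paper's $\delta/(2N)$ in $\norm{\cdot}_1$), so the stated $\epsilon_0$ is certainly covered.
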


Before we prove Proposition \ref{prop:collapse}, we first define the following Lyapunov functions and their corresponding drifts.
      \begin{equation*}
        V_\perp(\Q) \triangleq \norm{\Qc}, W(\Q) \triangleq \norm{\Q}^2 \text{ and } W_\parallel(\Q)\triangleq \norm{\Qp}^2
      \end{equation*}
      with the corresponding one time-slot drift given by
      \begin{equation*}
        \begin{split}
            &\Delta V_\perp(\Q)\triangleq [V_\perp(\Q(t_0+1)) - V_\perp(\Q(t_0))] \mathcal{I}(\Q(t_0) = \Q)\\
            &\Delta W(\Q)\triangleq [W(\Q(t_0+1)) - W(\Q(t_0))] \mathcal{I}(\Q(t_0) = \Q)\\
            &\Delta W_\parallel(\Q)\triangleq [W_\parallel(\Q(t_0+1)) - W_\parallel(\Q(t_0))] \mathcal{I}(\Q(t_0) = \Q)\\
        \end{split}   
      \end{equation*}

Now, we are ready to prove Proposition \ref{prop:collapse}.
\begin{proof}[Proof of Proposition \ref{prop:collapse}]
	To establish the bounded moments of $\norm{\Qc}$, based on Lemma \ref{lem:basis}, all we need to show is that the the drift of Lyapunov function $V_\perp(.)$ satisfies the two conditions for all $\epsilon \in (0,\epsilon_0)$.
    For condition (C2), we have 
    \begin{align}
    \label{eq:boundedQc}
            & |\Delta V_\perp(\Q)| \nonumber\\
            = & | \norm{\Qc(t_0+1)} - \norm{\Qc(t_0)} | \mathcal{I}(\Q(t_0) = \Q)\nonumber \\
            \lep{a} & \norm{\Qc(t_0+1) - \Qc(t_0)}\mathcal{I}(\Q(t_0) = \Q)\nonumber\\
            \lep{b} & \norm{\Q(t_0+1) - \Q(t_0)} \mathcal{I}(\Q(t_0) = \Q)\nonumber\\
            \lep{c} & \sqrt{N} \max(A_{\max},S_{\max})  
    \end{align}
    where  (a) follows from the fact that  $|\norm{{\bf x}} - \norm{{\bf y}}| \le \norm{{\bf x} - {\bf y}}$ holds for any ${\bf x}$, ${\bf y} \in \mathbb{R}^N$; (b) follows from the non-expansive property of projection and the fact that $\Qc$ is the projection onto the convex closed cone $\mathcal{K}_{\alpha}^{\circ}$. (c) holds due to the assumptions that the $A_\Sigma(t) \le A_{\max}$ and $S_n(t) \le S_{\max}$ for all $t \ge 0$ and all $ 1 \le n \le N$, and are both independent of queue lengths. This verifies Condition (C2) in Lemma \ref{lem:basis}.

    For condition (C1), we need the following result, the proof of which is relegated to Appendix \ref{sec:proof_of_Claim_3}.
    \begin{claim}
    \label{claim_3}
    For any $t\ge0$, we have 
    	\begin{equation*}
		\begin{split}
			&\ex{\Delta V_\perp(\Q) \mid \Q(t) = \Q }\\
		   \le & \frac{1}{2\norm{\Qc(t)}} \ex{\left(2\inner{\Qc(t)}{\A(t) - \s(t)} + L\right) \mid \Q(t) = \Q}
		\end{split}
	\end{equation*}
	where $L \triangleq N\max(A_{\max},S_{\max})^2$.
    \end{claim}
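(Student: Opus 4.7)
The plan is to reduce the drift of $V_\perp(\Q) = \norm{\Qc}$ to the one-step drift of $\norm{\Qc}^2$ via a squaring trick, and then to bound the latter using the closest-point characterization of the projection onto $\mathcal{K}_\alpha^\circ$ together with a monotonicity property of that projection. Everything except the monotonicity is a routine computation in the spirit of the drift-based framework; the monotone projection property is the ingredient that is genuinely specific to the cone setting.

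First, I would invoke the elementary inequality $\norm{a} - \norm{b} \le (\norm{a}^2 - \norm{b}^2)/(2\norm{b})$, valid for any $a, b \in \mathbb{R}^N$ with $\norm{b} > 0$; this is proved by a short case analysis on whether $\norm{a} \ge \norm{b}$, after factoring $\norm{a}^2 - \norm{b}^2 = (\norm{a}-\norm{b})(\norm{a}+\norm{b})$. Setting $a = \Qc(t+1)$ and $b = \Qc(t)$ yields, pointwise on $\{\Q(t) = \Q\}$,
\[
\Delta V_\perp(\Q) \;\le\; \frac{\norm{\Qc(t+1)}^2 - \norm{\Qc(t)}^2}{2\norm{\Qc(t)}}.
\]
Since $\Qc(t)$ is a deterministic function of $\Q(t)$, the factor $1/(2\norm{\Qc(t)})$ can be pulled out of the conditional expectation, so the whole claim reduces to the pathwise bound
\[
\norm{\Qc(t+1)}^2 - \norm{\Qc(t)}^2 \;\le\; 2\inner{\Qc(t)}{\A(t) - \s(t)} + L.
\]

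For this pathwise bound, I would set $\tilde\Q(t+1) \triangleq \Q(t) + \A(t) - \s(t)$, so that $\Q(t+1)$ is the componentwise positive part of $\tilde\Q(t+1)$ (the unused service $\UU(t)$ exactly cancels the negative components coming from $\A(t) - \s(t)$). Invoking the monotone projection property developed earlier in the paper, namely that projecting onto $\mathcal{K}_\alpha^\circ$ is non-increasing in norm under componentwise clipping at zero, yields $\norm{\Qc(t+1)} \le \norm{\tilde\Q_\perp(t+1)}$. The closest-point characterization of projection onto $\mathcal{K}_\alpha^\circ$, applied with the feasible point $\Qp(t) \in \mathcal{K}_\alpha$, then gives
\[
\norm{\tilde\Q_\perp(t+1)} \;\le\; \norm{\tilde\Q(t+1) - \Qp(t)} \;=\; \norm{\Qc(t) + \A(t) - \s(t)},
\]
using the Moreau decomposition $\Q(t) = \Qp(t) + \Qc(t)$. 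Squaring and expanding gives $\norm{\Qc(t+1)}^2 \le \norm{\Qc(t)}^2 + 2\inner{\Qc(t)}{\A(t) - \s(t)} + \norm{\A(t) - \s(t)}^2$, and the trivial bound $\norm{\A(t) - \s(t)}^2 \le N\max(A_{\max},S_{\max})^2 = L$, which holds componentwise via $|A_n(t) - S_n(t)| \le \max(A_n(t), S_n(t)) \le \max(A_{\max},S_{\max})$ (both $A_n$ and $S_n$ are nonnegative), finishes the pathwise bound.

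The main obstacle is the monotonicity $\norm{(x_+)_\perp} \le \norm{x_\perp}$: because the projection onto a polyhedral cone admits no closed-form formula, this cannot be proved by explicit computation, and the standard linearity trick used when $\mathcal{K}_\alpha$ is a subspace (where $\Qp(t+1) = \Qp(t) + (\A(t) - \s(t))_\parallel$ plus a correction from $\UU(t)$) is unavailable. Fortunately only the monotonicity is needed, not an explicit formula, and it is established as a separate geometric lemma earlier in the paper; once it is in hand, the rest of the argument is a careful bookkeeping of the squaring trick.
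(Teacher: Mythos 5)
Your strategy is sound and genuinely differs from the paper's, but you have misattributed a key lemma, and that is where the gap lies. The opening step — applying $\norm{\mathbf{a}} - \norm{\mathbf{b}} \le (\norm{\mathbf{a}}^2 - \norm{\mathbf{b}}^2)/(2\norm{\mathbf{b}})$ with $\mathbf{a} = \Qc(t+1)$, $\mathbf{b}=\Qc(t)$ — agrees in substance with the paper's Eq.~\eqref{eq:root_bound}, since by Pythagoras $\norm{\Qc}^2 = \norm{\Q}^2 - \norm{\Qp}^2$ and hence the paper's $\Delta W(\Q) - \Delta W_\parallel(\Q)$ is exactly $\norm{\Qc(t+1)}^2 - \norm{\Qc(t)}^2$. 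After that the two arguments diverge. The paper bounds $\ex{\Delta W}$ from above via $[\max(a,0)]^2 \le a^2$ and $\ex{\Delta W_\parallel}$ from below via the orthogonality $\inner{\Qp(t)}{\Qc(t)}=0$, the polar-cone inequality $\inner{\Qc(t+1)}{\Qp(t)}\le 0$, and the nonnegativity of $\Qp(t)$ and $\UU(t)$, then subtracts. You bypass the $W$/$W_\parallel$ split entirely: writing $\Q(t+1) = (\tilde\Q(t+1))_+$ with $\tilde\Q(t+1) = \Q(t)+\A(t)-\s(t)$, you apply a clipping-monotonicity inequality and then the closest-point characterization of $\norm{\cdot_\perp}$ with feasible point $\Qp(t)\in\mathcal{K}_\alpha$. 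Both routes reach the same pathwise bound; yours packages the polar-cone and sign bookkeeping into the geometry of two projections and is arguably the cleaner way to see why the unused service drops out.

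The gap is the assertion that the inequality $\norm{(\mathbf{x}_+)_\perp}\le\norm{\mathbf{x}_\perp}$ ``is established as a separate geometric lemma earlier in the paper.'' It is not. The monotone property that does appear (Eq.~\eqref{eq:monotone} and Claim~\ref{claim_4}) says that the \emph{components} of $\widehat{\Q}_\perp(t)$ are in increasing order once $\Q(t)$ is sorted — a statement about the internal structure of a single projected vector, used to control the sign of $\sum_n \widehat{Q}_{\perp n}\Delta_n(t)$ in Proposition~\ref{prop:collapse} — and it says nothing about how $\norm{\Qc}$ responds to componentwise clipping. Your lemma must therefore be supplied. It is true, with a one-line proof: the map $\mathbf{x}\mapsto\mathbf{x}_+$ is Euclidean projection onto the nonnegative orthant, hence non-expansive; since the generators $\mathbf{b}^{(n)}$ are nonnegative, $\mathcal{K}_\alpha\subseteq\mathbb{R}^N_+$, so for any $\mathbf{y}\in\mathcal{K}_\alpha$ we have $\norm{\mathbf{x}_+ - \mathbf{y}}=\norm{\mathbf{x}_+ - \mathbf{y}_+}\le\norm{\mathbf{x}-\mathbf{y}}$, and taking the infimum over $\mathbf{y}\in\mathcal{K}_\alpha$ gives $\norm{(\mathbf{x}_+)_\perp}\le\norm{\mathbf{x}_\perp}$. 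Inserting that argument in place of the phantom citation closes the proof; as written, the step is unsupported.
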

    Thus, based on Claim \ref{claim_3}, in order to establish condition (C1) for all $\epsilon \in (0,\epsilon_0)$, it suffices to show 
    \begin{equation}
    \label{eq:collapse_key}
     	\ex{\inner{\Qc(t)}{\A(t) - \s(t)} \mid \Q(t) = \Q}\le -c\norm{\Qc(t)}
    \end{equation} 
    holds for all $\epsilon \in (0,\epsilon_0)$, and $c$ is independent of $\epsilon$. 

    To this end, first recall the permutation $\sigma_t(\cdot)$ of $(1,2,\ldots,N)$ which satisfies $Q_{\sigma_t(1)}(t) \le Q_{\sigma_t(2)}(t) \le \ldots Q_{\sigma_t(N)}(t)$ and ties are broken randomly. In the following, for simplicity of notation, we let $\widehat{\Q}(t) = (Q_{\sigma_t(1)}(t),Q_{\sigma_t(2)}(t),\ldots,Q_{\sigma_t(N)}(t))$, and similarly the arrival process $\widehat{\A}(t) = (A_{\sigma_t(1)}(t),A_{\sigma_t(2)}(t),\ldots,A_{\sigma_t(N)}(t))$ and the service vector $\widehat{\s}(t) = (S_{\sigma_t(1)}(t),S_{\sigma_t(2)}(t),\ldots,S_{\sigma_t(N)}(t))$. 
	Now, the left-hand-side of Eq. \eqref{eq:collapse_key} can be written as follows.
    \begin{align}
    \label{eq:middle}
    	&\ex{\inner{\Qc(t)}{\A(t) - \s(t)} \mid \Q(t) = \Q}\nonumber\\
    	\ep{a} &\ex{\inner{\widehat{\Q}_{\perp}(t)}{\widehat{\A}(t) - \widehat{\s}(t)} \mid \Q(t) = \Q}\nonumber\\
    	\ep{b} &\sum_{n=1}^N \widehat{Q}_{\perp n}\left[\lambda_{\Sigma}\left(\Delta_n(t) + \frac{\mu_{\sigma_t(n)}}{\mu_{\Sigma}} \right) - \mu_{\sigma_t(n)} \right]\nonumber\\
    	\ep{c} & \sum_{n=1}^N \widehat{Q}_{\perp n}\Delta_n(t)\lambda_{\Sigma} + \sum_{n=1}^N \widehat{Q}_{\perp,n}\left(-\epsilon \frac{\mu_{\sigma_t(n)}}{\mu_{\Sigma}} \right)\nonumber\\
    	\le & \sum_{n=1}^N \widehat{Q}_{\perp n}\Delta_n(t)\lambda_{\Sigma} + \epsilon\norm{\widehat{\Q}_{\perp}(t)}_1
    \end{align}
    where (a) comes from the fact that the cone $\mathcal{K}_{\alpha}$ is symmetry with respect to the line $\mathbf{1} = (1,1,\ldots,1)$; In (b), $\widehat{Q}_{\perp n}$ is the $n$-th component of the vector $\widehat{\Q}_{\perp}(t)$ and (b) holds because of the definition of $\Delta(t)$ in Eq. \eqref{eq:delta_definition}, and the fact that the service process is independent of queue lengths; (c) follows from the fact that $\lambda_{\Sigma} = \mu_{\Sigma} - \epsilon$.
    % (d) holds since $\norm{\mathbf{x}}_1 \le \sqrt{N} \norm{\mathbf{x}}$ for any $\mathbf{x} \in \mathbb{R}^N$.

    Now, let us focus on the first term of Eq. \eqref{eq:middle}. To establish an upper bound on it, we will first establish the following important monotone property of $\widehat{\Q}_{\perp}(t)$. That is, 
    \begin{align}
    \label{eq:monotone}
    	% \widehat{Q}_{\perp,1}(t) \le \widehat{Q}_{\perp,2}(t) \le \cdots \le \widehat{Q}_{\perp,N}(t).
    	\widehat{Q}_{\perp1}(t) \le \widehat{Q}_{\perp2}(t) \le \cdots \le \widehat{Q}_{\perp N}(t).
    \end{align}
 	First, in the case of $\alpha = 1$, the cone $\mathcal{K}_{\alpha}$ reduces to the line $\mathbf{1}$. Thus, it can be easily obtained that $\widehat{Q}_{\perp n}(t) = Q_{\sigma_t(n)}(t) - Q_{\text{avg} }(t)$ where $Q_{\text{avg} }(t) = \sum_{n=1}^N Q_n(t)/N$, which satisfies the monotone property. Hence, we are only left with the task of establishing the monotone property for the case of $\alpha \in (0,1)$.

 	Note that, since $\widehat{\Q}_{\parallel}(t) \in \mathcal{K}_{\alpha}$, we have $\widehat{\Q}_{\parallel}(t) = \sum_{n=1}^N w_n\mathbf{b}^{(n)}$, where $w_n \ge 0$. Let $\mathcal{I}$ be a subset of $\{1,2,\ldots,N\}$ such that for any $i \in \mathcal{I}$ $w_i > 0$ and for any $i \notin \mathcal{I}$, $w_i = 0$, i.e., the subset $\mathcal{I}$ contains all the index $n$ such that $w_n > 0$. It suffices to consider the case when $\mathcal{I}$ is nonempty. This is because when $\mathcal{I}$ is empty, we have $\widehat{\Q}_{\parallel}(t) = 0$, which directly implies the monotone property since $\widehat{\Q}_{\perp}(t) = \widehat{\Q}(t)$ in this case.

 	Now consider the case when $\mathcal{I}$ is nonempty. First, we have 
 	\begin{align}
 	\label{eq:all_i}
 		\inner{\widehat{\Q}_{\perp}(t)}{\mathbf{b}^{(i)}} \le 0
 	\end{align}
 	holds for any $i \in \{1,2,\ldots,N\}$. Moreover, for any $i \in \mathcal{I}$
 	\begin{align}
 	\label{eq:i_in_I}
 		\inner{\widehat{\Q}_{\perp}(t)}{\mathbf{b}^{(i)}} = 0.
 	\end{align}
 	The inequality in Eq. \eqref{eq:all_i} follows from the fact that $\mathbf{b}^{(i)} \in \mathcal{K}_{\alpha}$ and $\widehat{\Q}_{\perp}(t) \in \mathcal{K}_{\alpha}^{\circ}$. The equality in Eq. \eqref{eq:i_in_I} follows from the fact that
 	\begin{align*}
 		0 = \inner{\widehat{\Q}_{\perp}(t)}{\widehat{\Q}_{\parallel}(t)} = \sum_{i\in \mathcal{I}}w_i\inner{\widehat{\Q}_{\perp}(t)}{\mathbf{b}^{(i)}},
 	\end{align*}
 	along with Eq. \eqref{eq:all_i} and $w_i > 0$ for all $i \in \mathcal{I}$. Eqs. \eqref{eq:all_i} and \eqref{eq:i_in_I} enable us to establish the following claim, the proof of which is relegated to Appendix \ref{sec:proof_of_Claim_4}.
 	\begin{claim}
    \label{claim_4}
 		If $m \in \mathcal{I}$ with $1 \le m <N-1$, then $m+1 \in \mathcal{I}$.
 	\end{claim}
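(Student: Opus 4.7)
The plan is to argue by contradiction. Suppose for some $m$ with $1 \le m < N-1$ we have $m \in \mathcal{I}$ (so $w_m > 0$) but $m+1 \notin \mathcal{I}$ (so $w_{m+1} = 0$); I will derive a violation of the sorted order $\widehat{Q}_m(t) \le \widehat{Q}_{m+1}(t)$.

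The central observation is that Eqs. \eqref{eq:all_i} and \eqref{eq:i_in_I} are extremely rigid because the generators $\mathbf{b}^{(i)}$ differ from one another in only a single coordinate. Concretely, since $\mathbf{b}^{(i)}$ has a $1$ in position $i$ and $\alpha$ everywhere else, the inner product expands as
\begin{equation*}
\inner{\widehat{\Q}_{\perp}(t)}{\mathbf{b}^{(i)}} = (1-\alpha)\widehat{Q}_{\perp i}(t) + \alpha S,
\end{equation*}
where $S = \sum_j \widehat{Q}_{\perp j}(t)$. Because we are in the regime $\alpha \in (0,1)$, Eq. \eqref{eq:i_in_I} then forces $\widehat{Q}_{\perp i}(t)$ to take the common value $c = -\alpha S/(1-\alpha)$ for every $i \in \mathcal{I}$, while Eq. \eqref{eq:all_i} upgrades to the bound $\widehat{Q}_{\perp i}(t) \le c$ for every $i \notin \mathcal{I}$.

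To close the contradiction I would then expand $\widehat{Q}_{\perp i}(t) = \widehat{Q}_i(t) - \widehat{Q}_{\parallel i}(t)$ using the direct computation $\widehat{Q}_{\parallel i}(t) = (1-\alpha)w_i + \alpha W$, where $W = \sum_n w_n$. Applying this at positions $m$ and $m+1$, for which $\widehat{Q}_{\perp m}(t) = c$ and $\widehat{Q}_{\perp m+1}(t) \le c$ with $w_{m+1}=0$, the common term $\alpha W$ cancels and subtraction gives
\begin{equation*}
0 \;\ge\; \widehat{Q}_{\perp m+1}(t) - \widehat{Q}_{\perp m}(t) \;=\; \bigl(\widehat{Q}_{m+1}(t) - \widehat{Q}_m(t)\bigr) + (1-\alpha)w_m.
\end{equation*}
Since $\widehat{Q}$ is sorted in increasing order, the first term on the right is non-negative, and since $w_m > 0$ and $1-\alpha > 0$ the second is strictly positive, contradicting the displayed inequality.

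I do not anticipate any serious obstacle beyond careful bookkeeping; the argument is essentially algebraic once the two consequences of Eqs. \eqref{eq:all_i}--\eqref{eq:i_in_I} recorded above are in hand. The only subtle point is that the proof uses $1-\alpha > 0$ in two places (to pin down $c$ uniquely on $\mathcal{I}$, and to extract strict positivity in the last display), which is consistent with the standing case $\alpha \in (0,1)$ already isolated in the surrounding proof of the monotone property.
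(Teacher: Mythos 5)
Your proof is correct and follows essentially the same approach as the paper: a contradiction argument built from Eqs.~\eqref{eq:all_i}--\eqref{eq:i_in_I}, the single-coordinate structure of the generators $\mathbf{b}^{(i)}$, and the sorted order $\widehat{Q}_m(t)\le\widehat{Q}_{m+1}(t)$, using $1-\alpha>0$. The paper arrives at the contradiction by first deducing $\widehat{Q}_{\perp m}<\widehat{Q}_{\perp m+1}$ from the parallel-component difference and then showing $\inner{\widehat{\Q}_{\perp}}{\mathbf{b}^{(m+1)}}>0$ violates Eq.~\eqref{eq:all_i}, whereas you reverse the order of those two observations (and make explicit the constant value $c$ of $\widehat{Q}_{\perp i}$ on $\mathcal{I}$, which the paper records only afterward in Eq.~\eqref{eq:larger_m_0}); the underlying computation is identical.
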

 	% This claim can be proved by contradiction. Suppose $m \in \mathcal{I}$ but $m+1 \notin \mathcal{I}$, then by the definition of $\mathcal{I}$, we have $w_{m} >0$ and $w_{m+1} = 0$. From the definition of $\mathbf{b}^{(i)}$ and the fact that $\widehat{\Q}_{\parallel}(t) = \sum_{i \in \mathcal{I}} w_i\mathbf{b}^{(i)}$, we have 
 	% \begin{align*}
 	% 	\widehat{Q}_{\parallel m}(t) - \widehat{Q}_{\parallel m+1}(t) = w_m(1-\alpha) > 0,
 	% \end{align*}
 	% which implies that 
 	% \begin{align}
 	% \label{eq:qc_larger}
 	% 	\widehat{Q}_{\perp m}(t) < \widehat{Q}_{\perp m+1}(t),
 	% \end{align}
 	% since $\widehat{Q}_m(t) \le \widehat{Q}_{m+1}(t)$. Then, it follows that 
 	% \begin{align*}
 	% 	\inner{\widehat{\Q}_{\perp}(t)}{\mathbf{b}^{(m+1)}} &= \alpha \sum_{n=1}^N \widehat{\Q}_{\perp n}(t) + (1-\alpha)\widehat{Q}_{\perp m+1}(t)\\
 	% 	& \gp{a} \alpha \sum_{n=1}^N \widehat{\Q}_{\perp n}(t) + (1-\alpha)\widehat{Q}_{\perp m}(t)\\
 	% 	& = \inner{\widehat{\Q}_{\perp}(t)}{\mathbf{b}^{(m)}}\\
 	% 	& \ep{b} 0
 	% \end{align*}
 	% where (a) follows from Eq. \eqref{eq:qc_larger}; and (b) comes from Eq. \eqref{eq:i_in_I}. However, by Eq. \eqref{eq:all_i}, we must have $\inner{\widehat{\Q}_{\perp}(t)}{\mathbf{b}^{(m+1)}} \le 0$. Hence, Claim 4 is true.

 	Note that Claim~\ref{claim_4} directly implies that there exists an $m_0$ (which depends on $\Q(t)$) such that for all $i \ge m_0$, $i \in \mathcal{I}$ and for $i<m_0$, $i \notin \mathcal{I}$. Hence, by Eq. \eqref{eq:i_in_I}, for all $i \ge m_0$
 	\begin{align*}
 		0 = \inner{\widehat{\Q}_{\perp}(t)}{\mathbf{b}^{(i)}} = \alpha \sum_{n=1}^N \widehat{\Q}_{\perp n}(t) + (1-\alpha)\widehat{Q}_{\perp i}(t),
 	\end{align*}
 	which implies that for all $i \ge m_0$
 	\begin{align}
 	\label{eq:larger_m_0}
 		\widehat{\Q}_{\perp i}(t) = c \ge 0 
 	\end{align}
 	for some constant $c$. This is because $\sum_{n=1}^N \widehat{\Q}_{\perp n}(t) \le 0$, due to the fact that $\mathbf{1} \in \mathbf{K}_{\alpha}$ and $\widehat{\Q}_{\perp}(t) \in \mathcal{K}_{\alpha}^{\circ}$. On the other hand, for any $i\le j < m_0$, we have 
 	\begin{align}
 	\label{eq:less_m_0}
 		\widehat{\Q}_{\perp i}(t) \le \widehat{\Q}_{\perp j}(t).
 	\end{align}
 	This holds since $\widehat{\Q}_{\parallel i}(t) = \widehat{\Q}_{\parallel j}(t)$ and $\widehat{Q}_i(t) \le \widehat{Q}_j(t)$. Moreover, we have 
 	\begin{align}
 	\label{eq:cross_m_0}
 		\widehat{\Q}_{\perp m_0}(t) \ge \widehat{\Q}_{\perp (m_0-1)}(t).
 	\end{align}
 	This can be shown by contradiction. Suppose $\widehat{\Q}_{\perp (m_0-1)}(t) >\widehat{\Q}_{\perp m_0}(t)$, then 
 	\begin{align*}
 		\inner{\widehat{\Q}_{\perp}(t)}{\mathbf{b}^{(m_0 -1)}} &= \alpha \sum_{n=1}^N \widehat{\Q}_{\perp n}(t) + (1-\alpha)\widehat{Q}_{\perp (m_0-1)}(t)\\
 		& > \alpha \sum_{n=1}^N \widehat{\Q}_{\perp n}(t) + (1-\alpha)\widehat{Q}_{\perp m_0}(t)\\
 		& = \inner{\widehat{\Q}_{\perp}(t)}{\mathbf{b}^{(m_0)}}\\
 		& =  0
 	\end{align*}
 	which contradicts with Eq.\eqref{eq:all_i}. Then, combining Eqs. \eqref{eq:larger_m_0}, \eqref{eq:less_m_0} and \eqref{eq:cross_m_0}, yields the fact that  $\widehat{\Q}_{\perp N}(t) \ge 0$ and the monotone property in Eq. \eqref{eq:monotone}. As a result, we have $\widehat{\Q}_{\perp 1}(t) \le 0 $ since otherwise $\sum_{n=1}^N \widehat{\Q}_{\perp,n}(t)$ would be strictly positive.

 	Having established the monotone property of $\widehat{\Q}_{\perp}(t)$ and auxiliary results that $\widehat{\Q}_{\perp N}(t) \ge 0$ and $\widehat{\Q}_{\perp 1}(t) \le 0 $, we can now proceed to obtain an upper bound on the first term in Eq. \eqref{eq:middle}. In particular, we can first bound it in terms of $|\widehat{\Q}_{\perp1}(t)|$ and the $\delta$ in Eq. \eqref{eq:condition_2}. In particular, we have 
 	\begin{align}
 	\label{eq:cross_upper_delta_qc}
 		\sum_{n=1}^N \widehat{Q}_{\perp n}\Delta_n(t)\lambda_{\Sigma} \le -\lambda_{\Sigma}\delta |\widehat{\Q}_{\perp1}(t)|.
 	\end{align}
 	This upper bound can be verified as follows. First, if $\Q(t) \in \mathcal{K}_{\alpha}$, then $\widehat{\Q}_{\perp n}(t) = 0$ for all $n$, and hence Eq. \eqref{eq:cross_upper_delta_qc} holds. If $\Q(t) \notin \mathcal{K}_{\alpha}$, then $\Delta(t)$ satisfies the two conditions in  Eqs. \eqref{eq:condition_1} and \eqref{eq:condition_2} in Theorem \ref{thm:theorem_2}, which specify the construction process of $\Delta(t)$. In particular, each $\Delta(t)$ that satisfies the two conditions can be constructed as follows. To begin with, all the $\Delta_n(t)$ is $0$. Then, according to the condition in Eq. \eqref{eq:condition_2}, we should first decrease $\Delta_N(t)$ by the amount of $\delta$, and increase $\Delta_1(t)$ by the amount of $\delta$. After this, the left-hand-side of Eq. \eqref{eq:cross_upper_delta_qc} is equal to $\lambda_{\Sigma} \left(\delta\widehat{\Q}_{\perp 1}(t) + (-\delta)\widehat{\Q}_{\perp N}(t)\right)$, which is upper bounded by $-\lambda_{\Sigma}\delta |\widehat{\Q}_{\perp 1}(t)|$, since $\widehat{\Q}_{\perp N}(t) \ge 0$ and $\widehat{\Q}_{\perp 1}(t) \le 0$. Next, due to the condition in Eq.\eqref{eq:condition_1} and the fact that $\sum_{n=1}^N \Delta_n(t) = 0$, any further procedure (if needed) for the construction of $\Delta(t)$ can only take the following way: it decreases some $\Delta_i(t)$ by a certain amount (say $c_1$) where $i \ge k$, and then increase some $\Delta_j(t)$ by the same amount $c_1$ where $j\le k$. We claim that any of this procedure cannot increase the value of the left-hand-side of Eq. \eqref{eq:cross_upper_delta_qc} due to the monotone property of $\widehat{\Q}_{\perp}(t)$. To see this, let us denote by $\mathcal{L}_{ij}$ the change of the value of the left-hand-side of Eq. \eqref{eq:cross_upper_delta_qc} incurred by the procedure above. Thus, we have 
 	\begin{align*}
 		\mathcal{L}_{ij} = -c_1\widehat{Q}_{\perp i}(t) + c_1 \widehat{Q}_{\perp j}(t) \le 0,
 	\end{align*}
 	which follows from the monotone property of $\widehat{\Q}_{\perp}(t)$. Therefore, we have verified the upper bound in Eq. \eqref{eq:cross_upper_delta_qc}.

 	Next, we establish an upper bound on $\Vert\widehat{\Q}_{\perp}(t)\rVert_1$ in terms of $|\widehat{\Q}_{\perp 1}(t)|$ as follows 
 	\begin{align}
 	\label{eq:cross_upper_delta_qc_2}
 		 \norm{\widehat{\Q}_{\perp}(t)}_1 \le 2 N |\widehat{\Q}_{\perp 1}(t)|.
 	\end{align}
 	This follows from the monotone property of $\widehat{\Q}_{\perp}(t)$ and the fact that $\sum_{n=1}^N \widehat{\Q}_{\perp n}(t) \le 0$. Now, combining Eqs. \eqref{eq:middle}, \eqref{eq:cross_upper_delta_qc} and \eqref{eq:cross_upper_delta_qc_2}, we obtain that 
 	\begin{align}
 	\label{eq:drift_in_delta}
 		&\ex{\inner{\Qc(t)}{\A(t) - \s(t)} \mid \Q(t) = \Q}\nonumber\\
 		 \le & \left(\epsilon - \frac{\lambda_{\Sigma}\delta}{2N}\right)\norm{\widehat{\Q}_{\perp}(t)}_1\nonumber\\
 		 \le & -\frac{\mu_{\Sigma}\delta}{4N} \norm{\widehat{\Q}_{\perp}(t)}_1 \text{ whenever } \epsilon \le \frac{\mu_{\Sigma}\delta}{4N+2\delta}\nonumber\\
 		 \le & -\frac{\mu_{\Sigma}\delta}{4N} \norm{{\Q}_{\perp}(t)}
 	\end{align}
 	where the last inequality comes from the fact that $\lVert\widehat{\Q}_{\perp}(t)\rVert_1 = \norm{{\Q}_{\perp}(t)}_1$ and  $\norm{\mathbf{x}}_1 \ge \norm{\mathbf{x}}$ for any $\mathbf{x} \in \mathbb{R}^N$. This establishes the inequality in Eq. \eqref{eq:collapse_key} with $c = {\mu_{\Sigma}\delta}/{4N}$ and $\epsilon_0 ={\mu_{\Sigma}\delta}/{(4N+2\delta)}.$ Hence, based on Claim \ref{claim_3}, we have verified the condition (C1) in Lemma \ref{lem:basis}, which directly establishes the state-space collapse result in Proposition \ref{prop:collapse}.
    \end{proof}
\section{Conclusions}
% Heavy-traffic analysis has been an important tool for characterizing the delay performance of load balancing systems. It is often believed that heavy-traffic delay optimality implies good delay performance in practice. However, in this paper, we show that this conventional wisdom is not necessarily true. 

{We have rigorously shown that even under a multi-dimensional state-space collapse, steady-state heavy-traffic delay optimality can be achieved for a general load balancing system. This result suggests that the insight behind heavy-traffic optimality conveyed by diffusion approximations is still valid in \emph{steady state}, thus complementing and extending the diffusion approximation results in \cite{kelly1993dynamic},~\cite{teh2002critical}. Moreover, our steady-state delay optimality result might also give a possible direction for proving the interchange of limits for the diffusion approximation results in \cite{kelly1993dynamic},~\cite{teh2002critical}. 
By leveraging this result, we are able to explore the greater flexibility provided by allowing a multi-dimensional state-space collapse in designing new load balancing policies that are both throughput optimal and heavy-traffic delay optimal in steady state. Furthermore, the proof techniques used in this paper are of independent interest as well.}

\bibliographystyle{plain}
\bibliography{ref} 

% !TEX root = ./sig2018_winter.tex
% \section{Appendix}
\appendix
\section{Proof of Lemma \ref{claim_1}}
\label{sec:proof_of_claim_1}
\begin{proof}

  Let us consider the Lyapunov function $V_1(\Q(t)) \triangleq \norm{\Q(t)}_1^2$, and the corresponding conditional mean drift is given by 
  \begin{align}
  \label{eq:heavy}
    &\ex{V_1(\Q(t+1)) - V_1(\Q(t)) \mid \Q(t) = \Q}\nonumber\\
    = & \ex{ \norm{\Q(t+1)}_1^2 - \norm{\Q(t)}_1^2 \mid \Q(t) = \Q    }\nonumber\\
    % = & \ex{\left(\norm{\Q(t)}_1+\norm{\A(t)}_1-\norm{\s(t)}_1 + \norm{\UU(t)}_1 \right)^2  \mid \Q(t) = \Q    }\nonumber\\
    % & - \ex{\norm{\Q(t)}_1^2 \mid \Q(t) = \Q   }\nonumber\\
        = & \mathbb{E}\left[ 2\norm{\Q{}}_1\left( \norm{\A}_1 - \norm{\s}_1\right) + \left( \norm{\A}_1-\norm{\s}_1\right)^2\right.\nonumber\\
         & +\left.{} 2\left(\norm{\Q}_1 +\norm{\A}_1 - \norm{\s}_1 \right) \norm{\UU}_1 + \norm{\UU}_1^2 \vphantom{\left( \norm{\A}_1-\norm{\s}_1\right)^2}      \mid \Q(t) = \Q   \right]\nonumber\\
        = & \mathbb{E}\left[ 2\norm{\Q}_1\left( \norm{\A}_1 - \norm{\s}_1\right) + \left( \norm{\A}_1-\norm{\s}_1\right)^2\right.\nonumber\\
         & +\left.{} 2\norm{\Q(t+1)}_1 \norm{\UU}_1 - \norm{\UU}_1^2 \vphantom{\left( \norm{\A}_1-\norm{\s}_1\right)^2}      \mid \Q(t) = \Q   \right].
  \end{align}
  By the definition of throughput optimality in this paper, we have $\ex{V_1(\overline{\Q})}$ is finite. Therefore, the mean drift of $V_1(.)$ is zero in steady-state. Taking expectation of both sides of Eq. \eqref{eq:heavy} with respect to the steady-state distribution $\overline{\Q}^{(\epsilon)}$, yields
  \begin{equation*}
  % \label{eq:sum_queuelength_equation}
    \epsilon \ex{\sum_{n=1}^{N}\overline{Q}_n^{(\epsilon)}} = \frac{\zeta^{(\epsilon)}}{2} + \ex{\big\lVert\overline{\Q}^{(\epsilon)}(t+1)\big\rVert_1 \big\lVert\overline{\UU}^{(\epsilon)} (t)\big\rVert_1} - \frac{1}{2}\ex{\big\lVert\overline{\UU}^{(\epsilon)}\big\rVert_1^2}
  \end{equation*}
  where $\zeta^{(\epsilon)} = (\sigma_\Sigma^{(\epsilon)})^2 + \nu_\Sigma^2 + \epsilon^2$. Then by utilizing the property of unused service shown in Lemma \ref{lem:unused_service}, we have 
  \begin{equation*}
    \frac{\zeta^{(\epsilon)}}{2} + {\mathcal{T}}^{(\epsilon)} - \frac{1}{2}c_1\epsilon \le \epsilon \ex{\sum_{n=1}^{N}\overline{Q}_n^{(\epsilon)}} \le \frac{\zeta^{(\epsilon)}}{2} + {\mathcal{T}}^{(\epsilon)},
  \end{equation*}
  in which ${\mathcal{T}}^{(\epsilon)} = \ex{\big\lVert\overline{\Q}^{(\epsilon)}(t+1)\big\rVert_1 \norm{\overline{\UU}^{(\epsilon)} (t)}_1}$. Since $\zeta^{(\epsilon)}$ converges to $\zeta$, from the inequality above and the definition of heavy-traffic delay optimality, we can easily see that the sufficient and necessary condition is $\lim_{\epsilon \downarrow 0} {\mathcal{T}}^{(\epsilon)} = 0$, which completes the proof of Lemma \ref{claim_1}. 
\end{proof}

\section{Proof of Lemma \ref{lemma:fluid model}}
\label{appendex:proof_fluid_model}
\begin{proof}
  (a) To show Eqs. \eqref{eq:sigma} and \eqref{eq:s} hold,  we can directly apply FSLLN (functional strong law of large numbers) to obtain that $\frac{1}{x_{n_k}}{\mathcal{A} }_{\Sigma}^{(x_{n_k})}(x_{n_k} t) \to \lambda t$ and $\frac{1}{x_{n_k}}{\mathcal{S} }_{i}^{(x_{n_k})}(x_{n_k} t) \to \mu_i t$, u.o.c, and each limiting function is Lipschitz continuous. 

  (b) For Eqs. \eqref{eq:g} and \eqref{eq:b}, for any given $0 \le t_1  \le t_2$, we have 
  \begin{equation}
  \label{eq:delta_g}
    0 \le \frac{1}{x_{n_k}} \left[ {\mathcal{G} }_{i}^{(x_{n_k})}(x_{n_k} t_2) - {\mathcal{G} }_{i}^{(x_{n_k})}(x_{n_k} t_1) \right] \le (t_2 - t_1).
  \end{equation}
  Therefore, the sequence of functions $\{\frac{1}{x_{n_k}}{\mathcal{G} }_{i}^{(x_{n_k})}(x_{n_k} t) \}$ is uniformly bounded and uniformly equicontinuous. As a result, by the Arzela-Ascoli theorem, there must exist a subsequence along which Eq. \eqref{eq:g} hold. In addition, Eq. \eqref{eq:delta_g} also implies that each limiting function $g_i$ is Lipschitz continuous. Same argument can be used to show Eq. \eqref{eq:b} hold and each limiting function $b_i$ is Lipschitz continuous.

  (c) For Eq. \eqref{eq:qq}, since the sequence $\{\frac{1}{x_n}Q_i^{x_n}(0)\}$ is upper bounded by 1 as a result of Eq. \eqref{eq:norm}, we have that there is a subsequence such that $\frac{1}{x_{n_k}}{Q}_i^{(x_{n_k})}(0) \to q_i(0)$. Hence, the convergence of Eq. \eqref{eq:qq} follows directly from Eq. \eqref{eq:q_dynamic}, and each limiting function $q_i$ is Lipschitz continuous.

  (d) To show that Eqs. \eqref{eq:ai} and \eqref{eq:di} hold, we utilize the fact that the arrival and departure process are bounded. Take the arrival process for example, we have 
  \begin{equation*}
  \label{eq:delta_ai}
    0 \le \frac{1}{x_{n_k}} \left[ {\mathcal{A} }_{i}^{(x_{n_k})}(x_{n_k} t_2) - {\mathcal{A} }_{i}^{(x_{n_k})}(x_{n_k} t_1) \right] \le A_{\max}(t_2 - t_1),
  \end{equation*}
  where $A_{\max}$ is the maximum number of exogenous arrivals at each time-slot. For each server $i$, we also have
  \begin{equation*}
  \label{eq:delta_di}
    0 \le \frac{1}{x_{n_k}} \left[ {\mathcal{D} }_{i}^{(x_{n_k})}(x_{n_k} t_2) - {\mathcal{D} }_{i}^{(x_{n_k})}(x_{n_k} t_1) \right] \le S_{\max}(t_2 - t_1),
  \end{equation*}
  where $S_{\max}$ is the maximum number of offered service at each time-slot. As a result, with the similar argument as in Eq. \eqref{eq:delta_g}, we can easily show Eqs. \eqref{eq:ai} and \eqref{eq:di} hold, and each limiting function is Lipschitz continuous. 
  % The other method doesn't rely on the fact of bounded arrivals and service, it is based on Theorem 5.3 in \cite{chen2013fundamentals}, which is repeated  in the next Theorem \ref{thm:random-time change}. We defer the second method to the following Lemma \ref{lemma:fluid limit}, which gives equations that each limiting function should satisfy.
\end{proof}

\section{Proof of Lemma \ref{lemma:fluid limit}}
\label{appendex:proof_fluid_limit}
In the proof, we will utilize the random time-change theorem in Chapter 5 of \cite{chen2013fundamentals}, which is presented below for easy reference.

\begin{theorem}[Random Time-Change Theorem]
\label{thm:random-time change}
Let $\{ X_n, n \ge 1\}$ and $\{ Y_n, n \ge 1\}$ be two sequences in ${D}^J$ (i.e., the space of $J$-dimensional real-valued functions on $[0,\infty)$ that are right-continuous and with left limits.). Assume that $Y_n$ is nondecreasing with $Y_n(0) = 0$. If as $n \to \infty$, $(X_n,Y_n)$ converges uniformly on compact sets to $(X,Y)$ with $X$ and $Y$ in ${C}^J$ (i.e., the space of $J$-dimensional real-valued continuous functions on $[0,\infty)$), then $X_n(Y_n)$ converges uniformly on compact sets to $X(Y)$, where $X_n(Y_n) = X_n \circ Y_n = \{X_n(Y_n(t)), t \ge 0 \}$ and  $X(Y) = X \circ Y = \{X(Y(t)), t \ge 0 \}$.
\end{theorem}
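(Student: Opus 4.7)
The goal is to show that for every fixed $T > 0$,
\[
\sup_{0 \le t \le T} \|X_n(Y_n(t)) - X(Y(t))\| \to 0 \quad \text{as } n \to \infty.
\]
The natural starting point is the triangle-inequality split
\[
\|X_n(Y_n(t)) - X(Y(t))\| \le \|X_n(Y_n(t)) - X(Y_n(t))\| + \|X(Y_n(t)) - X(Y(t))\|,
\]
which reduces the problem to controlling the two pieces uniformly in $t \in [0,T]$.

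First I would establish that all relevant images of the time-change maps lie in a common compact interval. Since $Y$ is continuous on $[0,T]$ it is bounded, say by $M_0$; the hypothesis that $Y_n \to Y$ uniformly on $[0,T]$ then gives, for all sufficiently large $n$, $Y_n(t) \le M_0 + 1 =: M$ for every $t \in [0,T]$. Nonnegativity of $Y_n$ and $Y$ follows from $Y_n(0)=0$ together with the monotonicity of $Y_n$, a property preserved in the limit. Hence it suffices to understand $X_n$ and $X$ on the compact interval $[0,M]$.

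For the first term of the split, uniform convergence of $X_n$ to $X$ on the compact set $[0,M]$ gives $\sup_{s \in [0,M]} \|X_n(s) - X(s)\| \to 0$, which dominates $\|X_n(Y_n(t)) - X(Y_n(t))\|$ uniformly in $t \in [0,T]$. For the second term, since $X$ is continuous on the compact $[0,M]$ it is in fact uniformly continuous: for any $\varepsilon > 0$ there is $\delta > 0$ with $|s - s'| < \delta \Rightarrow \|X(s) - X(s')\| < \varepsilon$. Uniform convergence $Y_n \to Y$ on $[0,T]$ then ensures $|Y_n(t) - Y(t)| < \delta$ for all sufficiently large $n$ uniformly in $t$, yielding $\|X(Y_n(t)) - X(Y(t))\| < \varepsilon$ uniformly.

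The main subtlety, and the reason the hypotheses $X, Y \in C^J$ are essential, is the second term: the functions $X_n$ are only cadlag and need not be uniformly continuous, so one cannot try to swap the roles of $X_n$ and $X$ in that estimate. The argument absolutely requires continuity of the \emph{limit} $X$ in order to translate closeness of $Y_n$ to $Y$ into closeness of $X(Y_n)$ to $X(Y)$. This is also why u.o.c.\ convergence on the outer input (rather than, say, convergence in the Skorokhod $J_1$ topology on $X_n$) is the right hypothesis: u.o.c.\ convergence to a \emph{continuous} limit, combined with uniform convergence of a monotone $Y_n$ with $Y_n(0)=0$ to a continuous $Y$, is precisely the combination that lets the two-term estimate above close.
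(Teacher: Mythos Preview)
The paper does not actually prove this theorem: it is quoted verbatim from Chapter~5 of \cite{chen2013fundamentals} as a known tool, with no argument supplied. So there is no ``paper's own proof'' to compare against.

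That said, your argument is the standard one and is correct. The two-term triangle-inequality split, the compact-range bound $Y_n(t),Y(t)\in[0,M]$ obtained from monotonicity plus $Y_n(0)=0$ and u.o.c.\ convergence, the use of u.o.c.\ convergence of $X_n$ on $[0,M]$ for the first term, and the use of uniform continuity of the continuous limit $X$ on $[0,M]$ for the second term, together give the result. The only cosmetic point is that in the $J$-dimensional setting the composition $X_n\circ Y_n$ and the bounds should be read componentwise (each $Y_n^{(j)}$ nondecreasing with $Y_n^{(j)}(0)=0$, and $X_n^{(j)}\circ Y_n^{(j)}$); your scalar-looking inequalities go through verbatim in each coordinate.
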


Now, we present the proof of Lemma \ref{lemma:fluid limit}.
\begin{proof}
  (a) Eqs. \eqref{eq:fluid_a_all} and \eqref{eq:fluid_s} directly follows from FSLLN under our assumptions for the exogenous arrival process and each service process.

  (b) Eq. \eqref{eq:fluid_q} follows from the definition of the queue length dynamic. Eq. \eqref{eq:sum_g} follows from the definition directly.

  (c) Eqs. \eqref{eq:fluid_ai} and \eqref{eq:fluid_di} are the results of Theorem \ref{thm:random-time change}. More specifically, let $X_n = \frac{1}{x_{n_k}}{\mathcal{A} }_{\Sigma}^{(x_{n_k})}(x_{n_k} t)$ and $Y_n = \frac{1}{x_{n_k}}{\mathcal{G} }_{i}^{(x_{n_k})}(x_{n_k} t)$ and we have $(X_n,Y_n) \to (\lambda t, g_i(t))$ uniformly on compact set, and $Y_n$ is nondecreasing with $Y_n(0) = 0$. Thus by Theorem \ref{thm:random-time change}, we have $X_n(Y_n(t)) = \frac{1}{x_{n_k}}{\mathcal{A} }_i^{(x_{n_k})}(x_{n_k} t) \to X(Y(t)) = \lambda g_i(t) = a_i(t)$ uniformly on compact sets. Similar argument can be used to show Eq. \eqref{eq:fluid_di} hold.

  (d) Note that $t$ is the regular time in Eq. \eqref{eq:derivative}, and hence $q_i^{\prime}(t)$ is well defined and exists. Therefore, the left-derivative $q_i^{\prime}(t-)$ should be equal to the right-derivative $q_i^{\prime}(t+)$. By the non-negativity of $q_i(t)$, if $q_i(t) = 0$, then we must have $q_i^{\prime}(t-) \le 0$ and $q_i^{\prime}(t+) \ge 0$, which results in $q_i^{\prime}(t) = 0$.
    For the case $q_i(t) > 0$, we need to show $d_i^{\prime}(t) = \mu_i$. It suffices to consider the right-derivative as it is equal to the derivative at a regular time $t$. Suppose $q_i(t) > 0$, then by the continuity of $q_i(t)$, there exists a $\delta > 0$, such that $a = \min_{t_s\in[t,t+\delta]}q(t_s) > 0$. Therefore, for sufficient large $x_{n_k}$, we have
    \begin{equation}
      \frac{1}{x_{n_k}}{Q}_i^{(x_{n_k})}(x_{n_k} t_s) \ge \frac{a}{2}, \text{for any  } t_s\in[t,t+\delta] \text{ and } \frac{a}{2}x_{n_k} \ge 1,
    \end{equation}
    which implies that ${Q}_i^{(x_{n_k})}(x_{n_k} t) \ge 1$ for any $t_s \in [t,t+\delta]$. Therefore, we have
    \begin{align*}
      &\frac{1}{x_{n_k}}{\mathcal{D} }_i^{(x_{n_k})}(x_{n_k} t_s) - \frac{1}{x_{n_k}}{\mathcal{D} }_i^{(x_{n_k})}(x_{n_k} t) \\
      = &\frac{1}{x_{n_k}}{\mathcal{S} }_i^{(x_{n_k})}(x_{n_k} t_s) - \frac{1}{x_{n_k}}{\mathcal{S} }_i^{(x_{n_k})}(x_{n_k} t).
    \end{align*}
    Then, according to the definition of derivative, we have 
    \begin{equation}
    \begin{split}
      d_i^{\prime}(t) &= \lim_{t_s\to t}\lim_{x_{n_k} \to \infty} \frac{1}{x_{n_k}}\frac{{\mathcal{D} }_i^{(x_{n_k})}(x_{n_k} t_s) - {\mathcal{D} }_i^{(x_{n_k})}(x_{n_k} t)}{t_s - t}\\
      & = \lim_{t_s\to t}\lim_{x_{n_k} \to \infty} \frac{1}{x_{n_k}}\frac{{\mathcal{S} }_i^{(x_{n_k})}(x_{n_k} t_s) - {\mathcal{S} }_i^{(x_{n_k})}(x_{n_k} t)}{t_s - t}\\
      & = \mu_i
    \end{split}
    \end{equation}
    As a result, Eq. \eqref{eq:derivative} is true for any regular time $t$.
\end{proof}

\section{Proof of Claim \ref{claim_2}}
\label{sec:proof_of_Claim_2}
\begin{proof}
Since $q_{\sigma_t(m+1)}(t) > 0$, $q_{\sigma_t(m)}(t) = 0$ and both functions are continuous, we can choose a $\tau$ such that $a/b > 4/\alpha$ where $a = \min_{t_s\in[t-\tau,t+\tau]}q_{\sigma_t(m+1)}(t_s)$ and $b = \max_{t_s\in[t-\tau,t+\tau]}q_{\sigma_t(m)}(t_s)$. By the u.o.c convergence, for sufficient large $x_{n_k}$, we have for any $t_s\in[t-\tau,t+\tau]$
  \begin{equation}
      \frac{1}{x_{n_k}}{Q}_{\sigma_t(m+1)}^{(x_{n_k})}(x_{n_k} t_s) \ge \frac{a}{2} \text{ and } \frac{1}{x_{n_k}}{Q}_{\sigma_t(m)}^{(x_{n_k})}(x_{n_k} t_s) \le 2b,
    \end{equation}
    which implies that ${{Q}_{\sigma_t(m+1)}^{(x_{n_k})}(x_{n_k} t_s)}\big/{{Q}_{\sigma_t(m)}^{(x_{n_k})}(x_{n_k} t_s)} > 1/\alpha$, for any $t_s\in [t-\tau,t+\tau]$. This indicates that in the interval $[(t-\tau)x_{n_k} + 1,(t+\tau)x_{n_k}-1]$, the queue-length state is outside the cone $\mathcal{K}_\alpha$. In this case, according to the conditions for the load balancing policy in Theorem \ref{thm:theorem_2}, we have 
    \begin{align*}
      \sum_{n=m+1}^N & \left(\frac{1}{x_{n_k}}{\mathcal{G} }_{\sigma_t(n)}^{(x_{n_k})}(x_{n_k} (t+\frac{\tau}{2})) - \frac{1}{x_{n_k}}{\mathcal{G} }_{\sigma_t(n)}^{(x_{n_k})}(x_{n_k} (t-\frac{\tau}{2}))\right)\\
      & = \tau \sum_{n=m+1}^N \left(\Delta_n(t) + \frac{\mu_{\sigma_t(n)}}{\mu_{\Sigma}}\right).
    \end{align*}
    By letting $x_{n_k} \to \infty$ and from Eq.\eqref{eq:g}, we have 
    \begin{equation*}
      \sum_{n=m+1}^N  \left(g_{\sigma_t(n)}(t+\frac{\tau}{2}) - g_{\sigma_t(n)}(t+\frac{\tau}{2})\right) = \tau \sum_{n=m+1}^N \left(\Delta_n(t) + \frac{\mu_{\sigma_t(n)}}{\mu_{\Sigma}}\right),
    \end{equation*}
    which directly implies the required result of Claim \ref{claim_2}.
\end{proof}

\section{Proof of Claim \ref{claim_3}}
\label{sec:proof_of_Claim_3}
\begin{proof}
  First, we have the following bound 
  \begin{align}
  \label{eq:root_bound}
    &\ex{\Delta V_\perp(\Q) \mid \Q(t) = \Q }\nonumber \\
    \le &\frac{1}{2\norm{\Qc}}\ex{\Delta W(\Q)  - \Delta W_{\parallel}(\Q)\mid \Q(t) = \Q}.
  \end{align}
  Similar to Lemma 7 in \cite{eryilmaz2012asymptotically}, this bound directly follows from the concavity of root function and  Pythagorean theorem. Next, we will bound each term in Eq. \eqref{eq:root_bound}, respectively. To begin with, we have an upper bound for the first term as follows.
  \begin{align}
  \label{eq:root_bound_W}
    &\ex{\Delta W(\Q) \mid \Q(t) =  \Q}\nonumber\\
    = &\ex{\norm{\Q(t) + \A(t)-\s(t) +\UU(t)}^2- \norm{\Q(t)}^2 \mid \Q}\nonumber\\
    \lep{a} &\ex{\norm{\Q(t) + \A(t)-\s(t)}^2- \norm{\Q(t)}^2 \mid \Q}\nonumber\\
    = &\ex{2\inner{\Q(t)}{\A(t)-\s(t)} + \norm{\A(t)-\s(t)}^2 \mid \Q}\nonumber\\
    \lep{b}& \ex{2\inner{\Q(t)}{\A(t)-\s(t)}\mid \Q }+ L
  \end{align}
  where (a) holds as $[\max(a,0)]^2 \le a^2 $ for any $a\in \mathbb{R}$; in (b), $L \triangleq N\max(A_{\max},S_{\max})^2$, which follows from the assumptions that  $A_\Sigma(t) \le A_{\max}$ and $S_n(t) \le S_{\max}$ for all $t \ge 0$ and all $ 1\le n \le N$, and the fact that they are both independent of the queue lengths. 

  We now turn to provide a lower bound on the second term in Eq. \eqref{eq:root_bound} as follows.
  \begin{align}
  \label{eq:root_bound_Wp}
        &\ex{\Delta W_\parallel(\Q) \mid \Q(t) =  \Q}\nonumber\\ 
        = & \ex{ 2\inner{\Qp(t)}{\Qp(t+1) - \Qp(t)} + \norm{\Qp(t+1) - \Qp(t)}^2\mid \Q}\nonumber\\
        \ge & \ex{2\inner{\Qp(t)}{\Qp(t+1) - \Qp(t)} \mid \Q}\nonumber\\
        = &2 \ex{\inner{\Qp(t)}{\Q(t+1) -\Q(t)} - 2\inner{\Qp(t)}{\Qc(t+1) - \Qc(t)} \mid \Q}\nonumber\\
        \gep{a} & \ex{2 \inner{\Qp(t)}{\Q(t+1) -\Q(t)}\mid \Q } \nonumber\\
        \gep{b} & \ex{2 \inner{\Qp(t)}{\A(t)-\s(t)} \mid \Q}
  \end{align}
  where (a) holds because $\inner{\Qp(t)}{\Qc(t)} = 0$ and $\inner{\Qc(t+1)}{\Qp(t)} \le 0$, since $\Qp(t) \in \mathcal{K}_{\alpha}$ and $\Qc(t+1)\in \mathcal{K}_{\alpha}^{\circ}$; (b) follows from the fact that all the components of $\Qp(t)$ and $\UU(t)$ are nonnegative.
    Thus, substituting Eqs. \eqref{eq:root_bound_W} and \eqref{eq:root_bound_Wp} into Eq. \eqref{eq:root_bound}, yields the bound in Claim \ref{claim_3}.
\end{proof}

\section{Proof of Claim \ref{claim_4}}
\label{sec:proof_of_Claim_4}
\begin{proof}
  This claim can be proved by contradiction. Suppose $m \in \mathcal{I}$ but $m+1 \notin \mathcal{I}$, then by the definition of $\mathcal{I}$, we have $w_{m} >0$ and $w_{m+1} = 0$. From the definition of $\mathbf{b}^{(i)}$ and the fact that $\widehat{\Q}_{\parallel}(t) = \sum_{i \in \mathcal{I}} w_i\mathbf{b}^{(i)}$, we have 
  \begin{align*}
    \widehat{Q}_{\parallel m}(t) - \widehat{Q}_{\parallel m+1}(t) = w_m(1-\alpha) > 0,
  \end{align*}
  which implies that 
  \begin{align}
  \label{eq:qc_larger}
    \widehat{Q}_{\perp m}(t) < \widehat{Q}_{\perp m+1}(t),
  \end{align}
  since $\widehat{Q}_m(t) \le \widehat{Q}_{m+1}(t)$. Then, it follows that 
  \begin{align*}
    \inner{\widehat{\Q}_{\perp}(t)}{\mathbf{b}^{(m+1)}} &= \alpha \sum_{n=1}^N \widehat{\Q}_{\perp n}(t) + (1-\alpha)\widehat{Q}_{\perp m+1}(t)\\
    & \gp{a} \alpha \sum_{n=1}^N \widehat{\Q}_{\perp n}(t) + (1-\alpha)\widehat{Q}_{\perp m}(t)\\
    & = \inner{\widehat{\Q}_{\perp}(t)}{\mathbf{b}^{(m)}}\\
    & \ep{b} 0
  \end{align*}
  where (a) follows from Eq. \eqref{eq:qc_larger}; and (b) comes from Eq. \eqref{eq:i_in_I}. However, by Eq. \eqref{eq:all_i}, we must have $\inner{\widehat{\Q}_{\perp}(t)}{\mathbf{b}^{(m+1)}} \le 0$. Hence, Claim~\ref{claim_4} is true.
\end{proof}

\section{Proof of Proposition \ref{prop:prop_3}}
\label{sec:proof_prop_3}
\begin{proof}
  It follows from Lemma~\ref{claim_1} and the proof of Theorem~\ref{thm:theorem_1} that the key for heavy-traffic delay optimality is the term $\mathcal{T}^{(\epsilon)}$. Instead of using Cauchy-Schwartz inequality, we apply H\"older inequality in Eq. \eqref{eq:upper_T} to obtain a tighter bound of $\mathcal{T}^{(\epsilon)}$ as follows. 
  \begin{align}
  \label{eq:upper_T_general}
    \mathcal{T}^{(\epsilon)} & \le \ex{\inner{\overline{\UU}}{-N_1\overline{\Q}^+_{\perp}}}\nonumber\\
    & \lep{a} \frac{N}{\alpha^{(\epsilon)}} { \left(\ex{\norm{\overline{\UU}}^{r'}_{r'} } \right)^{\frac{1}{r'}}  \left(\ex{\norm{\overline{\Q}^+_{\perp}}^r_r} \right)^{\frac{1}{r}}}.\nonumber\\
    & \lep{b} \frac{N}{\alpha^{(\epsilon)}}  \left(c_{r'} \epsilon \right)^{\frac{1}{r'}} \left(\ex{\norm{\overline{\Q}^+_{\perp}}^r_2} \right)^{\frac{1}{r}}.\nonumber\\
    & \lep{c}\frac{N}{\alpha^{(\epsilon)}}  \left(c_{r'} \epsilon \right)^{\frac{1}{r'}} \left(\ex{\norm{\overline{\Q}_{\perp}}^r_2} \right)^{\frac{1}{r}}.
  \end{align}
  where (a) follows from H\"older inequality for random vectors, and $r, r'\in (1,\infty)$ satisfy $1/r + 1/r' = 1$; (b) comes from Lemma \ref{lem:unused_service} and the fact that if $0<r_1 < r_2$, then $\norm{\mathbf{x}}_{r_2} \le \norm{\mathbf{x}}_{r_1}$ holds for any vector $\mathbf{x}$; (c) is true since the distribution of $\Q(t+1)$ and $\Q(t)$ are the same in steady-state.

  Thus, in order to prove the result in Proposition \ref{prop:prop_3}, we are left to characterize the moment of $\overline{\Q}_{\perp}$ in terms of the parameter $\delta^{(\epsilon)}$. First, combining Eq. \eqref{eq:drift_in_delta} and Claim \ref{claim_3}, yields
  \begin{equation*}
    \begin{split}
      &\ex{\Delta V_\perp(\Q) \mid \Q(t) = \Q }\\
       \le & \frac{1}{2\norm{\Qc(t)}} \ex{\left(2\inner{\Qc(t)}{\A(t) - \s(t)} + L\right) \mid \Q(t) = \Q}\\
       \le & -\frac{\mu_{\Sigma}\delta}{4N}  + \frac{L}{2\norm{\Qc(t)}}\\
       \le & -\frac{\mu_{\Sigma}\delta}{8N} \text{ for all } \Q \text{ such that } \norm{\Qc} \ge \frac{4NL}{\mu_{\Sigma}\delta}.
    \end{split}
  \end{equation*}
  Thus the condition (C1) in Lemma \ref{lem:basis} is valid with $\eta = \frac{\mu_{\Sigma}\delta}{8N}$ and $\kappa = \frac{4NL}{\mu_{\Sigma}\delta}$. Also, from Eq. \eqref{eq:boundedQc}, we have the condition (C2) is valid with $D = \sqrt{N} \max(A_{\max},S_{\max})$. Then from Eq. \eqref{eq:upper_siva} in Lemma \ref{lem:basis}, we get for $r = 1,2,\ldots,$
    \begin{align}
    \label{eq:moment_in_delta}
            \ex{\norm{\overline{\Q}_{\perp}}^r_2} &\le (2\kappa)^r + (4D)^r\left(\frac{D+\eta}{\eta} \right)^r r!\nonumber\\
            &\le \frac{1}{\delta^r}K_r^r
    \end{align}
    where $K_r \triangleq \left[ \left(\frac{8NL}{\mu_{\Sigma}}\right)^r + r! \left(\frac{32D^2N + 4D\mu_{\Sigma}}{\mu_{\Sigma}}\right)^r \right]^{\frac{1}{r}}$, which is independent of $\epsilon$. Now, substituting Eq. \eqref{eq:moment_in_delta} into Eq. \eqref{eq:upper_T_general}, yields
    \begin{align*}
      \mathcal{T}^{(\epsilon)} \le F_r\frac{\epsilon^{(1-1/r)}}{\alpha^{(\epsilon)}\delta^{(\epsilon)}}
    \end{align*}
    where $F_r \triangleq NK_r(S_{\max})^{(1/r^2 - 1/r)}$, which is independent of $\epsilon$. Since $\alpha^{(\epsilon)}\delta^{(\epsilon)} = \Omega(\epsilon^{\beta})$ and $\beta \in [0,1)$, there exists a positive $\beta'$ such that 
    \begin{align*}
      \mathcal{T}^{(\epsilon)} = O(\epsilon^{\beta'}),
    \end{align*}
    when $r > 1/(1-\beta)$. This directly implies that $\lim_{\epsilon \to 0}\mathcal{T}^{(\epsilon)} = 0$. Thus from Lemma \ref{claim_1}, the given policy is heavy-traffic delay optimal.
\end{proof}

\end{document}